\renewcommand{\bar}{\overline}
\newtheorem{theorem}{Theorem}[section]
\newtheorem{proposition}{Proposition}[section]
\newtheorem{lemma}{Lemma}[section]
\newtheorem{exo}{Exercise}[section]
\newtheorem{ex}{Example}[section]
\newtheorem{question}{Question}[section]
\title{{\em From Black-Scholes and Dupire formulae to last passage times of local martingales} \\ \bigskip Part A : The infinite time horizon}
\author{Amel Bentata and Marc Yor}
\date{June, $2^{nd}$, 2008}
\author{Amel Bentata\footnote{Universit\'e Paris 6, Laboratoire de Probabilit\'es et
    Mod\`eles Al\'eatoires, CNRS-UMR 7599, 16, rue Clisson, 75013 Paris Cedex,
    France} and Marc Yor\footnote{Universit\'e Paris 6, Laboratoire de Probabilit\'es et
    Mod\`eles Al\'eatoires, CNRS-UMR 7599, 16, rue Clisson, 75013 Paris Cedex,
    France-Institut Universitaire de France}}
\begin{document}

\maketitle
\begin{enumerate}
\item[1.] 
These notes are the first half of the contents of the course given by the
  second author at the Bachelier Seminar (February 8-15-22 2008) at IHP.
  They also correspond to topics studied by the first author for her Ph.D.thesis.
\item[2.]

Comments are welcome and may be addressed to : \\bentata@clipper.ens.fr.
\end{enumerate}
\newpage
\tableofcontents
\newpage

\section*{A rough description of Part A of the course}
The starting point of the course has been the elementary remark that the following holds :
\begin{equation}\label{0.1}
\mathbb{E}\left[\left(\mathcal{E}_t-1\right)^{\pm}\right]=\mathbb{P}\left(4\mathbf{N}^2\leq t\right),
\end{equation}
where on the $LHS$, $\mathcal{E}_t=\exp{\left(B_t-\frac{t}{2}\right)}$, for
$(B_t)$ a standard Brownian motion, and
$\mathbf{N}\overset{\underset{\mathrm{law}}{}}{=}B_1$ is the standard
Gaussian variable. The identity (\ref{0.1}) may follow from inspection of the Black-Scholes formula, but seemed to deserve further explanation.

The full course consists in ten notes, the contents of the first five are :

In Note $1$, it is shown that a wide extension of (\ref{0.1}) holds with $\mathcal{E}_t$ being replaced by a continuous local martingale $M_t\geq 0$, converging to $0$, as $t\to\infty$, and with $4\mathbf{N}^2$ being replaced by the last passage time at $1$ by $M$.

This motivates the study, in Note $2$, of the law of $\mathcal{G}_K=\sup\{t, M_t=K\}$. In this note, we recover the computation of the laws of the last passage times for transient diffusions, as obtained by Pitman-Yor in \cite{pityor}, and we extend these results in a natural manner, when $(M_t,t\geq 0)$ is only assumed to be a positive local martingale, converging to $0$, as $t\to\infty$.

In Note $3$, a connection is made with some representation of Azéma
supermartingales associated with ends $L$ of previsible random time sets; it turns out that $L=\mathcal{G}_K$ is a particular case of such random times; hence, the obtained supermartingales are particular cases of Azéma's supermartingales. This Note $3$ also leads us to present the progressive enlargement of filtration formulae in this setup.

In Note $4$, the main formula :
\begin{equation}
\mathbb{P}\left(\mathcal{G}_K\leq t|\mathcal{F}_t\right)=\left(1-\frac{M_t}{K}\right)^+,
\end{equation}
on which most of our previous discussion has been based is shown to generalize in the form :
\begin{equation}
\mathbb{E}\left[1_{\{\mathcal{G}_K\leq t\}}\,\left(K-M_\infty\right)^+|\mathcal{F}_t\right]=
\left(K-M_t\right)^+,
\end{equation}
in the case where $(M_t,t\geq 0)$ is only assumed to take values in $\mathbb{R}^+$, but $M_\infty$ is not necessarily equal to $0$. We then explain how to obtain a formula for $\mathbb{P}\left(\mathcal{G}_K\leq t|\mathcal{F}_t\right)$.

In Note $5$, we integrate the previous results with respect to $K$, in
a similar manner as one may recover Itô's formula from Tanaka 's formula. This note bears quite some similarity with the paper by Azéma-Yor \cite{azyor1} on local times.

\newpage

\section{Note 1 : Option Prices as Probabilities}
\subsection{A first question}
One of the pillars of modern mathematical finance has been the
computation (and the understanding !) of the quantities :
\[
\mathbb{E}\left[\left(\mathcal{E}_t -K\right)^{\pm}\right],
\]
where : \[\mathcal{E}_t=\exp{\left(B_t-\frac{t}{2}\right)},\] with $(B_t)$ a Brownian
motion starting from $0$.

In an explicit form\footnote{Formula (\ref{eq:3}) extends easily when
  we replace $\mathcal{E}_t$ by $\exp{\left(\sigma B_t+\nu t\right)}$,
  so there is no loss of generality to take : $\sigma=1$, $\nu=-1/2$.}, the Black-Scholes formula writes :
\begin{equation}\label{eq:2}
\mathbb{E}\left[\left(\mathcal{E}_t -K\right)^{+}]=(1-K) + \mathbb{E}[\left(K-\mathcal{E}_t\right)^{+}\right]
\end{equation}
\begin{equation}\label{eq:3}
=\mathcal{N}\left(-\frac{\log{K}}{\sqrt{t}}+\frac{\sqrt{t}}{2}\right)-K\mathcal{N}\left(-\frac{\log{K}}{\sqrt{t}}-\frac{\sqrt{t}}{2}\right)
\end{equation}
where :\[\mathcal{N}(x)= \frac{1}{\sqrt{2\pi}} \int_{-\infty}^{x} \,dy\:
e^{-y^2/2}.\]

Since $(\mathcal{E}_t,t\geq 0)$ is a martingale, both
$\left(\mathcal{E}_t-K\right)^{+}$ and $\left(K-\mathcal{E}_t\right)^{+}$ are submartingales; hence :
\[t\to C^{\pm}(t,K)=\mathbb{E}\left[\left(\mathcal{E}_t -K\right)^{\pm}\right]\]
are increasing functions of $t$.

They are also continuous, and\footnote{That $C^{+}(\infty,K)=1$ is most
  easily seen using (\ref{eq:2}), and the fact that $\mathcal{E}_t
  \to_{t\to\infty} 0$.} :
\renewcommand{\theenumi}{\roman{enumi}.}
\renewcommand{\labelenumi}{\theenumi}
\begin{enumerate}
\item $C^{+}(0,K)= (1-K)^{+}$; $C^{+}(\infty,K)=1$;
\item $C^{-}(0,K)= (K-1)^{+}$; $C^{-}(\infty,K)=K$.
\end{enumerate}
Consequently :
\renewcommand{\theenumi}{\roman{enumi}$^{'}$.}
\renewcommand{\labelenumi}{\theenumi}
\begin{enumerate}
\item if $K\geq 1$, $C^{+}(t,K)$ increases from $0$ (for $t=0$),
  to $1$ (for $t=\infty$);
\item if $K\leq 1$, $\frac{1}{K}C^{-}(t,K)$ increases from $0$ (for $t=0$),
  to $1$ (for $t=\infty$).
\end{enumerate}

Therefore, in both cases, $C^{+}(\bullet,K)$, and $C^{-}(\bullet,K)$ are
distribution functions of a certain random variable $X^{\pm}$ taking values in
$\mathbb{R}^{+}$.

Can we identify the corresponding distribution?

Or, even better, can
we find, in our Brownian (Black-Scholes) framework, a random variable
whose distribution function is $C^{+/-}(\bullet,K)$?

To motivate the reader's interest, we assert, right away, taking $K=1$, that there is the formula :
\begin{equation}\label{fi}
\mathbb{E}\left[\left(\mathcal{E}_t-1\right)^{+}\right]=\mathbb{E}\left[\left(\mathcal{E}_t-1\right)^{-}\right]=\mathbb{P}\left(4B_1^2\leq
    t\right).
\end{equation}
We think of this formula as ``an alternative Black-Scholes formula''.
Furthermore, formula (\ref{fi}) has been very helpful to answer
M.Qian's question : \underline{given a probability measure $\mu(dt)$ on
$\mathbb{R}^+$, can one compute :}
\begin{equation}
\int_0^\infty\,\mu(dt)\mathbb{E}\left[\left(\mathcal{E}_t-1\right)^{\pm}\right]?
\end{equation}
Indeed from (\ref{fi}), the previous quantity equals :
\begin{equation}
\mathbb{E}\left[\bar{\mu}(4B_1^2)\right],
\end{equation}
where : $\bar{\mu}(x)=\mu\left([x,\infty)\right)$. For example, if :
\[\mu(dt)=\lambda e^{-\lambda t}\,dt,\]
then :
\begin{equation}
\int_0^\infty\,\lambda\,dt\,e^{-\lambda t}\mathbb{E}\left[\left(\mathcal{E}_t-1\right)^{\pm}\right]=\frac{1}{\sqrt{1+8\lambda}}.
\end{equation}
\begin{question}
It also seems of interest to ask the following extension of M.Qian's question : what is the law of :
\[\mathcal{E}_\mu^{\pm}\equiv \int_0^\infty \mu(dt)\left(\mathcal{E}_t-1\right)^\pm,\]
in particular in the case $\mu(dt)=\lambda e^{-\lambda t}\,dt$? We may start by computing moments of this variable $\mathcal{E}_\mu^\pm$.
\end{question}
In fact, in July 1997, Prof. Miura asked the second author for the law of $\int_0^t ds\,\left(\mathcal{E}_s-1\right)^+$, in order to obtain the price of ``Area options'', that is : $\mathbb{E}\left[\left(\int_0^t ds\,\left(\mathcal{E}_s-1\right)^+-K\right)^+\right]$.

Below, we give a clear probabilistic explanation of formula
(\ref{fi}), and even more generally of the extended alternative Black-Scholes formula :
\begin{equation}\label{fibis}
\mathbb{E}\left[\left(\mathcal{E}_t-K\right)^{\pm}\right]=\left(1-K\right)^{\pm}+\sqrt{K}\mathbb{E}\left[1_{\{4B_1^2\leq
    t\}}\,\exp{\left(-\frac{(\log{K})^2}{8B_1^2}\right)}\right].
\end{equation}

\subsection{A first answer}
In fact, the previous question admits a general answer, which does not
require to work within a Brownian framework.

Let $(M_t, t\geq 0)$ denote a continuous local martingale, defined on
$(\Omega,\mathcal{F},\mathcal{F}_t,\mathbb{P})$;
we assume that $M_t\geq 0$ and $M_t\to 0$ when $t\to\infty$.
Let $\mathcal{M}_0^{+}$ denote the set of these particular local
martingales, we insist that we allow local martingales...
\begin{theorem}\label{th:1}
Let $\mathcal{G}_K=\sup\{t, M_t=K\}$ with the convention
$\sup\{\varnothing\}=0$.
Then :
\begin{equation}\label{eq:1bis}
\left(1-\frac{M_t}{K}\right)^{+}=\mathbb{P}\left(\mathcal{G}_K\leq t|\mathcal{F}_t\right).
\end{equation}
($t$ maybe replaced by any stopping time $T$).\\
Consequently :
\begin{equation}\label{eq:1bisbis}
\mathbb{E}\left[\left(1-\frac{M_t}{K}\right)^{+}\right]=\mathbb{P}\left(\mathcal{G}_K\leq t\right).
\end{equation}
\end{theorem}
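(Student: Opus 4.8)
The plan is to establish the conditional identity \eqref{eq:1bis} and then obtain \eqref{eq:1bisbis} by taking expectations. The starting observation is that on the event $\{\mathcal{G}_K \leq t\}$, the path of $M$ after time $t$ never again touches $K$; since $M$ is continuous, converges to $0$, and $M_t$ is $\mathcal{F}_t$-measurable, this forces $M_t < K$ (one cannot have $M_t = K$ with $\mathcal{G}_K \leq t$ unless we are careful, but the strict inequality will come out of the martingale argument), and more importantly it forces $M$ started afresh from $M_t$ to stay below $K$ forever. So I would condition on $\mathcal{F}_t$ and write
\[
\mathbb{P}\left(\mathcal{G}_K \leq t \mid \mathcal{F}_t\right) = \mathbb{P}\left(\sup_{s \geq t} M_s < K \;\middle|\; \mathcal{F}_t\right) = 1_{\{M_t < K\}}\,\mathbb{P}\left(\sup_{s \geq t} M_s < K \;\middle|\; \mathcal{F}_t\right),
\]
and the task reduces to computing, for the post-$t$ process, the probability that a nonnegative continuous local martingale starting from a value $x < K$ and converging to $0$ never reaches level $K$.

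The heart of the matter is therefore the following one-sided exit estimate: if $(N_s)$ is a nonnegative continuous local martingale with $N_0 = x$ and $N_s \to 0$ as $s \to \infty$, then $\mathbb{P}(\exists s : N_s \geq K) = x/K$ for $0 \le x \leq K$, equivalently $\mathbb{P}(\sup_s N_s < K) = 1 - x/K = (1 - x/K)^+$. I would prove this by a stopping-time argument: let $T_K = \inf\{s : N_s \geq K\}$. On $\{T_K < \infty\}$ we have $N_{T_K} = K$ by continuity. The stopped process $(N_{s \wedge T_K})$ is a nonnegative local martingale, hence a supermartingale, and it converges a.s.\ as $s \to \infty$ to $K \cdot 1_{\{T_K < \infty\}}$ (since on $\{T_K = \infty\}$ we have $N_s \to 0$). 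The delicate point is upgrading from the supermartingale inequality $x = N_0 \geq \mathbb{E}[K \, 1_{\{T_K < \infty\}}]$ to an equality; this requires a uniform integrability argument, which holds because the stopped process is bounded above by $K$ before $T_K$ and converges to $0$ on $\{T_K = \infty\}$ — more precisely, one shows $(N_{s\wedge T_K})$ is a genuine martingale closed by its terminal value, for instance by applying the optional stopping / convergence theorem to the bounded martingale $N_{s \wedge T_K \wedge T_\varepsilon}$ where $T_\varepsilon = \inf\{s : N_s \leq \varepsilon\}$ and then letting $\varepsilon \to 0$, using $N_\infty = 0$. This gives $x = K \,\mathbb{P}(T_K < \infty)$, i.e.\ $\mathbb{P}(T_K < \infty) = x/K$.

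Granting this, the conditional computation closes: by the Markov-type / strong-Markov-free argument above (we only use that the post-$t$ process, conditionally on $\mathcal{F}_t$, is again a nonnegative continuous local martingale converging to $0$, started at $M_t$), we get
\[
\mathbb{P}\left(\mathcal{G}_K \leq t \mid \mathcal{F}_t\right) = 1_{\{M_t \le K\}}\left(1 - \frac{M_t}{K}\right) = \left(1 - \frac{M_t}{K}\right)^+,
\]
which is \eqref{eq:1bis}; one checks separately the boundary case $M_t = K$, where both sides vanish since starting from $K$ the level $K$ is (a.s.) attained immediately after, giving $\mathcal{G}_K > t$. Replacing $t$ by a stopping time $T$ is identical, using the optional sampling structure. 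Finally, \eqref{eq:1bisbis} follows by taking $\mathbb{E}$ of both sides of \eqref{eq:1bis} and the tower property: $\mathbb{E}[(1 - M_t/K)^+] = \mathbb{E}[\mathbb{P}(\mathcal{G}_K \leq t \mid \mathcal{F}_t)] = \mathbb{P}(\mathcal{G}_K \leq t)$.

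I expect the main obstacle to be the uniform-integrability step that turns the supermartingale bound into the exact identity $\mathbb{P}(T_K < \infty) = x/K$ — this is where the hypothesis $M_t \to 0$ is essential and where the argument for a mere local martingale (rather than a true martingale) needs the localization-then-limit care sketched above. Once that lemma is in hand, the passage to the conditional statement and then to the unconditional one is routine.
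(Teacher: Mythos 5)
Your proposal is correct and follows essentially the same route as the paper: identify $\{\mathcal{G}_K\leq t\}$ with $\{\sup_{s\geq t}M_s\leq K\}$ and compute the conditional law of that supremum by optional stopping, which is exactly the content of the paper's Doob's maximal identity $\sup_{s\geq t}M_s\overset{\mathrm{law}}{=}M_t/\mathbf{U}$ (the identity $\mathbb{P}(T_K<\infty\mid\mathcal{F}_t)=M_t/K$ and the uniform law of the maximum are the same statement). Your localization via $T_\varepsilon$ to justify the optional stopping for a mere local martingale is a welcome elaboration of a step the paper leaves implicit.
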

\begin{proof}
\begin{enumerate}
\item[a)] Note that : 
\begin{equation}
\left(\mathcal{G}_K\leq t\right) =\left(\sup_{s\geq t} M_s\leq K\right).
\end{equation}
\item[b)] From the next lemma, we have conditionally on $\mathcal{F}_t$ :
\begin{equation}\label{eq:Doob}
\sup_{s\geq t} M_s \overset{\underset{\mathrm{law}}{}}{=}\frac{M_t}{\mathbf{U}},
\end{equation}
where $\mathbf{U}$ is uniform on $[0, 1]$ and independent from $\mathcal{F}_t$.
Consequently,
\begin{equation}
  \mathbb{P}\left(\sup_{s\geq t} M_s\leq K|\mathcal{F}_t\right)=\mathbb{P}\left(\frac{M_t}{\mathbf{U}}<K|\mathcal{F}_t\right)
  =\left(1-\frac{M_t}{K}\right)^{+}.
\end{equation}
\end{enumerate}
\end{proof}

Now formula (\ref{eq:Doob}) follows from the elementary, but very useful lemma :
\begin{lemma}[Doob's maximal identity]\label{lemma:uni}
If $(N_t,t\geq 0)\in\mathcal{M}_0^{+}$, then :
\begin{equation}
\sup_{t\geq 0} N_t \overset{\underset{\mathrm{law}}{}}{=}\frac{N_0}{\mathbf{U}},
\end{equation}
where $\mathbf{U}$ is uniform on $[0, 1]$ and independent from $\mathcal{F}_0$.
\end{lemma}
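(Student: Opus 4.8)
The plan is to prove Doob's maximal identity by reducing the statement about a general nonnegative local martingale vanishing at infinity to a statement about a single random variable, namely $\sup_{t\geq 0} N_t$, and computing its conditional law given $\mathcal{F}_0$. Writing $S=\sup_{t\geq 0} N_t$, I want to show that $\proba(S>x\mid\mathcal{F}_0)=\min(N_0/x,1)$ for every $x>0$, which is exactly the tail of the law of $N_0/\mathbf{U}$ with $\mathbf{U}$ uniform on $[0,1]$ and independent of $\mathcal{F}_0$.

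First I would reduce to the case $N_0=n$ deterministic (working under a regular conditional probability given $\mathcal{F}_0$, or simply noting the identity is $\mathcal{F}_0$-conditional and $N_0$ is $\mathcal{F}_0$-measurable), and also handle the trivial case $x\leq N_0$ separately, where the claimed probability is $1$ and indeed $S\geq N_0$ almost surely. So fix $x>N_0$. The key step is to consider the stopping time $T_x=\inf\{t\geq 0: N_t\geq x\}$ and to apply the optional stopping theorem to the stopped process $(N_{t\wedge T_x})$. The process $N$ is a nonnegative local martingale, hence a supermartingale; being nonnegative it is bounded below, and I claim $(N_{t\wedge T_x})$ is in fact a true martingale — this is where one uses that $N$ is continuous, so that $N_{T_x}=x$ on $\{T_x<\infty\}$, and the stopped process is bounded by $x$, allowing a uniform-integrability / bounded-convergence argument to promote the local martingale property to a genuine martingale property up to and including time $\infty$.

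Given that, the computation is: on one hand $\esp{N_\infty\wedge N_{T_x}\text{-type limit}}$ — more precisely, since $N_{t\wedge T_x}\to N_{T_x}\mathbf 1_{\{T_x<\infty\}}+N_\infty\mathbf 1_{\{T_x=\infty\}} = x\,\mathbf 1_{\{T_x<\infty\}}$ as $t\to\infty$ (using $N_\infty=0$), and the stopped martingale is bounded by $x$, bounded convergence gives $N_0=\esp{x\,\mathbf 1_{\{T_x<\infty\}}\mid\mathcal{F}_0}=x\,\proba(T_x<\infty\mid\mathcal{F}_0)$. On the other hand, by continuity of $N$ and $N_0<x$, the events $\{T_x<\infty\}$ and $\{S\geq x\}$ coincide (and $\{S>x\}$ differs by a null-probability boundary set, or one argues directly with $\{S\geq x\}$ and takes a limit in $x$). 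Hence $\proba(S\geq x\mid\mathcal{F}_0)=N_0/x$ for $x>N_0$, which together with $\proba(S\geq x\mid\mathcal{F}_0)=1$ for $x\leq N_0$ identifies the conditional law of $S$ as that of $N_0/\mathbf{U}$.

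The main obstacle is the justification that the stopped process $(N_{t\wedge T_x})_{t\in[0,\infty]}$ is a uniformly integrable martingale up to time $\infty$ — i.e., that we may legitimately apply optional stopping at $t=\infty$ and pass to the limit. The boundedness $0\leq N_{t\wedge T_x}\leq x$ (valid because $N\geq 0$, $N$ is continuous, and $N_{T_x}=x$ on $\{T_x<\infty\}$) is precisely what makes this work, converting the a priori only-supermartingale local-martingale into a bounded, hence uniformly integrable, martingale; the convergence $N_{t\wedge T_x}\to x\,\mathbf 1_{\{T_x<\infty\}}$ then uses $N_\infty=0$ on $\{T_x=\infty\}$. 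Everything else is routine, and one should remark that the identity is stated in law but the argument actually gives the stronger conditional statement, which is what Theorem~\ref{th:1} invokes via the Markov-type application at a general time $t$ to the shifted martingale $(M_{t+s})_{s\geq 0}$.
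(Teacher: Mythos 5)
Your proof is correct and follows essentially the same route as the paper: apply optional stopping to $N$ at the first hitting time of a level $x>N_0$, use $N_\infty=0$ and the boundedness of the stopped process to get $\proba(\sup_t N_t\geq x\mid\mathcal{F}_0)=N_0/x$, and read off the law of $N_0/\mathbf{U}$. You merely supply the uniform-integrability justification that the paper's one-line argument leaves implicit.
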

\begin{proof}
We use Doob's optional stopping theorem : 
if $T_a=\inf\{t, N_t=a\}$,
(with the convention $\inf\{\varnothing\}=\infty$), then, if $a>N_0$ :
\begin{eqnarray*}
\mathbb{E}\left[N_{T_a}|\mathcal{F}_0\right]&=& N_0\\
a\:\mathbb{P}\left(T_a<\infty|\mathcal{F}_0\right)&=&N_0,
\end{eqnarray*}
that is $\mathbb{P}\left(\sup_{t\geq 0} N_t>a|\mathcal{F}_0\right)=\frac{N_0}{a}$.
This yields to the result.
\end{proof}
\begin{exo}
Denote :
\[
\bar{\mathcal{E}}_{(t,\infty)}\equiv \sup_{s\geq t} \mathcal{E}_s.
\]
Prove that the process :
$(\lambda_t\equiv\mathcal{E}_t/\bar{\mathcal{E}}_{(t,\infty)},t\geq 0)$ is
strictly stationary, with common law $\mathbf{U}$.\\
More generally, prove that the process
$(\lambda_t^{(p)}\equiv\mathcal{E}_t/\tilde{\mathcal{E}}_t^{(p)},t\geq 0)$
is strictly stationary, where :
\[\tilde{\mathcal{E}}_t^{(p)}=\left(\int_t^\infty du\,\exp{p\left(B_u-\frac{u}{2}\right)}\right)^{1/p}.\]
Show that :
\[\lambda_t^{(p)}\to_{p\to\infty}\lambda_t.\]
What is the common law $\mathbf{U}_p$ of the $\lambda_t^{(p)}$'s?
\end{exo}
\subsection{The law of $G_a^{(\nu)}$}
Coming back to our original question in Section 1.1, we observe that formula (\ref{eq:1bisbis}), in the Brownian framework, gives : 
\begin{equation}
  \mathbb{E}\left[\left(1-\frac{\mathcal{E}_t}{K}\right)^{+}\right]=\mathbb{P}\left(\mathcal{G}_K\leq
    t\right).
\end{equation}

Hence, taking $K=1$, it suffices to obtain the identity :
\begin{equation}\label{etoile1.3} 
\mathcal{G}_1\overset{\underset{\mathrm{law}}{}}{=}4B_1^2,
\end{equation}
to recover formula (\ref{fi}); this identity (\ref{etoile1.3}) may be simply obtained by time inversion, since :
\[\mathcal{G}_1=\sup\{t,\mathcal{E}_t=1\}=\sup\{t,B_t-\frac{t}{2}=0\}\]
hence :
\begin{equation}
\mathcal{G}_1\overset{\underset{\mathrm{law}}{}}{=}\frac{1}{T_{1/2}}
\overset{\underset{\mathrm{law}}{}}{=}\frac{4}{T_1}
\overset{\underset{\mathrm{law}}{}}{=}4B_1^2.
\end{equation}

We are now bound to describe the
law of $G_a^{(\nu)}=\sup\{t, B_t+\nu t=a\}$, for all $a$, $\nu \in
\mathbb{R}$.
These laws are well-known, thanks again to the stability by time inversion for
Brownian motion : if $(B_u)$ is a Brownian motion, then
: \[\hat{B}_u=uB_{\frac{1}{u}},\quad u>0\] is also a Brownian motion.

As a consequence :
\begin{equation}\label{1.3.20}
\left(T_a^{(\nu)},G_a^{(\nu)}\right)\overset{\underset{\mathrm{law}}{}}{=} \left(\frac{1}{G_{\nu}^{(a)}}, \frac{1}{T_{\nu}^{(a)}}\right).
\end{equation}
The (separate) laws of $T_a^{(\nu)}$ and $G_a^{(\nu)}$ are (for $a>0$, $\nu\in\mathbb{R}$) :
\begin{equation}
\mathbb{P}\left(T_a^{(\nu)}\in dt\right)=\frac{dt\:a}{\sqrt{2\pi t^3}}\exp{\left(-\frac{(a-\nu t)^2}{2t}\right)},
\end{equation}
and
\begin{equation}\label{1.3.22}
\mathbb{P}\left(G_a^{(\nu)}\in dt\right)=|\nu|\,\frac{dt}{\sqrt{2\pi t}}\exp{\left(-\frac{(a-\nu t)^2}{2t}\right)}.
\end{equation}
We refer to \cite{pityor} for some further discussion about time inversion.
\begin{exo}
Give the expression of the joint law of $(T_a^{(\nu)},G_a^{(\nu)})$.
\end{exo}
\begin{exo}
Recover the Black-Scholes formula thanks to the knowledge of the laws of $G_a^{(\pm 1/2)}$.
\end{exo}
\begin{exo}
Establish formula (\ref{fibis}).
\end{exo}
\subsection{Other universal laws}
We now come back to the setup of Section $2$; we would like to
understand better why a ``universal law'', such as the uniform, occurs
in the framework of Theorem \ref{th:1}.

Recall that : 
\begin{equation}\label{DS}
N_t=\beta_{\langle N\rangle_t}, \quad t\geq 0,
\end{equation}
where $(\beta_u)$ is a Brownian motion starting from $N_0$.

Since $N_t\to 0$ when $t\to\infty$, one has :
\begin{equation}
\langle N\rangle_{\infty}=T_0(\beta).
\end{equation}

Now, we see that :
\begin{equation}
 \sup_{t\geq 0} N_t=\sup_{u\leq T_0(\beta)}\beta_u.
\end{equation}

Hence, taking for simplicity $N_0=1$, we see why the law of
$\sup_{t\geq 0} N_t$ is universal, i.e : it is the law of $\sup_{u\leq
  T_0(\beta)}\beta_u$, which, as we have already shown, is the law of
$\frac{1}{\mathbf{U}}$.

Now, it may be natural to see whether some other functionals of $N$, say
$F(N)$, maybe
reduced to the corresponding functionals of $\beta$, killed at $T_0(\beta)$, i.e :
$F(N)=F(\beta_{.\wedge T_0(\beta)})$.
In this case $F(N)$ will have ``the universal law'' of 
$F(\beta_{.\wedge T_0(\beta)})$.
\begin{question}
Characterize the universal functionals $F$.
\end{question}

To identify at least some such functionals, let us recall the definition of the local
times of $N$, via the occupation measure : 
\[f\to \int_0^t \,d\langle N\rangle_s\: f(N_s),\quad f
:\mathbb{R}^{+}\to\mathbb{R}^{+}, \mathrm{Borel}.\]
which is absolutely continuous with respect to the Lebesgue measure;\\
indeed :
\begin{equation}\label{eq:10}
\int_0^t \,d\langle N\rangle_s\: f(N_s)=\int_0^{\infty}\,dx\:f(x)\mathcal{L}_t^x(N),
\end{equation}
where $(\mathcal{L}_t^x(N); x\geq 0, t\geq 0)$ is the jointly continuous
family of local times of $N$.

From the Dubins-Schwarz relation (see (\ref{DS})), we obtain :
\begin{equation}
\mathcal{L}_t^x(N)=\mathcal{L}_{\langle N\rangle_t}^x(\beta).
\end{equation}

Consequently :
\begin{equation}
\mathcal{L}_{\infty}^x(N)=\mathcal{L}_{T_0(\beta)}^x(\beta).
\end{equation}
Hence, the local time process $(\mathcal{L}_\infty^x(N),x\geq 0)$ is a universal functional, whose law, that is the law of the process $(\mathcal{L}_{T_0(\beta)}^x(\beta),x\geq 0)$ is
well-known and is the subject of the following Ray-Knight theorem.
\begin{theorem}[Ray-Knight]\label{th:2}
Let $(\beta_u, u\leq T_0(\beta))$ be a Brownian motion starting at $1$,
considered up to time $T_0(\beta)$, its first time when it hits $0$.\\
Then : $(Z_x=\mathcal{L}_{T_0(\beta)}^x(\beta),x\geq 0)$ satisfies :
\begin{equation}
Z_x=2\int_0^x\sqrt{Z_y}\,d\gamma_y + 2\left(x\wedge 1\right),
\end{equation}
where $(\gamma_y,y\geq 0)$ is a Brownian motion.\\
In other words,
\renewcommand{\theenumi}{\roman{enumi}.}
\renewcommand{\labelenumi}{\theenumi}
\begin{enumerate}
\item $(Z_x,x\leq 1)$ is a $BESQ_0(2)$;
\item $Z_1$ is distributed as $2\mathbf{e}$, where $\mathbf{e}$ is a standard
  exponential variable;
\item Conditionally on $Z_1=z$, $(Z_{1+x},x\geq 0)$ is a $BESQ_z(0)$.
\end{enumerate}
\end{theorem}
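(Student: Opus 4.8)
\emph{Proof proposal.} The approach I would take is the classical ``Ray--Knight via Tanaka'' route: first establish the displayed stochastic equation for the field $x\mapsto Z_x=\mathcal{L}^x_{T_0}(\beta)$, and then read off the three reformulations from elementary properties of squared Bessel processes. The basic observation is that on $[0,T_0]$ the path of $\beta$ stays in $(0,\infty)$, since $\beta_0=1$ and $T_0$ is its first zero, so that only nonnegative levels are relevant.

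First I would apply the It\^o--Tanaka formula to the convex function $x\mapsto x^{-}$ at an arbitrary level $a\ge 0$ and evaluate it at time $T_0$, where $\beta_{T_0}=0$ and $\beta_0=1$. After simplifying the deterministic term this gives
\[
Z_a\;=\;2(a\wedge 1)\;+\;2M_a,\qquad M_a:=\int_0^{T_0}\mathbf{1}_{\{\beta_s<a\}}\,d\beta_s .
\]
Since $\beta_s>0$ on $(0,T_0)$, the occupation times formula gives $\int_0^{T_0}\mathbf{1}_{\{\beta_s<a\}}\,ds=\int_0^a Z_b\,db$, which is the quantity that should play the role of the bracket of the field $a\mapsto M_a$. (As a consistency check, for $a>\sup_{[0,T_0]}\beta$ one gets $M_a=\beta_{T_0}-\beta_0=-1$, whence $Z_a=0$; and $Z_0=0$.) Note also that $a\mapsto M_a$ is continuous, by joint continuity of the local times.

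The crucial --- and, I expect, hardest --- step is to prove that $(M_a,a\ge 0)$ is a continuous local martingale \emph{in the level variable} $a$, for an appropriate filtration (the natural candidate being the one generated by $(\mathcal{L}^b_{T_0}(\beta):b\le a)$, suitably completed), and that $\langle M\rangle_a=\int_0^a Z_b\,db$. An increment $M_{a+h}-M_a=\int_0^{T_0}\mathbf{1}_{\{a\le\beta_s<a+h\}}\,d\beta_s$ only involves the pieces of the path inside the horizontal strip $[a,a+h)$, and the heart of the matter is to show --- by a Riemann-sum approximation over thin strips, or via the additivity underlying excursion theory --- that these contributions behave as orthogonal martingale increments whose conditional second moments sum to the occupation time computed above. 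Granting this, the Dambis--Dubins--Schwarz theorem produces a Brownian motion $\gamma$ (extended by an independent one if $\langle M\rangle_\infty<\infty$) with $2M_a=\int_0^a 2\sqrt{Z_b}\,d\gamma_b$, and substituting into the Tanaka identity gives exactly
\[
Z_x\;=\;2\int_0^x\sqrt{Z_y}\,d\gamma_y\;+\;2(x\wedge 1).
\]

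It then remains to deduce (i)--(iii). On $[0,1]$ the equation is $dZ_x=2\sqrt{Z_x}\,d\gamma_x+2\,dx$ with $Z_0=0$, the defining equation of a $BESQ_0(2)$; since $z\mapsto 2\sqrt{z}$ is $\tfrac12$-H\"older, pathwise uniqueness holds and identifies $(Z_x,x\le 1)$ as a $BESQ_0(2)$, which is (i). For (ii), the time-$1$ marginal of $BESQ_0(2)$ is the law of the sum of the squares of two independent standard Gaussian variables, i.e.\ a $\chi^2_2$ variable, with density $\tfrac12 e^{-x/2}$ on $(0,\infty)$; this is the law of $2\mathbf{e}$. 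Finally, on $[1,\infty)$ the equation reads $dZ_{1+x}=2\sqrt{Z_{1+x}}\,d\gamma'_x$ for a Brownian motion $\gamma'$, started from $Z_1$; since pathwise uniqueness also gives the strong Markov property of the solution, conditionally on $Z_1=z$ the shifted field is a $BESQ_z(0)$, which is (iii).
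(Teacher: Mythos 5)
The paper does not actually prove this theorem: it is quoted as the classical Ray--Knight theorem and used as a black box (the \texttt{proof} environment that follows it in the text establishes the subsequent exercise, $\sup_{t\leq T_0}\beta_t\overset{\mathrm{law}}{=}1/\mathbf{U}$, not the theorem itself). So your proposal is being measured against the standard literature rather than against an argument in the paper. Your Tanaka computation is correct (the sign works out, as your check $M_a=-1$, $Z_a=0$ for $a>\sup\beta$ confirms), the identification of $\int_0^a Z_b\,db$ as the candidate bracket via the occupation formula is right, and the deductions (i)--(iii) from the SDE --- Yamada--Watanabe for the $\tfrac12$-H\"older coefficient, the $\chi^2_2=2\mathbf{e}$ marginal, and the Markov property giving the $BESQ_z(0)$ continuation --- are all standard and correct.

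The genuine gap is exactly the one you flag and then skip: the statement that $a\mapsto M_a=\int_0^{T_0}\mathbf{1}_{\{\beta_s<a\}}\,d\beta_s$ is a continuous local martingale \emph{in the level variable} with bracket $\int_0^a Z_b\,db$, relative to $\mathcal{G}_a=\sigma(Z_b,\,b\leq a)$. This is not a technicality --- it is where the entire probabilistic content of the theorem lives, and ``Granting this'' reduces the proof to an unproven lemma. The difficulty is that $M_{a+h}-M_a$ is a stochastic integral in the \emph{time} variable, and nothing in It\^o calculus makes it measurable with respect to, let alone a martingale increment for, the level filtration; indeed the analogous statement fails for other stopping times in place of $T_0$. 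The standard ways to close this are: (a) excursion theory, decomposing $M_{a+h}-M_a$ over the excursions of $\beta$ above level $a$, which are conditionally i.i.d.\ given $\mathcal{G}_a$ with mean zero and second moment governed by $Z_a$ (this is essentially the original Ray--Knight route, cf.\ Revuz--Yor, Ch.\ XI); or (b) the strong Markov property applied at the successive hitting times of the levels $a$ and $a+h$, combined with the first-passage decomposition of the path. Either would need to be carried out, not merely invoked, for the proposal to constitute a proof; as written it is a correct and well-organized reduction of the theorem to its one hard lemma.
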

\begin{exo}
Recover the universal result : 
\[\sup_{t\leq
  T_0(\beta)}\beta_t\overset{\underset{\mathrm{law}}{}}{=}\frac{1}{\mathbf{U}},\] from Theorem \ref{th:2}.
\end{exo}
\begin{proof}(A possible one!)\\
Call $\Sigma=\sup_{t\leq T_0(\beta)}\beta_t$, and note that :
\[\Sigma=1+\inf\{x\geq 1, Z_x=0\}.\]
By time reversal ,
\[\Sigma=1+\sup\{t, \hat{Z}_t=Z_1\},\]
where $(\hat{Z}_t)$ is a $BESQ_0(4)$.
Hence :
\begin{equation}
\Sigma \overset{\underset{\mathrm{law}}{}}{=}1+ \frac{Z_1}{2\gamma_1}
\overset{\underset{\mathrm{law}}{}}{=}1+\frac{\mathbf{e}}{\mathbf{e}'}
\overset{\underset{\mathrm{law}}{}}{=}\frac{\mathbf{e}'+\mathbf{e}}{\mathbf{e}'}\overset{\underset{\mathrm{law}}{}}{=}\frac{1}{\mathbf{U}}.
\end{equation}
\end{proof}
It may be of interest to give the general Laplace transform of :
\begin{equation}\label{23.1.4}
\int_0^\infty\,d\langle N\rangle_s f(N_s)=\int_0^{T_0(\beta)}\,du f(\beta_u).
\end{equation}
We refer to \cite{donat}.

However, we may identify directly the law of the $RHS$ of (\ref{23.1.4}) when $f$ is a power function : $f(x)=x^\alpha\quad\alpha>0$. Indeed, applying Itô's formula to $(\beta_u^\alpha,u\leq T_0(\beta))$, it is easily shown that :
\begin{equation}\label{et1.4}
\beta_u^\alpha=\rho_{\alpha^2\int_0^u\,ds\,\beta_s^{2(\alpha-1)}},\quad u\leq T_0(\beta),
\end{equation}
where $(\rho_h,h\geq 0)$ is a $BES$ process, with dimension $d_\alpha=2-\frac{1}{\alpha}$. \\Consequently, formula (\ref{et1.4}) yields :
\begin{equation}
\alpha^2\int_0^{T_0(\beta)}\,ds\,\beta_s^{2(\alpha-1)}\overset{\underset{\mathrm{law}}{}}{=}T_0(\rho)\overset{\underset{\mathrm{law}}{}}{=}\mathcal{G}_1(\rho'),
\end{equation}
where $(\rho'_u,u\geq 0)$ is the $BES$ process starting from $0$, with dimension : $d'_\alpha=2+\frac{1}{\alpha}$.
Then, elementary arguments using Lemma (\ref{lemma:uni}) lead to :
\[\mathcal{G}_1\overset{\underset{\mathrm{law}}{}}{=}\frac{1}{2\gamma_{1/2\alpha}},\]
where $\gamma_{\nu}$ indicates a gamma variable with parameter $\nu$.
For this proof,we refer to \cite{yor4}, p16--17.

\newpage
\section{Note 2 : Computing the law of $\mathcal{G}_K$}
In Note 1, we have shown (under our current hypotheses) :
\begin{equation}\label{eq:?}
\mathbb{E}\left[\left(1-\frac{M_t}{K}\right)^{+}\right]=\mathbb{P}\left(\mathcal{G}_K\leq t\right).
\end{equation}
As a motivation for this note, remark that when $M_t=\mathcal{E}_t$, the $LHS$ of (\ref{eq:?}) is known
: this is the Black-Scholes formula !
Consequently, we can recover from the Black-Scholes formulae (see (\ref{eq:2}) and (\ref{eq:3}))
the law of $\mathcal{G}_K$.
\subsection{A general result}
Here, we aim to give a formula for the law of $\mathcal{G}_K$ associated to our
general local martingale $(M_t)$, and its local times
$\mathcal{L}_t^x(M)$ as defined via (\ref{eq:10}) :
\begin{equation}\label{eq:etoileo}
\int_0^t \,d\langle M\rangle_s\: f(M_s)=\int_0^{\infty}\,dx\:f(x)\mathcal{L}_t^x(M).
\end{equation}
To proceed, we need to make some further hypotheses on $M$ :
\begin{enumerate}
\item[$(H_1)$] for every $t>0$, the law\footnote{We are grateful to F. Delarue for pointing out the exercise in \cite{delarue} p.97 which gives a sufficient condition.} of the r.v. $M_t$ admits a density
  $(m_t(x),x\geq 0)$, and : $(t,x)\to m_t(x)$ may be chosen continuous
  on $(0,\infty)^2$;
\item[$(H_2)$] $d\langle M\rangle_t=\sigma_t^2$\,dt, and there exists a jointly continuous
  function :
\[(t,x)\to\theta_t(x)=\mathbb{E}\left[\sigma_t^2|M_t=x\right] \quad\mathrm{on}\quad (0,\infty)^2.\]
\end{enumerate}
Then, the following holds :
\begin{theorem}\label{chelou}
The law of $\mathcal{G}_K$ is given by :
\begin{equation}\label{eq:theo}
\mathbb{P}\left(\mathcal{G}_K\in dt\right)=\left(1-\frac{a}{K}\right)^{+}\epsilon_0(dt)+\frac{1_{\{t>0\}}}{2K}\,\theta_t(K)\,m_t(K)\,dt,
\end{equation}
where $a=M_0$.
\end{theorem}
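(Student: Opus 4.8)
The plan is to start from the identity of Theorem \ref{th:1}, namely $\proba(\mathcal{G}_K\leq t\,|\,\mathcal{F}_t)=(1-M_t/K)^+$, take expectations to get $\proba(\mathcal{G}_K\leq t)=\esp{(1-M_t/K)^+}$, and then differentiate in $t$. The atom at $t=0$ is immediate: $\proba(\mathcal{G}_K=0)=\proba(\sup_{s\geq 0}M_s\leq K)=(1-a/K)^+$ by Doob's maximal identity (Lemma \ref{lemma:uni}) applied at time $0$, which accounts for the $(1-a/K)^+\epsilon_0(dt)$ term. So the real content is to show that for $t>0$ the absolutely continuous part of the law of $\mathcal{G}_K$ has density $\frac{1}{2K}\theta_t(K)m_t(K)$.

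First I would write $F(t):=\esp{(1-M_t/K)^+}=\frac{1}{K}\esp{(K-M_t)^+}$ and aim to compute $F'(t)$. The natural tool is the Tanaka--Meyer formula applied to the convex function $x\mapsto(K-x)^+$:
\begin{equation}
(K-M_t)^+ = (K-M_0)^+ - \int_0^t 1_{\{M_s<K\}}\,dM_s + \tfrac12\mathcal{L}_t^K(M),
\end{equation}
where $\mathcal{L}_t^K(M)$ is the local time at level $K$ normalized so that \eqref{eq:etoileo} holds (one must be careful about the factor-of-$2$ convention; the statement's $\frac{1}{2K}$ strongly suggests the occupation-density normalization of \eqref{eq:etoileo}). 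Taking expectations, the stochastic integral is a true martingale (it is bounded, since $1_{\{M_s<K\}}$ is bounded and one can localize, using that $(K-M_t)^+$ and $(K-M_0)^+$ are integrable), hence has zero expectation, giving
\begin{equation}
\esp{(K-M_t)^+} = (K-a)^+ + \tfrac12\,\esp{\mathcal{L}_t^K(M)}.
\end{equation}
Therefore $F(t)=(1-a/K)^+ + \frac{1}{2K}\esp{\mathcal{L}_t^K(M)}$, and it remains to identify $\frac{d}{dt}\esp{\mathcal{L}_t^K(M)}$ with $\theta_t(K)m_t(K)$.

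For that last step I would take expectations in the occupation-time formula \eqref{eq:etoileo} with $d\ps{M}_s=\sigma_s^2\,ds$: for a test function $f$,
\begin{equation}
\int_0^t ds\,\esp{\sigma_s^2 f(M_s)} = \int_0^\infty dx\, f(x)\,\esp{\mathcal{L}_t^x(M)}.
\end{equation}
Using $(H_2)$, $\esp{\sigma_s^2 f(M_s)}=\esp{\theta_s(M_s)f(M_s)}$, and using $(H_1)$ this equals $\int_0^\infty dx\, f(x)\,\theta_s(x)m_s(x)$. Hence $\int_0^\infty dx\,f(x)\int_0^t ds\,\theta_s(x)m_s(x)=\int_0^\infty dx\,f(x)\,\esp{\mathcal{L}_t^x(M)}$ for all $f$, so $\esp{\mathcal{L}_t^x(M)}=\int_0^t ds\,\theta_s(x)m_s(x)$, and differentiating in $t$ (legitimate by the assumed joint continuity of $\theta$ and $m$ on $(0,\infty)^2$) gives $\frac{d}{dt}\esp{\mathcal{L}_t^K(M)}=\theta_t(K)m_t(K)$. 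Combining, $F'(t)=\frac{1}{2K}\theta_t(K)m_t(K)$ for $t>0$, which together with the atom at $0$ yields \eqref{eq:theo}.

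The main obstacle I anticipate is the justification of interchanges and the martingale property rather than any deep idea: specifically, (i) checking that the local-martingale part of the Tanaka decomposition is a genuine martingale (needs a localization argument plus the integrability of $(K-M_t)^+$, which holds since $0\leq(K-M_t)^+\leq K$), (ii) justifying Fubini in the occupation-time identity after taking expectations (the integrand is nonnegative, so Tonelli applies), and (iii) justifying differentiation under the integral/in $t$ of $\esp{\mathcal{L}_t^K(M)}=\int_0^t \theta_s(K)m_s(K)\,ds$, which is immediate from continuity of the integrand on $(0,\infty)$. One should also reconcile the normalization of local time: with the ``$\tfrac12\mathcal{L}$'' Tanaka convention tied to the occupation formula \eqref{eq:etoileo} with no extra factor, the constant comes out as $\frac{1}{2K}$ exactly as claimed.
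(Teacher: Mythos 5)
Your proposal is correct and follows essentially the same route as the paper's own proof: Tanaka's formula to reduce the problem to computing $d_t\,\mathbb{E}\left[\mathcal{L}_t^K(M)\right]$, then the occupation-density formula together with $(H_1)$ and $(H_2)$ to identify this with $\theta_t(K)\,m_t(K)\,dt$. Your remarks on the martingale property of the stochastic integral and on the local-time normalization are in fact more careful than what the paper writes down.
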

\begin{proof}
\begin{enumerate}
\item[a)] Using Tanaka's formula, one obtains:
\begin{equation}\label{tanaka}
\mathbb{E}\left[\left(K-M_t\right)^{+}\right]=\left(K-a\right)^{+}+\frac{1}{2}\,\mathbb{E}\left[\mathcal{L}_t^K(M)\right].
\end{equation} 
Thus, from (\ref{eq:?}), there is the relationship :
\begin{equation}
\mathbb{P}\left(\mathcal{G}_K \in dt\right)=\left(1-\frac{a}{K}\right)^{+}\epsilon_0(dt)+\frac{1_{\{t>0\}}}{2K}\,d_t\left(\mathbb{E}\left[\mathcal{L}_t^{K}(M)\right]\right),
\end{equation}
and formula (\ref{eq:theo}) is now equivalent to the following
expression for $d_t\left(\mathbb{E}\left[\mathcal{L}_t^{K}(M)\right]\right)$ :
\begin{equation}\label{eq:simpli}
d_t\left(\mathbb{E}\left[\mathcal{L}_t^{K}(M)\right]\right)=dt\,\theta_t(K)\,m_t(K)\quad(t>0).
\end{equation}
\item[b)] We now prove (\ref{eq:simpli}).
The density of occupation formula (\ref{eq:etoileo}) for the local martingale $(M_t)$
writes :
for every $f:\mathbb{R}^{+}\to\mathbb{R}^{+}$, Borel,
\begin{equation}\label{eq21}
\int_0^t \,ds\,\sigma_s^2\: f(M_s)=\int_0^{\infty}\,dK\:f(K)\mathcal{L}_t^K(M).
\end{equation}
Thus, taking expectations on both sides of (\ref{eq21}), we obtain :
\begin{equation}\label{eq22}
\mathbb{E}\left[\int_0^t \,ds\,\sigma_s^2\: f(M_s)\right]=\int_0^{\infty}\,dK\:f(K)\mathbb{E}\left[\mathcal{L}_t^K(M)\right].
\end{equation}
The $LHS$ of (\ref{eq22}) equals :
\begin{equation}
\int_0^t \,ds\,\mathbb{E}\left[\mathbb{E}\left[\sigma_s^2|M_s\right]
f(M_s)\right]=\int_0^{\infty}\,dK\:f(K)\int_0^t\,ds\, m_s(K)\,\theta_s(K)
\end{equation}
and formula (\ref{eq:simpli}) now follows easily from (\ref{eq22}).
\end{enumerate}
\end{proof}

\begin{exo}
Give the particular case of formula (\ref{eq:theo}) when
$M_t=\mathcal{E}_t$, thus recovering again the law of $\mathcal{G}_K$
in the Brownian framework.
\end{exo}

\subsection{Some connection with the Dupire formula}
We recall our original notation :
\[C^{\pm}(t,K)=\mathbb{E}\left[\left(\mathcal{E}_t
    -K\right)^{\pm}\right],\]
which we now extend to our general martingale case, i.e :
\[C^{\pm}(t,K)=\mathbb{E}\left[\left(M_t
    -K\right)^{\pm}\right].\]
\begin{theorem} 
The following identities hold :
\begin{equation}
\frac{\partial}{\partial
  T}(C^{-}(T,K))\overset{\underset{\mathrm{(a)}}{}}{=}\theta_T(K)\frac{\partial^2}{\partial K^2} C^{-}(T,K)
\overset{\underset{\mathrm{(b)}}{}}{=}2K\gamma_K(T),
\end{equation}
where $(\gamma_K(T),T>0)$ is the density of $\mathcal{G}_K$.
\end{theorem}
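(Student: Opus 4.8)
The plan is to prove the two identities (a) and (b) separately, each by a short computation resting on the Dupire-style PDE machinery and on Theorem \ref{chelou}.

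For identity (b), the strategy is to differentiate the relation between $C^{-}$ and $\mathbb{E}[\mathcal{L}_t^K(M)]$ twice in $K$. Starting from Tanaka's formula (\ref{tanaka}), one has $C^{-}(T,K)=(K-a)^{+}+\tfrac12\mathbb{E}[\mathcal{L}_T^K(M)]$. Differentiating once in $K$ gives $\tfrac{\partial}{\partial K}C^{-}(T,K)=1_{\{K>a\}}+\tfrac12\tfrac{\partial}{\partial K}\mathbb{E}[\mathcal{L}_T^K(M)]$; differentiating again, and using the classical identity $\tfrac{\partial^2}{\partial K^2}C^{-}(T,K)=m_T(K)$ (the second derivative of the put price in strike is the density of $M_T$, which exists by $(H_1)$), one obtains $\tfrac{\partial^2}{\partial K^2}\mathbb{E}[\mathcal{L}_T^K(M)]=2m_T(K)-2\epsilon_a(dK)$. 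I will not actually need this directly; instead I combine (\ref{eq:simpli}), which says $\partial_T\mathbb{E}[\mathcal{L}_T^K(M)]=\theta_T(K)m_T(K)$, with Theorem \ref{chelou}, which identifies $\gamma_K(T)=\tfrac{1}{2K}\theta_T(K)m_T(K)$ for $T>0$. Differentiating $C^{-}(T,K)=(K-a)^{+}+\tfrac12\mathbb{E}[\mathcal{L}_T^K(M)]$ in $T$ and invoking (\ref{eq:simpli}) gives $\partial_T C^{-}(T,K)=\tfrac12\theta_T(K)m_T(K)=K\gamma_K(T)$. Hmm — this yields $K\gamma_K(T)$, not $2K\gamma_K(T)$; the factor-of-two discrepancy will have to be traced to the normalization of local time (semimartingale vs. occupation-density convention), and I expect this bookkeeping to be the one genuinely fiddly point. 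Modulo that, (b) is immediate once (\ref{eq:simpli}) and Theorem \ref{chelou} are in hand.

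For identity (a), the plan is to prove the forward PDE $\partial_T C^{-}(T,K)=\theta_T(K)\,\partial_K^2 C^{-}(T,K)$ directly. The cleanest route is: apply Tanaka's formula to $(K-M_T)^{+}$ to get $\mathbb{E}[(K-M_T)^{+}]=(K-a)^{+}+\tfrac12\mathbb{E}[\mathcal{L}_T^K(M)]$, differentiate in $T$ to obtain $\partial_T C^{-}(T,K)=\tfrac12\partial_T\mathbb{E}[\mathcal{L}_T^K(M)]$, and then use (\ref{eq:simpli}) to rewrite the right side as $\tfrac12\theta_T(K)m_T(K)$. On the other hand, the general fact that $\partial_K^2 C^{-}(T,K)=m_T(K)$ (valid under $(H_1)$: $C^{-}(T,\cdot)$ is convex in $K$ with second derivative equal to the law of $M_T$) lets me substitute $m_T(K)=\partial_K^2 C^{-}(T,K)$, giving $\partial_T C^{-}(T,K)=\theta_T(K)\,\partial_K^2 C^{-}(T,K)$, which is (a). This is really just Dupire's formula read in the put variable, and the hypotheses $(H_1)$, $(H_2)$ are exactly what is needed to make $m_T$ and $\theta_T$ well-defined and continuous so that all the differentiations are legitimate.

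The main obstacle I anticipate is justifying the interchange of $\partial_T$ (or $\partial_K$) with the expectation $\mathbb{E}[\mathcal{L}_T^K(M)]$, i.e. the smoothness in $(T,K)$ of $(T,K)\mapsto\mathbb{E}[\mathcal{L}_T^K(M)]$; this is precisely what $(H_1)$ and $(H_2)$ buy us, since (\ref{eq22}) identifies $\int_0^\infty dK\,f(K)\mathbb{E}[\mathcal{L}_T^K(M)]=\int_0^\infty dK\,f(K)\int_0^T ds\,m_s(K)\theta_s(K)$ for all Borel $f\geq 0$, hence $\mathbb{E}[\mathcal{L}_T^K(M)]=\int_0^T ds\,m_s(K)\theta_s(K)$ for a.e.\ $K$, and continuity of $m\theta$ upgrades this to every $(T,K)$ with $\partial_T\mathbb{E}[\mathcal{L}_T^K(M)]=\theta_T(K)m_T(K)$. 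Once that identity is secured (it is essentially step b) of the proof of Theorem \ref{chelou}), both (a) and (b) follow by the two one-line differentiations above, and the only remaining care is the local-time normalization constant alluded to in the previous paragraph.
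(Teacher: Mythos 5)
Your argument is correct and is essentially the paper's own proof: Tanaka's formula (\ref{tanaka}) gives $\frac{\partial}{\partial T}C^{-}(T,K)=\tfrac12\frac{\partial}{\partial T}\mathbb{E}[\mathcal{L}_T^K]$, formula (\ref{eq:simpli}) gives $\frac{\partial}{\partial T}\mathbb{E}[\mathcal{L}_T^K]=\theta_T(K)m_T(K)=2K\gamma_K(T)$, and $\frac{\partial^2}{\partial K^2}C^{-}(T,K)=m_T(K)$. The factor of two you flag is not a defect of your argument, and it cannot be absorbed into the local-time normalization (the conventions (\ref{eq:etoileo}) and (\ref{tanaka}) are the standard, mutually consistent ones): the paper's own proof yields exactly your $\frac{\partial}{\partial T}C^{-}(T,K)=\tfrac12\,\theta_T(K)\frac{\partial^2}{\partial K^2}C^{-}(T,K)=K\gamma_K(T)$, so equality (a) as printed in the theorem statement is off by a factor $2$ (consistently with the classical Dupire formula, which also carries the $\tfrac12$).
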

\underline{Comment :} The identity (a) is also found, up to minor
differences, in Klebaner \cite{kleb}. In general, connections between local
times and the Black-Scholes and Dupire formulae had been noticed for quite
some time by several
authors. However, the identity (b) seems, to the best of our knowledge, to
be new.

\begin{proof}

Thanks to (\ref{tanaka}), one has :
\begin{equation}
\frac{\partial}{\partial
  T}(C^{-}(T,K))=\frac{1}{2}\frac{\partial}{\partial T}\mathbb{E}[\mathcal{L}_T^K],
\end{equation}
and, clearly :
\begin{equation}
\frac{\partial^2}{\partial K^2} C^-(T,K)=m_T(K).
\end{equation}
From (\ref{eq:simpli}), we obtain :
\[
\frac{\partial}{\partial T}\mathbb{E}[\mathcal{L}_T^K]=\theta_T(K)m_T(K)=2K\gamma_K(T).
\]
\end{proof}
We refer to \cite{dupire1} and \cite{dupire2} for the ``true'' Dupire formula.
\subsection{Specialising to transient diffusions}
\subsubsection{General framework}
We present here some results which can be found in \cite{pityor}, chapter 6.

We consider the canonical realisation of a transient diffusion
\[(R_t,t\geq 0;\mathbb{P}_x,x\in\mathbb{R}^{+})\: \mathrm{on}\:
\mathcal{C}(\mathbb{R}^{+},\mathbb{R}^{+}).\]

For simplicity, we suppose that :
\renewcommand{\theenumi}{\roman{enumi}.}
\renewcommand{\labelenumi}{\theenumi}
\begin{enumerate}
\item $\mathbb{P}_x\left(T_0<\infty\right)=0$, $x>0$;
\item $\mathbb{P}_x\left(\lim_{t\to\infty} R_t=\infty\right)=1$, $x>0$.
\end{enumerate}
As a consequence of (i) and (ii), there exists a scale function
$s$ for this diffusion which satisfies $s(0^{+})=-\infty$ and
$s(\infty)=0$. Let $\Gamma$ be the infinitesimal generator of the
diffusion\footnote{This is the classical Itô-Mc Kean presentation; see also Borodin-Salminen \cite{borodin}; for ``practical'' cases, see 2.3.2}, and take the speed measure $m$ to be such that :
\[\Gamma=\frac{1}{2}\frac{d}{dm}\frac{d}{ds}\:.\]
Let \[g_y=\sup\{t>0, R_t=y\}.\]

Then, by applying the results of the previous section to $M_t=-s(R_t)$, we may obtain
the following theorem :
\begin{theorem}[Pitman-Yor, \cite{pityor}, section 6]
For all $x,y>0$,
\begin{equation}\label{eq:croix2.3}
\mathbb{P}_x\left(g_y\in dt\right)=\frac{-1}{2s(y)}\,p_t^{\bullet}(x,y)\,dt,
\end{equation}
where $p_t^{\bullet}(x,y)\left(=p_t^{\bullet}(y,x)\right)$ is the density of the semigroup $P_t(x,dy)$ with
respect to $m(dy)$.
\end{theorem}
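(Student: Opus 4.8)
The plan is to exhibit $g_y$ as a last passage time $\mathcal{G}_K$ of a positive local martingale tending to $0$, and then to apply Theorem~\ref{chelou}, the real work being to compute the product $\theta_t(K)m_t(K)$ appearing there in terms of the transition density $p_t^\bullet$. Assume, as is implicit here, that $s$ and $m$ have continuous strictly positive densities $s',m'$ (the general regular case reduces to this via the time-changed-Brownian-motion representation of $R$ in natural scale). Put $M_t:=-s(R_t)$. By definition of the scale function, $s(R_t)$, hence $M$, is a continuous local martingale under each $\mathbb{P}_x$; by (i) $R$ stays in $(0,\infty)$ and $s$ is strictly increasing with $s(\infty)=0$, so $M_t>0$; by (ii) $R_t\to\infty$, so $M_t\to 0$. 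Thus $M\in\mathcal{M}_0^{+}$, with $M_0=-s(x)=:a$. Since $s$ is a strictly increasing homeomorphism of $(0,\infty)$ onto $(-\infty,0)$, $\{t:R_t=y\}=\{t:M_t=-s(y)\}$, so $g_y=\mathcal{G}_K$ with $K:=-s(y)>0$.

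Next I would compute $\theta_t(K)m_t(K)$. Applying It\^o's formula to $s(R_t)^2$ and using $\Gamma(s^2)=s'/m'$ (a routine computation from $\Gamma=\tfrac12\frac{d}{dm}\frac{d}{ds}$), a comparison of finite-variation parts gives $\langle s(R)\rangle_t=\int_0^t(s'/m')(R_u)\,du$, so that $d\langle M\rangle_t=\sigma_t^2\,dt$ with $\sigma_t^2=(s'/m')(R_t)$, a deterministic function of $M_t$; this gives $(H_2)$, and $(H_1)$ holds since $R_t$ has a continuous density ($p_t^\bullet$ being assumed continuous). Writing $y=s^{-1}(-K)$, we then have $\theta_t(K)=(s'/m')(y)$; and since $\mathbb{P}_x(R_t\in dy')=p_t^\bullet(x,y')\,m'(y')\,dy'$, pushing this density forward by $y'\mapsto-s(y')$ (Jacobian of modulus $s'(y')$) gives $m_t(K)=p_t^\bullet(x,y)\,m'(y)/s'(y)$. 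Hence \[ \theta_t(K)\,m_t(K)=p_t^\bullet(x,y). \] Equivalently: the Tanaka local time of $M$ satisfies $\mathcal{L}_t^{-s(y)}(M)=L_t^y$, where $L^y$ is the diffusion local time of $R$ normalised by $\int_0^t f(R_u)\,du=\int f(y)L_t^y\,m(dy)$, and $\mathbb{E}_x[L_t^y]=\int_0^t p_u^\bullet(x,y)\,du$ by Fubini, which recovers (\ref{eq:simpli}).

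Substituting $\theta_t(K)m_t(K)=p_t^\bullet(x,y)$, $K=-s(y)$, $a=-s(x)$ into (\ref{eq:theo}) yields \[ \mathbb{P}_x(g_y\in dt)=\Bigl(1-\tfrac{s(x)}{s(y)}\Bigr)^{+}\epsilon_0(dt)+\frac{-1_{\{t>0\}}}{2s(y)}\,p_t^\bullet(x,y)\,dt, \] which is the claimed formula on $\{t>0\}$; the atom vanishes when $x\le y$ and equals $\mathbb{P}_x(T_y=\infty)\,\epsilon_0(dt)$ when $x>y$, so the density statement indeed holds for all $x,y>0$. The one delicate point I anticipate is the constant in $d\langle s(R)\rangle_t=(s'/m')(R_t)\,dt$ — i.e. matching the speed-measure normalisation to the convention $\Gamma=\tfrac12\frac{d}{dm}\frac{d}{ds}$ used here (the It\^o--McKean convention $\Gamma=\frac{d}{dm}\frac{d}{ds}$ would bring in an extra factor $2$) — together with, on the alternative route, the matching normalisation of $L^y$.
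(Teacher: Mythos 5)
Your proof is correct, but your primary route is not the one the paper takes. The paper never invokes Theorem~\ref{chelou} (and so never needs $(H_1)$, $(H_2)$, nor the factorisation $\theta_t(K)m_t(K)$): it applies Theorem~\ref{th:1} to $M_t=-s(R_t)$, uses Tanaka's formula to reduce the claim to $\frac{\partial}{\partial t}\mathbb{E}_x\left[\mathcal{L}_t^{-s(y)}(M)\right]=p_t^{\bullet}(x,y)$, and gets this directly from the occupation density formula of the diffusion $R$ with respect to $m(dy)$ together with the identity $l_t^y=\mathcal{L}_t^{-s(y)}(M)$ --- which is exactly the ``equivalent'' alternative you sketch in your second paragraph. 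Your main route instead computes $d\langle M\rangle_t=(s'/m')(R_t)\,dt$, notes that $\sigma_t^2$ is a function of $M_t$ so that $\theta_t(K)=(s'/m')(y)$, obtains $m_t(K)=p_t^{\bullet}(x,y)\,m'(y)/s'(y)$ by a change of variables, and lets the product collapse to $p_t^{\bullet}(x,y)$. What the paper's route buys is economy of hypotheses: it requires only the existence of $p_t^{\bullet}$, not that $M_t$ admit a continuous density nor the joint continuity demanded by $(H_1)$--$(H_2)$, and it works with $m(dy)$ as is, without assuming $s$ and $m$ have densities. What yours buys is that it exhibits the Pitman--Yor formula as a literal specialisation of the general formula (\ref{eq:theo}) and makes the Markovian simplification of $\theta_t(K)$ explicit. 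Your normalisation check (the convention $\Gamma=\frac{1}{2}\frac{d}{dm}\frac{d}{ds}$ giving $d\langle M\rangle_t=(s'/m')(R_t)\,dt$, consistent with $\frac{dm}{dy}=\frac{1}{s'(y)a(y)}$ in the ``practical'' subsection) is right, and your remark about the atom at $t=0$ correctly reconciles (\ref{eq:theo}) with the statement of (\ref{eq:croix2.3}).
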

\begin{proof}
\begin{enumerate}
\item[a)] Previous arguments show that :
\begin{equation}\label{eq:etoile}
\mathbb{P}_x\left(g_y\leq t\right)=\mathbb{E}_x\left[\left(1-\frac{M_t}{\left(-s(y)\right)}\right)^{+}\right],
\end{equation}
by changing the space variable : $\mu=s(x)$, which corresponds to
putting the diffusion $R$ in its natural scale, i.e : replacing it by $M_t=-s(R_t)$.
\item[b)] Tanaka's formula now yields, from (\ref{eq:etoile}) :
\begin{equation}
\mathbb{P}_x\left(g_y\leq t\right)=\left(1-\frac{s(x)}{s(y)}\right)^{+}-\frac{1}{2s(y)}\mathbb{E}\left[\mathcal{L}_t^{-s(y)}(M)\right].
\end{equation}
Formula (\ref{eq:croix2.3}) will now follow from :
\begin{equation}\label{eq:27}
\frac{\partial}{\partial t}\left(\mathbb{E}_x\left[\mathcal{L}_t^{-s(y)}(M)\right]\right)=p_t^{\bullet}(x,y).
\end{equation}
\end{enumerate}
In turn, this formula follows from the density of occupation formula for our diffusion $R$ : for any $f:\mathbb{R}^{+}\to\mathbb{R}^{+}$, Borel :
\begin{equation}\label{eq28}
\int_0^t ds f(R_s)=\int\,m(dy)\:f(y)\,l_t^y,
\end{equation}
where $(l_t^y)$ is the family of diffusion local times (see, e.g., \cite{borodin}, II.13 and V.).
On the $LHS$, we obtain :
\begin{equation}
\mathbb{E}_x\left[\int_0^t ds f(R_s)\right]=\int\,m(dy)\int_0^t ds\,p_s^{\bullet}(x,y)\:f(y).
\end{equation}
Thus, (\ref{eq28}) implies that :
\begin{equation}\label{eq30}
\mathbb{E}_x\left[l_t^y\right]=\int_0^t ds\,p_s^{\bullet}(x,y)
\end{equation}
On the other hand, there is the following relationship between the diffusion and martingale local times :
\begin{equation}\label{eq31}
l_t^y=\mathcal{L}_t^{-s(y)}(M).
\end{equation}
Finally, formula (\ref{eq:croix2.3}) follows from (\ref{eq30}) and (\ref{eq31}).
\end{proof}

\subsubsection{In practice $\dots$}
In practice, it may be useful to write formula (\ref{eq:croix2.3}) in terms
of the density $p_t(x,y)$ of the semigroup $P_t(x,dy)$ with respect to
the Lebesgue measure $dy$ (and not $m(dy)$, which may not be so ``natural'' as a reference measure).

We assume that the infinitesimal generator is of the form :
\begin{equation}
\Gamma=\frac{1}{2}a(x)\frac{d^2}{dx}+b(x)\frac{d}{dx}
\end{equation} 
Consequently :
\begin{equation}
  \frac{dm}{dy}=\frac{1}{s'(y)a(y)},
\end{equation}
and
\begin{equation}
  p_t^{\bullet}(x,y)=p_t(x,y)s'(y)a(y),
\end{equation}
so that formula (\ref{eq:croix2.3}) becomes :
\begin{equation}\label{eq:point}
\mathbb{P}_x\left(g_y\in dt\right)=-\left(\frac{s'(y)a(y)}{2s(y)}\right)\,p_t(x,y)dt.
\end{equation}

\begin{exo}
Recover the law of $G_a^{(\nu)}$ from formula (\ref{eq:point})
\end{exo}
\begin{exo}
Write explicitly formula (\ref{eq:point}) for $(R_t)$ a transient
$BES$ process, i.e : the $\mathbb{R}^{+}$-valued diffusion with
infinitesimal generator :
\[\frac{1}{2}\frac{d^2}{dx}+\frac{\delta-1}{2}\frac{d}{dx}, \quad\delta>2.\]
Answer : 
\[g_a(R)\overset{\underset{\mathrm{law}}{}}{=}\frac{a^2}{2\gamma_\nu},\]
when $R_0=0$. See \cite{yor4}.
\end{exo}

\subsection{Other examples of explicit computations of the law of
  $\mathcal{G}_K$}

We present here the following examples : the killed Brownian motion, the inverse of
a 3-dimensional Bessel process, and an example of an inhomogeneous Markov
process for which we can compute $m_t(x)$. For more details, see
\cite{madyorroy2}. These examples will be detailed in the appendix of Part B, in
section 11.
\begin{ex}
$M_t=B_{t\wedge T_0}$, where $(B_t,t\geq 0)$ is a Brownian motion starting
from $1$ and $T_0=\inf\{t\geq 0, B_t=0\}$.
Then for every $K\leq 1$, 
\begin{equation}
\mathcal{G}_K(M)\overset{\underset{\mathrm{law}}{}}{=}\frac{\mathbf{U}_K^2}{\mathbf{N}^2},
\end{equation}
where $\mathbf{U}_K$ is a uniform r.v. on $[1-K,1+K]$ and independent
from $\mathbf{N}$ a standard gaussian r.v.
\end{ex}
\begin{ex}
$M_t=\frac{1}{R_t}$ where $(R_t,t\geq 0)$ is a 3-dimensional Bessel
process starting from $1$.
Then for every $K<1$, 
\begin{equation}
\mathcal{G}_K(M)\overset{\underset{\mathrm{law}}{}}{=}\frac{\tilde{\mathbf{U}}_K^2}{\mathbf{N}^2},
\end{equation}
where $\tilde{\mathbf{U}}_K$ is a uniform r.v. on
$[\frac{1}{K}-1,\frac{1}{K}+1]$, assumed to be independent from $\mathbf{N}$ a standard gaussian r.v.
\end{ex}
\begin{exo}
$M_t=\cosh{(B_t)}\exp{\left(-\frac{t}{2}\right)}$ where $(B_t,t\geq 0)$ is a
Brownian motion starting from $0$.
Use Theorem \ref{chelou} to compute the law of $\mathcal{G}_K$.
\end{exo}
\begin{exo}
Draw a Black-Scholes-last time ( :BS-LT) Table as follows :\\
\extrarowheight=12pt
\begin{center}
\begin{tabular}{|c|c|}
\hline
$M_t$ & $\mathcal{G}_1$ \\[0.3cm]
\hline
$\mathcal{E}_t$ & $4B_1^2$ \\[0.3cm]
\hline
? & $c\gamma_a$ \\[0.3cm]
\hline
$\exp{\left(-\frac{2B_tB_1}{1-t}\right)}$ &
$\beta_{\frac{1}{2},\frac{1}{2}}$ \\[0.3cm]
\hline
? & $\beta_{a,b}$ \\[0.3cm] 
\hline
\end{tabular}
\end{center}
In this Table, $\beta_{a,b}$ denotes a beta variable with parameters $(a,b)$, $\gamma_a$ a gamma variable with parameter $a$. $\exp{\left(-\frac{2B_tB_1}{1-t}\right)}$, $t<1$, is a martingale with respect to $\mathcal{F}_t\vee\sigma(B_1)$.
\end{exo}

\newpage
\section{Note 3 : Representation of some particular Azéma
  supermartingales}
\subsection{A general representation theorem and \\our particular case}
Let $L=\sup\{t, R_t\in\Gamma\}$, where $(R_t)$ is a transient
diffusion, and $\Gamma$ a compact set in $\mathbb{R}^+$.
It is interesting to describe the pre-$L$ process : $(R_t,t\leq L)$
and the post-$L$ process : $(R_{L+t},t\geq 0)$; this has been the
subject of many studies in the Markovian literature
(\cite{millar1}, \cite{millar2}; \cite{williams} for Brownian motion).
The enlargement of filtration technique shows that these descriptions
``follow'' once the Azéma supermartingale :
\[
Z_t=Z_t^L=\mathbb{P}\left(L>t|\mathcal{F}_t\right)
\]
has been computed ``explicitly''.

For the moment, we give a general representation of $(Z_t)$ in the
following framework :
let $L$ be the end of a previsible set (on a given filtered probability
space) such that :
\begin{equation}\tag{$CA$}
\begin{cases}
(C) \quad \mathrm{all} \:\mathcal{F}_t \:\mathrm{martingales \:are\: continuous};\\
(A) \quad \mathrm{for \:any\: stopping \:time} \:T, \mathbb{P}(L=T)=0.
\end{cases}
\end{equation}
($C$ stands for continuous, and $A$ for avoiding (stopping times)).
\begin{theorem}\label{th:az}[\cite{manyor} or \cite{nikyor}]
Under $(CA)$, there exists a unique positive continuous local martingale
$(N_t,t\geq 0)$, with $N_0=1$, such that :
\begin{equation}\label{eq:az}
\mathbb{P}\left(L>t|\mathcal{F}_t\right)=\frac{N_t}{S_t},
\end{equation}
where $S_t=\sup_{s\leq t}N_s, t\geq 0$.
\end{theorem}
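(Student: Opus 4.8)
The plan is to construct $N$ explicitly via Doob's supermartingale decomposition of the Azéma supermartingale $Z_t = \mathbb{P}(L>t\mid\mathcal F_t)$, and then identify the structure $Z_t = N_t/S_t$ by exploiting the special form of the increasing process that arises for ends of previsible sets. First I would record the basic properties of $Z$: under hypothesis $(C)$ all martingales are continuous, so $Z$ is a continuous supermartingale on $[0,\infty)$ with $Z_0 = \mathbb{P}(L>0) = 1$ (one should note $L>0$ a.s., which follows since $\mathbb{P}(L=0)=0$ by avoidance applied to $T\equiv 0$); and $Z_\infty = 0$ since $L$ is finite a.s. (it is the end of a previsible set and $R_t\to\infty$, so eventually $R_t\notin\Gamma$ — more generally one assumes $L<\infty$). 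By the Doob--Meyer decomposition, write $Z_t = \mu_t - A_t$ where $\mu$ is a continuous uniformly integrable martingale with $\mu_0=1$ and $A$ is continuous increasing, $A_0=0$, $A_\infty = \mu_\infty$ (using $Z_\infty=0$). The key point, standard in the theory of honest times / ends of previsible sets, is that $A$ increases only on the closure of $\{t: Z_t = 1\}$ — equivalently, $dA_t$ is carried by $\{Z_t = S_t^Z\}$ for an appropriate running supremum — because on $\{L>t\}$ the set $\{L\le s\}$ for $s$ slightly bigger than $t$ cannot have just "closed up," so the compensator only charges times where $Z$ is at its current maximum. This is where hypothesis $(A)$ enters: avoidance guarantees $Z$ has no jumps at the relevant times and that $L$ is the end of the previsible set $\{t : Z_{t-} = 1\}$ or its analogue, giving the flatness of $A$ off the maximum.

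**Next I would** turn the decomposition into the multiplicative representation. Set $N_t = \mathcal{E}(\,\cdot\,)$... more concretely, I expect $N$ to be defined so that $Z_t = N_t / S_t$ with $S_t = \sup_{s\le t} N_s$; equivalently $\log S_t - \log N_t = -\log Z_t$, and one wants $d(\log S_t)$ to be carried by $\{N_t = S_t\} = \{Z_t = 1\}$. Given the additive decomposition $Z = \mu - A$ with $dA$ supported on $\{Z=1\}$, I would solve the SDE $dN_t = \frac{N_t}{Z_t}\,dZ_t^{(m)}$ where $Z^{(m)} = \mu$ is the martingale part — i.e. $N_t = \exp\big(\int_0^t \frac{d\mu_s}{Z_s} - \frac12\int_0^t \frac{d\langle\mu\rangle_s}{Z_s^2}\big)$, which is a positive continuous local martingale with $N_0=1$ (one checks $Z$ stays positive on $[0,L)$ so the stochastic logarithm is well-defined up to $L$, and $N$ is taken constant afterwards, or the formula extends since $A$ absorbs the degeneracy). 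Then Itô's formula applied to $Z_t = N_t \cdot (S_t/N_t)^{-1}$, together with the fact that $S$ increases only when $N=S$, i.e. only when $Z=1$, matches the finite-variation parts: the bracket term from $d(1/N)$ against $dN$ reproduces $-\frac12 d\langle\mu\rangle/Z$, and the $dS$ term is forced to coincide with $-dA$. So the representation holds with this $N$.

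**For uniqueness,** suppose $Z_t = N_t/S_t = N_t'/S_t'$ with $N, N'$ two positive continuous local martingales started at $1$ and $S, S'$ their running suprema. Since $S_t/N_t = S_t'/N_t'$, the process $N_t'/N_t = S_t'/S_t$ has finite variation; being a ratio of two positive continuous local martingales is not automatically a martingale, but one can argue that $N'/N$ is a continuous local martingale of finite variation (using that $\langle N,N'\rangle$ is controlled — more carefully, $d(N'/N) = \frac{1}{N}dN' - \frac{N'}{N^2}dN + \ldots$ and the finite-variation constraint kills the martingale part), hence constant, hence $\equiv N_0'/N_0 = 1$, so $N=N'$. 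Alternatively and more cleanly: $\log N_t' - \log N_t = \log S_t' - \log S_t$ has finite variation, and the martingale parts of $\log N$ and $\log N'$ must therefore agree, which pins down $N'/N$ as a process of finite variation starting at $1$ whose stochastic logarithm is a continuous local martingale — forcing it to be identically $1$.

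**The main obstacle** I anticipate is the justification that the compensator $A$ in the Doob--Meyer decomposition of $Z$ increases only on $\{Z=1\}$ — this is the heart of the theorem and is exactly where the "end of a previsible set" hypothesis is used (as opposed to $L$ being an arbitrary honest time, for which $A$ would be carried by $\{Z_{t-}=1\}$ but $Z$ could jump down from $1$; hypothesis $(A)$ removes the jumps). Establishing this rigorously requires the section-theorem machinery for previsible sets and the characterization of $L$ via $Z$, which I would either cite from \cite{manyor} or \cite{nikyor} or reproduce as a lemma. A secondary technical point is handling the behaviour at and after time $L$ (where $Z$ hits $0$): one must check the stochastic exponential defining $N$ is a genuine positive local martingale on all of $[0,\infty)$, which uses that $A$ exactly compensates the vanishing of $Z$, so that $N$ does not blow up or hit zero.
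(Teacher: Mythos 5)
The paper does not actually prove Theorem \ref{th:az}: it explicitly defers to \cite{manyor} and \cite{nikyor}, and records only the converse computation (\ref{eq:2point}) (the additive decomposition of $N_t/S_t$ obtained from It\^o's formula) together with the verification of the special case $L=\mathcal{G}_K$ in Proposition \ref{prop:az}. Your proposal is essentially the proof from \cite{nikyor}: take the Doob--Meyer decomposition $Z_t=\mu_t-A_t$ of the continuous Az\'ema supermartingale, show that $dA_t$ is carried by $\{Z_t=1\}$ (this is where both $(C)$ and $(A)$ enter, via the predictable section theorem applied to the previsible set whose end is $L$: for a predictable $T$ with graph in that set, $Z_{T-}+\mathbb{P}(L=T\,|\,\mathcal{F}_{T-})=1$, and $(A)$ kills the second term), and then pass to a multiplicative decomposition. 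So the route is the right one, and you have correctly isolated the single step that carries all the weight.

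Two remarks on the execution. First, your stochastic-exponential definition of $N$ can be short-circuited: once $dA$ is known to be carried by $\{Z=1\}$, the exponent $\int_0^t dA_s/Z_s$ equals $A_t$, so your $N$ is simply $N_t=Z_te^{A_t}$ with $S_t=e^{A_t}$; then $dN_t=e^{A_t}d\mu_t+(Z_t-1)e^{A_t}dA_t=e^{A_t}d\mu_t$ shows $N$ is a positive continuous local martingale, and $N_s\le e^{A_s}\le e^{A_t}$ with equality on the support of $dA$ gives $\sup_{s\le t}N_s=e^{A_t}$ directly, avoiding any worry about the stochastic logarithm degenerating as $Z\to 0$. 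Second, the first version of your uniqueness argument is not valid as stated: $N'/N$ is a ratio of positive local martingales and there is no reason for it to be a local martingale, so ``continuous local martingale of finite variation, hence constant'' does not apply to it. Your alternative version is the correct one and should replace it entirely: $\log N'-\log N=\log S'-\log S$ is continuous, adapted and of finite variation, so by uniqueness of the decomposition of a continuous semimartingale the local-martingale parts $\int dN_s/N_s$ and $\int dN'_s/N'_s$ coincide; their brackets then coincide, hence so do the finite-variation parts $-\frac12\int d\langle N\rangle_s/N_s^2$ and $-\frac12\int d\langle N'\rangle_s/(N'_s)^2$, giving $N\equiv N'$ up to the common time at which $Z$, hence both $N$ and $N'$, is absorbed at $0$.
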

\begin{exo}
\begin{enumerate}
\item[a)] Give the additive decomposition of the supermartingale : $\frac{N_t}{S_t}$ as :
\[\mathbb{E}\left[\log(S_\infty)|\mathcal{F}_t\right]-\log(S_t).\]
Hint : from Itô's formula :
\begin{eqnarray*}
\frac{N_t}{S_t}&=& 1 + \int_0^t \frac{dN_s}{S_s}-\int_0^t \frac{N_sdS_s}{S_s^2}\\
  &=& 1 + \int_0^t \frac{dN_s}{S_s}-\int_0^t \frac{dS_s}{S_s},
\end{eqnarray*}
since $dS_s$ only charges the set $\{s, N_s=S_s\}$. One obtains :
\begin{equation}\label{eq:2point}
\frac{N_t}{S_t}=1 + \int_0^t \frac{dN_s}{S_s}-\log{(S_t)}.
\end{equation}

\item[b)] Prove that $\log(S_\infty)$ is distributed exponentially.\\
Answer :
\[\log(S_\infty)\overset{\underset{\mathrm{law}}{}}{=}\log{\left(\frac{1}{\mathbf{U}}\right)}.\]
\item[c)] We also note that the martingale
  $\mathbb{E}\left[\log(S_\infty)|\mathcal{F}_t\right]$ belongs to
  $BMO$,\\ since :
\[
\mathbb{E}\left[\log(S_\infty)-\log(S_t)|\mathcal{F}_t\right]\leq 1
.\]
\end{enumerate}
\end{exo}
Rather than trying to prove Theorem \ref{th:az}, we now show how
our previous formula (\ref{eq:1bis}), i.e :
\begin{equation}
\mathbb{P}\left(\mathcal{G}_K\leq t|\mathcal{F}_t\right)=\left(1-\frac{M_t}{K}\right)^{+},
\end{equation}
or equivalently :
\begin{equation}
\mathbb{P}\left(\mathcal{G}_K>t|\mathcal{F}_t\right)=\left(\frac{M_t}{K}\right)\wedge 1
\end{equation}
is a particular case of formula (\ref{eq:az}).
\begin{proposition}\label{prop:az}
Let $M_0\geq K$, there is the representation :
\begin{equation}\label{eq:diese}
\left(\frac{M_t}{K}\right)\wedge 1=\frac{N_t}{S_t},
\end{equation}
where 
\begin{equation}
\begin{cases}
N_t&=\left(\frac{M_t}{K}\wedge 1\right)\exp{\left(\frac{1}{2K}\mathcal{L}_t^K\right)},\\[0.1cm]
S_t&=\sup_{s\leq t}N_s=\exp{\left(\frac{1}{2K}\mathcal{L}_t^K\right)}.
\end{cases}
\end{equation}
\end{proposition}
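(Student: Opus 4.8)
The plan is to verify directly that the two candidate processes $N_t$ and $S_t$ defined in the statement do what is claimed: that $N$ is a positive continuous local martingale with $N_0=1$, that $S_t=\sup_{s\le t}N_s=\exp\bigl(\frac{1}{2K}\mathcal{L}_t^K\bigr)$, and that the ratio $N_t/S_t$ equals $(M_t/K)\wedge 1$. Since the last identity is immediate from the definitions (the exponential factors cancel), the real content is in the first two points, and the whole argument rests on Tanaka's formula applied to the convex function $x\mapsto x\wedge K$, equivalently $x\mapsto -(K-x)^+$ plus a linear term.

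First I would write, via Tanaka's formula for $(K-M_t)^+$ (cf.\ (\ref{tanaka})),
\begin{equation*}
M_t\wedge K = M_0\wedge K + \int_0^t 1_{\{M_s<K\}}\,dM_s - \tfrac12\,\mathcal{L}_t^K(M).
\end{equation*}
Dividing by $K$ and writing $V_t:=\frac{1}{2K}\mathcal{L}_t^K$, this says $\frac{M_t}{K}\wedge 1 = 1 + \int_0^t \frac{1}{K}1_{\{M_s<K\}}\,dM_s - V_t$, using $M_0\ge K$. Now set $N_t=\bigl(\frac{M_t}{K}\wedge 1\bigr)e^{V_t}$ and apply the integration-by-parts formula; since $V$ is a continuous finite-variation process, $dN_t = e^{V_t}\,d\bigl(\frac{M_t}{K}\wedge 1\bigr) + \bigl(\frac{M_t}{K}\wedge 1\bigr)e^{V_t}\,dV_t$. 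Substituting the Tanaka decomposition, the two $dV_t$ contributions are $-e^{V_t}\,dV_t + \bigl(\frac{M_t}{K}\wedge 1\bigr)e^{V_t}\,dV_t$. The key observation is that $dV_t = \frac{1}{2K}d\mathcal{L}_t^K$ is carried by the set $\{t: M_t=K\}$, on which $\frac{M_t}{K}\wedge 1 = 1$; hence these two terms cancel exactly, and $dN_t = \frac{e^{V_t}}{K}1_{\{M_s<K\}}\,dM_s$, so $N$ is a continuous local martingale with $N_0 = (M_0/K)\wedge 1 = 1$. Positivity of $N$ is clear since $M\ge0$ forces $(M_t/K)\wedge 1\ge 0$, and a short argument (e.g.\ $N$ hits $0$ exactly when $M$ does, and there $\mathcal{L}^K$ has stopped growing) shows it stays nonnegative; in fact one sees $N_t>0$ as long as $M$ has not been absorbed.

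Next I would identify $S_t=\sup_{s\le t}N_s$. Since $\frac{M_s}{K}\wedge 1\le 1$ with equality precisely on $\{M_s=K\}$, and $V$ is nondecreasing, we have $N_s\le e^{V_s}\le e^{V_t}$ for all $s\le t$, so $S_t\le e^{V_t}$. For the reverse inequality, note that $\mathcal{L}^K$ increases only on the closure of $\{s:M_s=K\}$, so for each $t$ there is a sequence $s_n\uparrow$ (points where the local time is ``active'', or $t$ itself if $V$ is constant near $t$) with $M_{s_n}=K$, hence $N_{s_n}=e^{V_{s_n}}\uparrow e^{V_t}$; therefore $S_t\ge e^{V_t}$. (If $V_t=0$, i.e.\ $M$ has not yet reached $K$, then $M_t>K$ need not hold — but here $M_0\ge K$, so actually $M$ starts at or above $K$; in the boundary case $M_0=K$ the local time starts immediately, and if $M_0>K$ then $N_0=1=e^{V_0}$ already, and $N$ stays $\le 1$ until $M$ down-crosses $K$.) Combining, $S_t=e^{V_t}=\exp\bigl(\frac{1}{2K}\mathcal{L}_t^K\bigr)$, and then $N_t/S_t = \bigl(\frac{M_t}{K}\wedge 1\bigr)e^{V_t}/e^{V_t} = \frac{M_t}{K}\wedge 1$, which is (\ref{eq:diese}).

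The main obstacle is the careful handling of the support of $d\mathcal{L}_t^K$ and the behaviour when $M_0=K$ or when $M$ is absorbed at $0$: one must check that the cancellation of the finite-variation terms is genuinely exact (this is where the identity $\frac{M_s}{K}\wedge 1=1$ on $\{M_s=K\}$ is used) and that the supremum $S_t$ is attained along the local-time support rather than being strictly larger. Everything else is a routine application of Tanaka's formula and the product rule, and the link to Theorem \ref{th:az} is then just the remark that $\{\mathcal{G}_K>t\}$ has, by (\ref{eq:1bis}), conditional probability $(M_t/K)\wedge 1$, exhibited here in the canonical form $N_t/S_t$ of (\ref{eq:az}).
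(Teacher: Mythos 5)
Your proof is correct, but it takes a genuinely different route from the paper's. The paper does not verify the candidate $N$ directly: it takes for granted, from Theorem \ref{th:az}, that a representation $(M_t/K)\wedge 1 = N_t/S_t$ with $N$ a positive continuous local martingale and $S$ its running supremum exists, writes its canonical decomposition $\frac{N_t}{S_t}=1+\int_0^t\frac{dN_s}{S_s}-\log (S_t)$ from the preceding exercise (formula (\ref{eq:2point})), and matches it term by term against the Tanaka decomposition (\ref{eq:1point}) of $\frac{M_t}{K}\wedge 1$, invoking uniqueness of the decomposition of a continuous supermartingale into a local martingale minus a continuous increasing process; this yields $\log (S_t)=\frac{1}{2K}\mathcal{L}_t^K$ and hence $N_t=\left(\frac{M_t}{K}\wedge 1\right)S_t$ in two lines. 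You instead construct $N_t$ and $S_t$ explicitly and check everything by hand: that $N$ is a continuous local martingale with $N_0=1$ (product rule plus the fact that $d\mathcal{L}_t^K$ is carried by $\{M_t=K\}$, where $\frac{M_t}{K}\wedge 1=1$, so the two finite-variation contributions cancel exactly), and that $\sup_{s\le t}N_s=\exp\left(\frac{1}{2K}\mathcal{L}_t^K\right)$ because the supremum is attained at the last point of the support of $d\mathcal{L}^K$ before $t$ (or at $t=0$ when the local time has not yet started). Your version is longer but self-contained: it does not lean on the unproven Theorem \ref{th:az}, and in fact it establishes the existence half of that theorem in this special case, whereas the paper's argument is an identification conditional on it. Both arguments ultimately rest on the same two facts, Tanaka's formula and the support property of $d\mathcal{L}^K$, and your handling of the only delicate points (the cancellation of the drift terms and the attainment of the supremum) is sound.
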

\begin{proof}
From Tanaka's formula :
\begin{equation}\label{eq:1point}
\frac{M_t}{K}\wedge 1=1+\frac{1}{K}\int_0^t 1_{\{M_s\leq K\}}dM_s -\frac{1}{2K}\mathcal{L}_t^K(M).
\end{equation}
The comparison of formulae (\ref{eq:1point}) and (\ref{eq:2point})
gives :
\begin{equation}
\begin{cases}
\int_0^t \frac{dN_s}{S_s}&=\frac{1}{K}\int_0^t 1_{\{M_s\leq K\}}dM_s,\\[0.1cm]
\frac{1}{2K}\mathcal{L}_t^K(M)&=\log{(S_t)}.
\end{cases}
\end{equation}
Hence :
\begin{eqnarray*}
  N_t&=&\left(\frac{M_t}{K}\wedge 1\right)S_t\\
  &=&\left(\frac{M_t}{K}\wedge 1\right)\exp{\left(\frac{1}{2K}\mathcal{L}_t^K\right)}.
\end{eqnarray*}
Since $M_t\to 0$ when $t\to\infty$, it follows from the previous equality \\that :  $N_t\to 0$ when $t\to\infty$.
\end{proof}
We now compare the results of Theorem \ref{th:az} and Proposition \ref{prop:az}.

We remark that not every supermartingale of the
form : $(\frac{N_t}{S_t}, t\geq 0)$ can be written as
$(M_t\wedge 1)$ where $M_0\geq 1$ (there is no loss of generality in
taking $K=1$).

Indeed, assuming (\ref{eq:diese}), with $K=1$, we deduce that :
\begin{equation}\label{eq:17}
d\langle N\rangle_s=\exp{(\mathcal{L}_s^{(1)})}1_{\{M_s<1\}}\,d\langle M\rangle_s.
\end{equation}
Now, in a Brownian setting, we have $d\langle N\rangle_s=n_s^2\,ds$ and
$d\langle M\rangle_s=m_s^2\,ds$, for two $(\mathcal{F}_s)$ previsible
processes $(m_s^2)$ and $(n_s^2)$.\\
Note that (\ref{eq:17}) implies :
\[n_s^2=\exp{(\mathcal{L}_s^{(1)})}1_{\{M_s<1\}}\,m_s^2, \quad ds\,d\mathbb{P}\:
\mathrm{a.s.}\]
Consequently, \[n_s^2=0, \quad ds\,d\mathbb{P}\:\mathrm{a.s.}\:\mathrm{on}\:\{(s,\omega), M_s>1\}.\]

However, this cannot be satisfied if we start from $N$ such that
$n_s^2>0$, for all $s>0$.

Note that the random set $\{s, M_s>1\}$ is
not empty; if it were, then the local time at $1$ of $M$ would be $0$,
and $M$ would be identically equal to $1$.

\begin{question}
It is now natural to ask the following : for which functions $h:\mathbb{R}^{+}\to[0, 1]$, is it true that, for any $(M_t,t\geq 0)$
in $\mathcal{M}_0^{+}$, $(h(M_t),t\geq 0)$ is an Azéma supermartingale? We shall call such a function an Azéma function.\\
\end{question}
Here is a partial answer to Question 3.1 :
\begin{proposition}
Assume that $h$ is an Azéma function such that :
\renewcommand{\theenumi}{\roman{enumi}.}
\renewcommand{\labelenumi}{\theenumi}
\begin{enumerate}
\item $\{x\: : \: h(x)<1\}=[0,K[$, for some positive real $K$;
\item $h''$ -in L.Schwartz'distribution sense- is a bounded measure;
\end{enumerate}
Then :
\[h(x)=\left(\frac{x}{K}\right)\wedge 1.\]
\end{proposition}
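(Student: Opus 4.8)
The plan is to exploit the freedom in choosing the local martingale $M$: the hypothesis is that $h(M_t)$ is an Azéma supermartingale for \emph{every} $M\in\mathcal{M}_0^{+}$, so in particular for Brownian-type martingales, and we will extract enough information from the general representation $\frac{N_t}{S_t}$ to pin $h$ down. First I would apply Tanaka's formula to the (bounded) semimartingale $h(M_t)$: since $h''$ is a bounded measure and $h$ is monotone on $[0,K[$ with $h\equiv 1$ on $[K,\infty)$, we get
\begin{equation}
h(M_t)=h(M_0)+\int_0^t h'(M_s)\,dM_s+\frac12\int_{0}^{\infty} L_t^x(M)\,h''(dx),
\end{equation}
and the finite-variation part is $V_t=\frac12\int L_t^x(M)\,h''(dx)$. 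For $h(M_t)=N_t/S_t$ to hold with $N$ a positive continuous local martingale and $S_t=\sup_{s\le t}N_s$, formula (\ref{eq:2point}) forces the decreasing part to be $-\log S_t$, hence $V_t=-\log S_t$ is a \emph{decreasing} process whose support is contained in $\{s:h(M_s)=1\}=\{s:M_s\ge K\}$ (equivalently $\{N_s=S_s\}$).

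The key step is to turn these two structural constraints into pointwise information about $h''$. That $V$ is decreasing for every choice of $M$, and in particular for $M$ whose local time $L^x$ charges any prescribed interval, means that the measure $h''(dx)$ must be nonpositive on the whole region where local time can accumulate without $M$ hitting $K$ — i.e. $h''\le 0$ as a measure on $(0,K)$ (so $h$ is concave there) and $h''$ puts no mass on $(K,\infty)$. Next, the support condition: $dV_s$ may charge only times with $M_s\ge K$. But by the occupation-time formula $dV_s$ ``sees'' the level $M_s=x$ with weight $h''(\{x\})$ plus the density part; choosing $M$ so that it spends positive time strictly below $K$ (easy in the Brownian framework — e.g. a suitably stopped/time-changed geometric Brownian motion), the only way the decreasing part can avoid charging $\{M_s<K\}$ is that $h''$ has no absolutely continuous or singular-continuous part on $(0,K)$ and no atom in $(0,K)$. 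Hence $h$ is affine on $[0,K]$. Combined with $h(K^-)=1$ (continuity of $h$ at $K$, forced because $N/S$ is continuous so $h(M_t)$ is continuous) and $h\ge 0$ with $h(x)<1$ for $x<K$, affinity gives $h(x)=\frac{x}{K}$ on $[0,K]$; together with $h\equiv1$ on $[K,\infty)$ this is exactly $h(x)=(x/K)\wedge1$. One should also check $h(0)=0$: if $h(0)=c>0$ then taking $M$ with $M_0<K$ and letting $M_\infty=0$ we would need $N_\infty/S_\infty=c$ with $N_\infty=0$ (since $h(M_t)\to h(0)=c$ while the Azéma supermartingale $N/S$ must tend to $\mathbb{P}(L=\infty\mid\mathcal{F}_\infty)$, and for $L=\mathcal{G}_K$-type ends this limit is $0$), a contradiction; this forces the affine function through the origin.

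The main obstacle is the second step — rigorously converting ``the decreasing part of $h(M_t)$ is supported on $\{M_s\ge K\}$ for all $M$'' into ``$h''=0$ on $(0,K)$''. The clean way is to pick an explicit family of test martingales: for instance $M^{(a)}_t$ built so that $M^{(a)}$ oscillates in a small interval $[a-\varepsilon,a+\varepsilon]\subset(0,K)$ before eventually decaying to $0$; then $V_t$ restricted to the initial phase equals $\frac12\int L_t^x(M^{(a)})\,h''(dx)$, which is a nontrivial process supported on $\{M_s<K\}$ unless $h''$ vanishes near $a$. Since $a\in(0,K)$ is arbitrary, $h''\big|_{(0,K)}=0$. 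I expect the rest (handling the atom of $h''$ at $K$, continuity, and the boundary value $h(0)=0$) to be routine given Doob's maximal identity (Lemma \ref{lemma:uni}) and the representation in Theorem \ref{th:az}.
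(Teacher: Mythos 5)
Your proposal is correct and follows essentially the same route as the paper: the It\^o--Tanaka decomposition of $h(M_t)$, the observation that its finite-variation part must be carried by $\{s : h(M_s)=1\}$, the occupation-density identity reducing this to $\int_{[0,K[}h''(dx)\,\mathcal{L}_t^x(M)=0$ and hence $h''=0$ on $[0,K[$, and finally $h(0)=0$ from $h(M_t)\to 0$. Your extra discussion of test martingales only makes explicit a step the paper leaves implicit; the substance is identical.
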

\begin{proof}
\begin{enumerate}
\item[a)] From $(ii)$, for any $M\in\mathcal{M}_0^+$, we may apply the Itô-Tanaka formula to write the canonical decomposition of $(h(M_t), t\geq 0)$ as a semimartingale; we get :
\begin{equation}\label{page20.1}
h(M_t)=h(M_0)+\int_0^t\,h'(M_s)\,dM_s+\frac{1}{2}\int h''(dx)\mathcal{L}_t^x(M).
\end{equation}
\item[b)] Since $h(M_t)$ is an Azéma supermartingale, its increasing process in (\ref{page20.1}) is carried by $\{s\: :\: h(M_s)=1\}$.
Therefore :
\begin{equation}\label{page20.2}
\int h''(dx)\,\int_0^t\,1_{\{h(M_s)<1\}}\,d\mathcal{L}_s^x=0.
\end{equation}
Now, the $LHS$ of (\ref{page20.2}) equals :
\[\int h''(dx)\,\,1_{\{h(x)<1\}}\,\mathcal{L}_t^x(M)=\int_{[0,K[}h''(dx)\mathcal{L}_t^x(M),\]
as a consequence of $(i)$.
This is equivalent to : $h''(dx)=0$, on $[0,K[$, thus : $h(x)=ax+b$, on $[0,K[$; furthermore, $h(0)=0$, since : $\lim_{t\to\infty} h(M_t)=0$, for any $M\in\mathcal{M}_0^+$. Thus : $h(x)=ax$, on $[0,K[$, and, applying $(i)$ again yields to the result.
\end{enumerate}
\end{proof}

\begin{question}
Is it possible to relax further the hypotheses (i) and (ii)?
\end{question}

\subsection{Enlargement of filtration formulae}
Under $(CA)$, there is a general expression for the transformation of
a generic $(\mathcal{F}_t)$-martingale $(\mu_t)$ into a
$(\mathcal{F}_t^L)$ semimartingale, where $(\mathcal{F}_t^L)$ is the
smallest filtration which contains $(\mathcal{F}_t)$ and makes $L$ a
stopping time.

Then :
\begin{equation}\label{enla1}
\mu_t=\tilde{\mu}_t+\int_0^{t\wedge L} \frac{d\langle
  \mu,Z\rangle_s}{Z_s}
+\int_L^t \frac{d\langle \mu,1-Z\rangle_s}{(1-Z_s)}
\end{equation}
where $(\tilde{\mu}_t)$ is a $(\mathcal{F}_t^L)$ local martingale.

Now, since : \[Z_t=\frac{N_t}{S_t},\] (see formula (\ref{eq:az})), formula
(\ref{enla1}) becomes :
\begin{equation}\label{enla2}
\mu_t=\tilde{\mu}_t+\int_0^{t\wedge L} \frac{d\langle
  \mu,N\rangle_s}{N_s}
-\int_L^t \frac{d\langle \mu,N\rangle_s}{(S_s-N_s)}.
\end{equation}
Particularising again with $L=\mathcal{G}_K$, we have seen previously
that : \[Z_t=\left(\frac{M_t}{K}\right)\wedge 1\] and
\[N_t=\left(\left(\frac{M_t}{K}\right)\wedge
1\right)\exp{\left(\frac{\mathcal{L}_t^K}{2K}\right)}.\] 
Hence, applying (\ref{enla1})
and (\ref{enla2}), we get :
\begin{equation}\label{enla3}
\mu_t=\tilde{\mu}_t+\int_0^{t\wedge \mathcal{G}_K} \frac{1_{\{M_s<K\}}d\langle
  \mu,M\rangle_s}{M_s}
-\int_{\mathcal{G}_K}^t \frac{d\langle \mu,M\rangle_s}{(K-M_s)}.
\end{equation}
It is of some interest to take $\mu_s=M_s$, formula (\ref{enla3}) then
becomes :
\begin{equation}\label{enla4}
M_t=\tilde{M}_t+\int_0^{t\wedge \mathcal{G}_K}
\frac{1_{\{M_s<K\}}d\langle M\rangle_s}{M_s}
-\int_{\mathcal{G}_K}^t \frac{d\langle M\rangle_s}{(K-M_s)}.
\end{equation}

\subsection{Study of the pre $\mathcal{G}_K$- and the
  post $\mathcal{G}_K$-processes}
We now apply formula (\ref{enla4}) to give a description of the pre $\mathcal{G}_K$-process and the
post $\mathcal{G}_K$-process.
\begin{enumerate}
\item[a)]\underline{The post $\mathcal{G}_K$-process :}\\
From (\ref{enla4}), we may write :
\begin{equation}
M_{\mathcal{G}_K+t}=K+\hat{M}_t-\int_0^t\,\frac{d\langle M\rangle_{{\mathcal{G}_K+u}}}{(K-M_{\mathcal{G}_K+u})},
\end{equation}
where $(\hat{M}_t,t\geq 0)$ is a $\mathcal{F}_{\mathcal{G}_K+t}$ local martingale starting at $0$.\\
We introduce the notations :
\begin{equation}\label{55}
R_t=K-M_{\mathcal{G}_K+t};
\end{equation}
we have :
\begin{equation}\label{56}
R_t=-\hat{M}_t+\int_0^t\,\frac{d\langle M\rangle_{\mathcal{G}_K+u}}{R_u}.
\end{equation}
Since : $\langle \hat{M}\rangle_t=\langle
M\rangle_{\mathcal{G}_K+t}-\langle M\rangle_{\mathcal{G}_K}$, we may
write : $\hat{M}_t=\beta_{\langle \hat{M}\rangle_t}$, where
$(\beta_u)$ is a Brownian motion, we deduce from (\ref{56}) that :
$R_t=\rho_{\langle \hat{M}\rangle_t}$, where $(\rho_u,u\leq \langle
\hat{M}\rangle_\infty)$ is a $BES(3)$ process, considered up to :
$\langle \hat{M}\rangle_\infty=T_K(\rho)$, as deduced from (\ref{55}),
and the fact that $M_u\to 0$ when $u\to\infty$.
We also note that : $\langle
M\rangle_{\mathcal{G}_K}=\mathcal{G}_K(\beta_{.\wedge T_0})$.
\item[b)]\underline{The pre $\mathcal{G}_K$-process :}\\
Here we take back the notations of subsection $2.1$, but in order to
see precisely the situation, we drop the continuity hypotheses $(H_1)$ and
$(H_2)$ in that subsection. Theorem \ref{chelou}, which gives the law of
$\mathcal{G}_K$ (see \ref{eq:theo}) is now completed by the following
computation of the conditional law of the pre $\mathcal{G}_K$-process,
given $\mathcal{G}_K$ :
\begin{theorem}\label{3.2}
Let $(\phi_u,u\geq 0)$ denote a positive, $(\mathcal{F}_u)$ previsible
process.
Then :
\begin{enumerate}
\item[a)]
\begin{equation}\label{7bis}
\begin{split}
\mathbb{E}\left[\phi_{\mathcal{G}_K}\right]&=\mathbb{E}\left[\phi_0\left(1-\frac{M_0}{K}\right)^+\right]\\[0.1cm]
&+\frac{1}{2K}\int_0^\infty\,ds\,m_s(K)\mathbb{E}\left[\phi_s\sigma_s^2|M_s=K\right]
,\quad dK\:a.e,
\end{split}
\end{equation}
\item[b)] As a consequence of a), we recover :
\begin{equation}\label{8bis}
\begin{split}
\mathbb{P}\left(\mathcal{G}_K\in ds\right)&=\mathbb{E}\left[\left(1-\frac{M_0}{K}\right)^+\right]\epsilon_0(ds)\\[0.1cm]
&+\frac{ds}{2K}m_s(K)\mathbb{E}\left[\sigma_s^2|M_s=K\right]
,\quad dK\:a.e,
\end{split}
\end{equation}
\item[c)] Furthermore :
\begin{equation}\label{9bis}
\mathbb{P}\left(\phi_{\mathcal{G}_K}|\mathcal{G}_K=s\right)=
\frac{\mathbb{E}\left[\phi_s\sigma_s^2|M_s=K\right]}
{\mathbb{E}\left[\sigma_s^2|M_s=K\right]}, 
\quad\mathbb{P}(\mathcal{G}_K\in ds)\:a.e.
\end{equation}
\end{enumerate}
\end{theorem}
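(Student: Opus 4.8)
The plan is to prove part (a) first---since (b) and (c) follow from it essentially by specialisation---and to obtain (a) by the same density-of-occupation-plus-conditioning technique used in the proof of Theorem~\ref{chelou}, but now carried out against the previsible process $(\phi_u)$. The key observation is that the end time $\mathcal{G}_K$ is the last zero of $(K - M_{\cdot})^+$ viewed through the local time at level $K$: the set $\{s : M_s = K\}$ supports the measure $d_s \mathcal{L}_s^K(M)$, and the last point of this support is (essentially) $\mathcal{G}_K$. So I would start from the ``balayage''-type identity
\[
\phi_{\mathcal{G}_K} \mathbf{1}_{\{\mathcal{G}_K > 0\}} = \int_0^\infty \phi_s \, d_s\!\left(\text{something supported on }\{M_s = K\}\right),
\]
made precise as follows. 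Because $M$ is transient towards $0$ and $M_0 \ge K$ is not assumed, one writes, for a bounded previsible $\phi \ge 0$,
\[
\mathbb{E}[\phi_{\mathcal{G}_K}] = \mathbb{E}\!\left[\phi_0 \left(1 - \tfrac{M_0}{K}\right)^+\right] + \mathbb{E}\!\left[\int_0^\infty \phi_s \, \frac{d_s \mathcal{L}_s^K(M)}{2K}\right],
\]
where the first term accounts for the atom $\{\mathcal{G}_K = 0\}$ (which occurs exactly on $\{M_0 \le K\}$, i.e. on $\{\sup_{s\ge 0} M_s \le K\}$ up to a null set, with the conditional probability computed in Theorem~\ref{th:1}), and the integral term picks up the contribution of each visit to $K$, weighted by the probability that this visit is the \emph{last} one --- which, by the Markov-like/Doob's-maximal-identity argument of Note~1, collapses so that the increment $d_s\mathcal{L}_s^K$ is ``used up'' with total mass giving $\mathbb{P}(\mathcal{G}_K \in ds)$. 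Concretely, the identity to establish is that for any bounded previsible $\phi$,
\[
\mathbb{E}[\phi_{\mathcal{G}_K} \mathbf{1}_{\{\mathcal{G}_K > 0\}}] = \frac{1}{2K}\,\mathbb{E}\!\left[\int_0^\infty \phi_s \, d\mathcal{L}_s^K(M)\right],
\]
which is the natural ``last-exit'' refinement of Theorem~\ref{chelou}; I would prove it either by the optional-projection/section argument (projecting $\mathbf{1}_{\{\mathcal{G}_K > s\}}$ onto the previsible $\sigma$-field, which by formula~(\ref{eq:1bis}) equals $M_s/K \wedge 1$, and then integrating against $d\mathcal{L}^K$ using that $\mathcal{L}^K$ increases only where $M = K$, so the projection factor equals $1$ there and the ``last-exit'' decomposition of the supermartingale $Z$ yields exactly the local-time term) or, more in the spirit of this paper, by a direct Tanaka-plus-enlargement computation using~(\ref{enla4}).

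Once this ``master identity'' is in hand, part~(a) follows by the conditioning step of Theorem~\ref{chelou}: under $(H_2)$, $d\mathcal{L}_s^K(M)$ has expected increment, after multiplying by the previsible $\phi_s$,
\[
\mathbb{E}\!\left[\int_0^\infty \phi_s \, d\mathcal{L}_s^K(M)\right] = \int_0^\infty ds\, m_s(K)\, \mathbb{E}[\phi_s \sigma_s^2 \mid M_s = K],
\]
which is obtained exactly as in the proof of~(\ref{eq:simpli}): apply the density-of-occupation formula~(\ref{eq:etoileo}) with $f$ replaced (after a standard monotone-class/regularisation argument) so as to localise at level $K$, take expectations, and use the disintegration $\mathbb{E}[\psi(M_s)\sigma_s^2 \cdot \phi_s] = \int \psi(x)\, m_s(x)\, \mathbb{E}[\phi_s\sigma_s^2 \mid M_s = x]\, dx$, valid for $dK$-a.e.\ $K$ by continuity of $(s,x) \mapsto m_s(x)$ and the assumed joint continuity of $\theta_s(x) = \mathbb{E}[\sigma_s^2 \mid M_s = x]$. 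Adding the atom term $\mathbb{E}[\phi_0 (1 - M_0/K)^+]$ gives~(\ref{7bis}). Then~(b) is the special case $\phi \equiv 1$ (so $\mathbb{E}[\phi_s\sigma_s^2 \mid M_s = K] = \theta_s(K)$, recovering Theorem~\ref{chelou}), and~(c) is immediate on dividing~(\ref{7bis}) by~(\ref{8bis}), reading off the conditional law $\mathbb{P}(\phi_{\mathcal{G}_K} \mid \mathcal{G}_K = s)$ from the Radon--Nikodym derivative of the two measures in $ds$.

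The main obstacle, as I see it, is the rigorous justification of the ``master identity'' $\mathbb{E}[\phi_{\mathcal{G}_K}\mathbf{1}_{\{\mathcal{G}_K > 0\}}] = \frac{1}{2K}\mathbb{E}[\int_0^\infty \phi_s\, d\mathcal{L}_s^K(M)]$ --- that is, showing that integrating a previsible process against $d\mathcal{L}^K$ correctly selects its value at the \emph{last} visit to $K$ rather than producing a sum (or integral) over all visits. This is precisely a last-exit decomposition statement and is the point where one genuinely needs the structure of $\mathcal{G}_K$ as the end of the previsible set $\{M = K\}$ together with formula~(\ref{eq:1bis}): the supermartingale $Z_s = \mathbb{P}(\mathcal{G}_K > s \mid \mathcal{F}_s) = (M_s/K)\wedge 1$ is continuous and its Doob--Meyer increasing part is $\frac{1}{2K}\mathcal{L}_s^K$, so by the general theory of honest times (Jeulin, or Yor's balayage) $\mathbb{E}[\phi_{\mathcal{G}_K}] = \mathbb{E}[\int_0^\infty \phi_s\, dA_s]$ with $A$ that increasing part; the subtlety is only to invoke this cleanly and to handle the atom at $0$ (which is why the statement carries the ``$dK$ a.e.'' caveat --- for a.e.\ $K$ there is no $\mathcal{F}_s$-atom of $M_s$ at $K$, killing any ambiguity). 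The rest is routine conditioning and monotone-class arguments.
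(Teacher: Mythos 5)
Your proposal follows the paper's own route: the ``master identity'' you isolate is exactly the balayage formula (Lemma \ref{lem}) applied to $(K-M_t)^+$ under $M_\infty=0$, and your passage from $\mathbb{E}\left[\int_0^\infty\phi_s\,d\mathcal{L}_s^K\right]$ to $\int_0^\infty ds\,m_s(K)\,\mathbb{E}\left[\phi_s\sigma_s^2|M_s=K\right]$ is the same occupation-density-plus-conditioning step already used for (\ref{eq:simpli}). The only slip is that (b) is not the case $\phi\equiv 1$ (which yields only the total mass); as in the paper one must replace $\phi_s$ by $\phi_s g(s)$ for a generic Borel $g:\mathbb{R}^+\to\mathbb{R}^+$ to read off (\ref{8bis}) and (\ref{9bis}) as identities between measures in $ds$ --- which your Radon--Nikodym argument for (c) implicitly requires anyway.
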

The proof hinges on the balayage formula, which we first recall :
\begin{lemma}\label{lem}(see \cite{revuz})
Let $(Y_t)$ be a continuous semimartingale, and $g_Y(t)=\sup\{s\leq t,
Y_s=0\}$.
Then, for any bounded previsible process $(\phi_s, s\geq 0)$, one has
:
\begin{equation}\label{bal}
\phi_{g_Y(t)} Y_t=\phi_0Y_0 + \int_0^t \phi_{g_Y(s)}dY_s.
\end{equation}
\end{lemma}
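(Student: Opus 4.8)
The plan is to prove \eqref{bal} by the classical argument for the balayage formula (as in \cite{revuz}): a monotone-class reduction to a single explicit family of previsible integrands, for each of which \eqref{bal} is a direct computation.

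First I would carry out two reductions. Both sides of \eqref{bal} are linear in $(\phi_s)$, and a stochastic integral against the continuous semimartingale $Y$ passes to the limit under bounded pointwise convergence of the integrand --- dominated convergence on the finite-variation part, and the $L^2$-isometry together with dominated convergence on the martingale part, after localising $Y$ at $T_n=\inf\{t:\,|Y_t|+\langle Y\rangle_t\geq n\}$. Hence it suffices to treat $(\phi_s)$ bounded. I would also note at the outset that $t\mapsto g_Y(t)$ is adapted, non-decreasing and right-continuous: it stays constant on each interval where $Y$ avoids $0$ and equals $t$ whenever $Y_t=0$; therefore $s\mapsto\phi_{g_Y(s)}$ is an adapted c\`adl\`ag process and the stochastic integral on the right of \eqref{bal} is well defined.

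Next, let $\mathcal H$ be the set of bounded previsible $\phi$ for which \eqref{bal} holds for every $t$. It is a vector space, it contains the constants (for $\phi\equiv c$, \eqref{bal} reads $cY_t=cY_0+c(Y_t-Y_0)$), and by the reduction above it is stable under bounded pointwise convergence. Since the previsible $\sigma$-field is generated by the $\pi$-system formed by the sets $\{0\}\times B$, $B\in\mathcal F_0$, and $]u,\infty[\,\times A$, $u\geq 0$, $A\in\mathcal F_u$, the functional monotone class theorem reduces the problem to checking \eqref{bal} for $\phi_s=1_A\,1_{]u,\infty[}(s)$ (the case $\phi_s=1_B\,1_{\{0\}}(s)$ being analogous and essentially trivial). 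For such a $\phi$ I would set $T=\inf\{v>u:\,Y_v=0\}$, a stopping time with $T\geq u$ and, by continuity of $Y$, $Y_T=0$ on $\{T<\infty\}$. The key pathwise remark is that $g_Y(s)>u$ holds exactly when $Y$ has a zero in $]u,s]$, i.e. $\{g_Y(s)>u\}=\{s\geq T\}$; hence $\phi_{g_Y(s)}=1_A\,1_{\{s\geq T\}}$, which is previsible because it is the indicator of $\{(s,\omega):\,s\geq T_A(\omega)\}$, where $T_A=T$ on $A$ and $+\infty$ on $A^c$ is again a stopping time ($A\in\mathcal F_T$). Then, using $\phi_0=0$ and $Y_{T\wedge t}=Y_T\,1_{\{T\leq t\}}+Y_t\,1_{\{T>t\}}$ with $Y_T\,1_{\{T\leq t\}}=0$,
\begin{equation*}
\phi_0Y_0+\int_0^t\phi_{g_Y(s)}\,dY_s=1_A\bigl(Y_t-Y_{T\wedge t}\bigr)=1_A\,1_{\{T\leq t\}}\,Y_t,
\end{equation*}
which is precisely the left-hand side $\phi_{g_Y(t)}\,Y_t=1_A\,1_{\{g_Y(t)>u\}}\,Y_t=1_A\,1_{\{t\geq T\}}\,Y_t$.

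The main obstacle will not be the algebra but the path-regularity bookkeeping: one has to be sure that $s\mapsto\phi_{g_Y(s)}$ is a legitimate $dY$-integrand (whence the need for right-continuity and adaptedness of $g_Y$), and that the identity $\{g_Y(s)>u\}=\{s\geq T\}$ persists in the degenerate configurations in which $Y_u=0$ or $u$ is a right-accumulation point of the zero set of $Y$ --- there $T=u$, and one checks directly that both indicators still coincide. Likewise the $\{0\}\times\mathcal F_0$ generators and the term $\phi_0Y_0$ only call for consistent use of the convention $\sup\varnothing=0$ and the remark that $Y_s\neq0$ on $\{(s,\omega):\,g_Y(s)=0\}$ apart from the negligible endpoint; no further idea is needed.
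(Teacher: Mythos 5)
Your argument is correct and is essentially the proof the paper has in mind: the lemma is only cited from Revuz--Yor, and the exercise that follows it in the text asks for exactly this monotone-class strategy, with the generating family $\phi_s=1_{[0,T]}(s)$, $T$ a stopping time, in place of your (equivalent) generators $1_A\,1_{]u,\infty[}(s)$ --- the explicit verification there is the mirror image of yours, via $d_T=\inf\{v>T:\,Y_v=0\}$ and $Y_{d_T\wedge t}=Y_t\,1_{\{g_Y(t)\leq T\}}$. The one imprecision is your claim that $s\mapsto\phi_{g_Y(s)}$ is c\`adl\`ag for a general bounded previsible $\phi$ (it need not be; its previsibility is itself established by the same monotone-class argument), but on the generators, where it is the indicator of a stochastic interval, everything you write goes through.
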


\begin{proof} (of Theorem \ref{3.2}) :\\
We deduce from (\ref{bal}), i.e : the balayage formula applied to $\left(K-M_t\right)^+$ that :
\begin{equation}\label{balowbis}
\mathbb{E}\left[\phi_{\mathcal{G}_K} \left(K-M_\infty\right)^{+}\right]=
\mathbb{E}\left[\phi_0 \left(K-M_0\right)^{+}\right] +
\frac{1}{2}\mathbb{E}\left[\int_0^\infty \phi_s d\mathcal{L}_s^{K}\right],
\end{equation}
Now, (\ref{7bis}) is deduced from (\ref{balowbis}) : under the present hypothesis : $M_\infty=0$, (\ref{balowbis}) writes :
\begin{equation}
\mathbb{E}\left[\phi_{\mathcal{G}_K}\right]=
\mathbb{E}\left[\phi_0 \left(1-\frac{M_0}{K}\right)^{+}\right] +
\frac{1}{2K}\mathbb{E}\left[\int_0^\infty \phi_s d\mathcal{L}_s^{K}\right].
\end{equation}
Now, (\ref{7bis}) will be proven if we show :
\begin{equation}\label{(10')}
\mathbb{E}\left[\int_0^\infty \phi_s d\mathcal{L}_s^{K}\right]=\int_0^\infty\,ds m_s(K)\mathbb{E}\left[\phi_s\sigma_s^2|M_s=K\right],
\quad dK\:a.e.
\end{equation}
In order to prove (\ref{(10')}), we use the density of occupation formula (\ref{eq21}) after integrating on both sides with respect to $\phi_s$, which yields, using $(H_1)$ in Section 2.1 :
\begin{equation}\label{(11')}
\int_0^\infty\,ds\,\sigma_s^2f(M_s)\phi_s=\int_0^\infty\,dK\,f(K)\int_0^\infty\phi_sd\mathcal{L}_s^K.
\end{equation} 
Taking expectations of both sides, we obtain, with the help of $(H_2)$ in Section 2.1 :
\begin{equation}
\begin{split}
\int_0^\infty\,dK\,f(K)&\int_0^\infty\,ds
m_s(K)\mathbb{E}\left[\sigma_s^2\phi_s|M_s=K\right]\\
&=\int_0^\infty\,dK\,f(K)\mathbb{E}\left[\int_0^\infty\,\phi_sd\mathcal{L}_s^K\right],
\end{split}
\end{equation}
which is easily shown to imply (\ref{(10')}).
Then, replacing in (\ref{7bis}) $\phi_s$ by $\phi_sg(s)$, for a
generic, Borel, $g : \mathbb{R}^+\to\mathbb{R}^+$, we deduce (\ref{8bis}) and (\ref{9bis}).
\end{proof}
The particular case when $(M_s)$ is Markovian, e.g : the Black-Scholes situation where $M_s=\mathcal{E}_s$, allows for some simplification of the above formula : in this case, $\sigma_s=\sigma(s,M_s)$, where $(\sigma(s,x))$ is a deterministic function on $\left([0,\infty)\right)^2$, and we obtain, from (\ref{9bis}) :
\begin{equation}
\mathbb{E}\left[\phi_{\mathcal{G}_K}|\mathcal{G}_K=s\right]=\mathbb{E}\left[\phi_s|M_s=K\right],
\end{equation}
i.e : conditionally on $\mathcal{G}_K=s$, the pre $\mathcal{G}_K$-process is the bridge (for $M$) on the time interval $[0,s]$, ending at $K$.
\end{enumerate}

\begin{exo}
Prove Lemma \ref{lem} by applying the monotone class theorem , i.e :
\begin{enumerate}
\item[a)]show it for $\phi_s=1_{[0,T]}(s)$, with $T$ a stopping time,
\item[b)]apply the monotone class theorem.
\end{enumerate}
\end{exo}

\begin{exo}
Prove, with the help of formula (\ref{bal}), that the following processes are local martingales :
\[f(S_t)(S_t-M_t)-\int_0^{S_t}\,dx f(x),\]
for any bounded Borel function $f:\mathbb{R}^+\to\mathbb{R}$.
\end{exo}

\subsection{A larger framework}
We refer to \cite{najyor}.
We now wish to explain how our basic formula (\ref{eq:1bis}) which we now write as :
\begin{equation}
\mathbb{E}_{\mathbb{P}}\left[F_t\left(1-\frac{M_t}{K}\right)^{+}\right]=\mathbb{E}\left[F_t\,1_{\{\mathcal{G}_K\leq t\}}\right],
\end{equation}
for every $F_t\geq 0$, $(\mathcal{F}_t)$ measurable, is a particular case of the following representation problem for certain (Skorokhod) submartingales.

Let us consider, on a filtered space $\left(\Omega,\mathcal{F},(\mathcal{F}_t)\right)$ :
\begin{enumerate}
\item[a)] a probability $\mathbb{P}$, and a positive process $(X_t)$ which is adapted to $(\mathcal{F}_t)$, and integrable;
\item[b)] a $\sigma$-finite measure $\mathbb{Q}$ on $(\Omega,\mathcal{F})$ ($\mathbb{Q}$ may be finite, even a probability, but we are also interested in the more general case where $\mathbb{Q}$ is $\sigma$-finite);
\item[c)] a positive $\mathcal{F}$-measurable random variable $\mathcal{G}$ such that :
\begin{equation}\label{(1)}
\forall \Gamma_t\in\mathcal{F}_t,\quad \mathbb{E}_{\mathbb{P}}\left[\Gamma_tX_t\right]=\mathbb{Q}\left(\Gamma_t\,1_{\{\mathcal{G}\leq t\}}\right).
\end{equation}
\end{enumerate}
Note that it follows immediately from (\ref{(1)}) that $(X_t)$ is a $(\mathbb{P},\mathcal{F}_t)$ submartingale, since, for $(s<t)$, and $\Gamma_s\in\mathcal{F}_s$ :
\begin{equation}
\mathbb{E}_{\mathbb{P}}\left[\Gamma_s(X_t-X_s)\right]=\mathbb{Q}\left(\Gamma_s\,1_{\{s\leq\mathcal{G}\leq t\}}\right)\geq 0.
\end{equation}
Conversely, we would like to find out which positive submartingales $(X_t)$, with respect to $\left(\Omega, (\mathcal{F}_t), \mathbb{P}\right)$ may be ``represented'' in the form (\ref{(1)}); that is, we seek a pair $(\mathbb{Q},\mathcal{G})$ such that (\ref{(1)}) is satisfied.

So far we have not solved this problem in its full generality, but we have three set-ups where the problem is solved. The next three subsections are devoted to the discussion of each of these cases.

However, the three cases are concerned with what we would like to call Skorokhod submartingales, i.e : $(X_t)$ is a submartingale, such that :
\begin{equation}\label{(2)}
X_t=-\mathfrak{M}_t+L_t,\quad t\geq 0,
\end{equation}
with :
\begin{enumerate}
\item[1.] $X_t\geq 0$; $X_0=0$;
\item[2.] $(L_t)$ is increasing, and $(dL_t)$ is carried by the zeros of $(X_t, t\geq 0)$.
\end{enumerate}
As is well known, this implies that : 
\[L_t=S_t(\mathfrak{M})\equiv \sup_{s\leq t}\mathfrak{M}_s .\]
We assume that, $(\mathfrak{M}_t,t\geq 0)$ is a true martingale.

The three cases we shall consider are :
\begin{enumerate}
\item 
\begin{equation}\label{(3)}
X_t=\left(1-Y_t\right)^+,\quad t\geq 0,
\end{equation}
where $(Y_t,t\geq 0)$ is a positive martingale, which converges to $0$, as $t\to\infty$ and with $Y_0=1$.
\item
\begin{equation}\label{(4)}
X_t=S_t(\mathfrak{N})-\mathfrak{N}_t\quad t\geq 0,
\end{equation}
where $(\mathfrak{N}_t,t\geq 0)$ is a positive martingale, with
$\mathfrak{N}_0=1$, and which converges to $0$, as $t\to\infty$.
\item
\begin{equation}\label{(5)}
X_t=|B_t|,\quad t\geq 0,
\end{equation}
where $(B_t)$ is a standard Brownian motion.
\end{enumerate}

\subsubsection{Case 1}
Denote :
\[\mathcal{G}=\sup\{t, Y_t=1\}=\sup\{t, X_t=0\}.\]
Then, we have shown that (see Theorem \ref{th:1}) :
\begin{equation}
\mathbb{P}\left(\mathcal{G}\leq t|\mathcal{F}_t\right)=\left(1-Y_t\right)^+.
\end{equation}
Therefore, in this case, we may write :
\begin{equation}
\mathbb{E}\left[\Gamma_tX_t\right]=\mathbb{E}\left[\Gamma_t1_{\{\mathcal{G}\leq t\}}\right],
\end{equation}
for every $\Gamma_t\in\mathcal{F}_t$. Thus, $\mathbb{Q}=\mathbb{P}$ is convenient in this situation.
\subsubsection{Case 2}
Again, we introduce :
\[\mathcal{G}=\sup\{t, \mathfrak{N}_t=S_t(\mathfrak{N})\}=\sup\{t, X_t=0\}.\]
We have (see Theorem \ref{th:az} and Proposition \ref{prop:az}) :
\begin{equation}
\mathbb{P}\left(\mathcal{G}\leq t|\mathcal{F}_t\right)=\frac{\mathfrak{N}_t}{S_t(\mathfrak{N})},
\end{equation}
and thus :
\begin{equation}\label{(7)}
\mathbb{E}\left[\Gamma_t\left(1-\frac{\mathfrak{N}_t}{S_t}\right)\right]=
\mathbb{E}\left[\Gamma_t\,1_{\{\mathcal{G}\leq t\}}\right].
\end{equation}
Since (\ref{(7)}) is valid for every $\Gamma_t\in\mathcal{F}_t$, and
$t\geq 0$, we may write (\ref{(7)}) in the equivalent form :
\begin{equation}\label{(7')}
\mathbb{E}\left[\Gamma_t\left(S_t-\mathfrak{N}_t\right)\right]=
\mathbb{E}\left[\Gamma_tS_t\,1_{\{\mathcal{G}\leq t\}}\right].
\end{equation}
However, on $(\mathcal{G}\leq t)$, we have :
$S_t=S_\infty$. Therefore, (\ref{(7')}) writes :
\begin{equation}\label{(7'')}
\mathbb{E}\left[\Gamma_t\left(S_t-\mathfrak{N}_t\right)\right]=
\mathbb{E}\left[\Gamma_tS_\infty\,1_{\{\mathcal{G}\leq t\}}\right],
\end{equation}
and a solution to (\ref{(1)}) is :
\begin{equation}\label{(8)}
\mathbb{Q}=S_\infty\cdot\mathbb{P}.
\end{equation}
However, we should note that $\mathbb{Q}$ has infinite total mass,
since :
\[\mathbb{P}(S_\infty\in dt)=\frac{dt}{t^2}1_{t\geq 1},\]
i.e, from Lemma \ref{lemma:uni} :
\[S_\infty\overset{\underset{\mathrm{law}}{}}{=}\frac{1}{\mathbf{U}}\:
\mathrm{with}\:\mathbf{U}\:\mathrm{uniform\:on}\:[0,1].\]

\subsubsection{Case 3}
This study has been the subject of many considerations within the
penalisation procedures of Brownian paths studied in \cite{m1} and
\cite{m2}.

In fact, on the canonical space
$\mathcal{C}(\mathbb{R}^+,\mathbb{R})$, where we now denote $(x_t,
t\geq 0)$ as the coordinate process, and
$\mathcal{F}_t=\sigma\{x_s,s\leq t\}$, then, if $\mathbb{W}$ denotes
the Wiener measure, a $\sigma$-finite measure $\mathcal{W}$ has been
contructed in \cite{m1} and \cite{m2} such that :
\begin{equation}\label{(9)}
\forall \Gamma_t\in\mathcal{F}_t,\quad
\mathbb{W}\left(\Gamma_t|x_t|\right)=
\mathcal{W}\left(\Gamma_t\,1_{\mathcal{G}\leq t}\right),
\end{equation}
where $\mathcal{G}=\sup\{s, x_s=0\}$ is finite a.s under
$\mathcal{W}$.
Thus, now a solution to (\ref{(1)}) is : 
\[\mathbb{Q}=\mathcal{W}.\]
We note that $\mathbb{W}$ and $\mathcal{W}$ are naturally
singular.
\subsubsection{A comparative analysis of the three cases}
We note that in case 1 and case 2, $\{X_t, t\to\infty\}$ converges
$\mathbb{P}$ a.s. and that the solution to (\ref{(1)}) may be written,
in both cases :
\begin{equation}\label{(10)}
\mathbb{E}\left[X_t\Gamma_t\right]=\mathbb{E}\left[X_\infty\,1_{\mathcal{G}\leq t}\right],
\end{equation}
where : $\mathcal{G}=\sup\{t, X_t=0\}$.
Is this the general case for Skorokhod submartingales which converge
a.s.?

\newpage
\section{Note 4 : How are the previous results \\modified when
  $M_\infty\ne 0$?}
In this note, we work again with a continuous local martingale
$(M_t)$ taking values in $\mathbb{R}^{+}$, and starting from $a>0$.
We do not assume that $M_\infty=0$;\\ thus :
\[
\mathbb{P}(M_\infty>0)>0.
\]
We ask a first question : can we describe the law of
$\sup_{t\geq0}M_t$?\\
Also can we describe the law of $\mathcal{G}_K=\sup\{t, M_t=K\}$?

\subsection{On the law of $S_\infty=\sup_{t\geq0}M_t$}
Note that we cannot use the Dubins-Schwarz theorem :
\[
M_t=\beta_{\langle M\rangle_t},\quad \:t\geq 0,
\]
in an efficient way, since in that generality, $\langle M\rangle_\infty$ cannot be
interpreted in terms of $\beta$.

Nonetheless, let us see how our
argument involving Doob's optional stopping theorem (see lemma \ref{lemma:uni}) may be
modified.

Let $b>a=M_0$, and $T_b=\inf\{t, M_t=b\}$. Then 
\[\mathbb{E}\left[M_{T_b}\right]=a,\]
that is :
\begin{equation}\label{etoile4}
b\,\mathbb{P}\left(S_\infty\geq b\right)+\mathbb{E}\left[M_\infty 1_{\{S_\infty<b\}}\right]=a.
\end{equation}
This leads us naturally to replace $M_\infty$ by :
\begin{equation}\label{point4}
\phi\left(S_\infty\right)=\mathbb{E}\left[M_\infty|S_\infty\right],
\end{equation}
with $\phi(x)\leq x$.
Formula (\ref{etoile4}) now becomes :
\begin{equation}\label{croix4}
b\,\mathbb{P}\left(S_\infty\geq b\right)+\mathbb{E}\left[\phi(S_\infty) 1_{\{S_\infty<b\}}\right]=a.
\end{equation}
Assuming $\phi$ as given, we consider (\ref{croix4}) as an equation for the distribution of $S_\infty$,
and we obtain :
\begin{proposition}\label{prop4.1}
For simplicity, we assume that : $\forall b>0,\phi(b)<b$.
The law of $S_\infty$ is given by :
\begin{equation}\label{2etoile4}
\mathbb{P}(S_\infty\geq b)=\exp{\left(-\int_a^b \frac{dx}{x-\phi(x)}\right)}.
\end{equation}
\end{proposition}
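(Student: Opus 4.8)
The plan is to read the constraint \eqref{croix4} as a Stieltjes differential equation for the survival function $\overline{F}(b):=\mathbb{P}(S_\infty\geq b)$, $b\geq a$, and to integrate it. Since $S_\infty\geq M_0=a$ almost surely — and $S_\infty<\infty$ a.s., because \eqref{etoile4} already yields $b\,\mathbb{P}(S_\infty\geq b)\leq a$ — the function $\overline{F}$ is non-increasing on $[a,\infty)$, with $\overline{F}(a)=1$ and $\overline{F}(b)\to 0$ as $b\to\infty$. Writing $\mu$ for the law of $S_\infty$, equation \eqref{croix4} says
\[
b\,\overline{F}(b)+\int_{[a,b)}\phi\,d\mu=a,\qquad b>a .
\]

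The crucial first step is to show that $\mu$ has no atom in $[a,\infty)$, and this is precisely where the hypothesis $\phi(b)<b$ is used. Fix $b_0>a$. Letting $b\downarrow b_0$ in \eqref{croix4} — the term $\mathbb{E}[\phi(S_\infty)1_{\{S_\infty<b\}}]$ converges by dominated convergence, a dominating function for $b\in(b_0,b_0+1]$ being $(b_0+1)\,1_{\{S_\infty<b_0+1\}}$, which is integrable since $\phi(x)\leq x$ — and subtracting the identity at $b_0$ itself, one gets $\mu(\{b_0\})\,(\phi(b_0)-b_0)=0$, hence $\mu(\{b_0\})=0$. For $b_0=a$ the same limiting computation, compared now with the trivial value $\overline{F}(a)=1$, gives $\mathbb{P}(S_\infty=a)\,(\phi(a)-a)=0$, hence $\mathbb{P}(S_\infty=a)=0$. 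Consequently $\overline{F}$ is continuous on $[a,\infty)$, and on $(a,\infty)$ the Stieltjes measure $d\overline{F}$ equals $-\mu$.

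Next I would differentiate \eqref{croix4}. On one hand, $b\mapsto b\,\overline{F}(b)=a-\int_{(a,b]}\phi\,d\mu$ is of bounded variation (in fact non-increasing, as $\phi\geq 0$), with Stieltjes differential $d\!\left(b\,\overline{F}(b)\right)=-\phi(b)\,\mu(db)=\phi(b)\,d\overline{F}(b)$; on the other hand, since $b\mapsto b$ is $C^1$ and $\overline{F}$ is continuous and of bounded variation, integration by parts gives $d\!\left(b\,\overline{F}(b)\right)=\overline{F}(b)\,db+b\,d\overline{F}(b)$ (no bracket term, one factor being continuous). Equating and rearranging, as measures on $(a,\infty)$,
\[
\overline{F}(b)\,db=\bigl(\phi(b)-b\bigr)\,d\overline{F}(b) .
\]
Since $\phi(b)-b<0$ never vanishes, we may divide by it and conclude that $d\overline{F}(b)=-\dfrac{\overline{F}(b)}{\,b-\phi(b)\,}\,db$; in particular $\overline{F}$ is absolutely continuous (the density is locally integrable, its integral over each $(c,d]$ equalling $\overline{F}(d)-\overline{F}(c)$, hence finite) with $\overline{F}'(b)=-\overline{F}(b)/(b-\phi(b))$ for a.e.\ $b$.

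Finally, solve this linear equation with $\overline{F}(a)=1$. On the interval $[a,b_\ast)$, where $b_\ast:=\inf\{b\geq a:\overline{F}(b)=0\}$, the function $\overline{F}$ is positive, so $\log\overline{F}$ is absolutely continuous there with $(\log\overline{F})'(b)=-1/(b-\phi(b))$; integrating from $a$ gives $\log\overline{F}(b)=-\int_a^b dx/(x-\phi(x))$, which is \eqref{2etoile4}. If $b_\ast<\infty$, the continuity of $\overline{F}$ (no atom at $b_\ast$) forces $\overline{F}(b_\ast)=0$, hence $\int_a^{b_\ast}dx/(x-\phi(x))=+\infty$, and \eqref{2etoile4} continues to hold for $b\geq b_\ast$ as the identity $0=0$. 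I expect the only real obstacle to be the Stieltjes calculus with a merely measurable $\phi$: it is handled by manipulating measures rather than pointwise derivatives, and by exploiting the a priori integrability bound supplied by \eqref{croix4} itself.
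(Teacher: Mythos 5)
Your proof is correct and follows essentially the same route as the paper's: read \eqref{croix4} as a (Stieltjes) differential equation for $\bar{\mu}(b)=\mathbb{P}(S_\infty\geq b)$, derive $(b-\phi(b))\,d\bar{\mu}(b)=-\bar{\mu}(b)\,db$, and integrate with the boundary condition $\bar{\mu}(a)=1$. The paper presents this as a purely formal computation; your extra care (absence of atoms via $\phi(b)<b$, the integration-by-parts justification for continuous functions of bounded variation, and the treatment of a possible finite zero of the survival function) supplies exactly the details it leaves implicit.
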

Comment : since $S_\infty<\infty$ a.s., it follows from
(\ref{2etoile4}) that :
\begin{equation}
\int_a^\infty \frac{dx}{x-\phi(x)}=\infty.
\end{equation}
\begin{proof}of Proposition \ref{prop4.1} :
from formula (\ref{croix4}), denoting $\bar{\mu}(b)=\mathbb{P}(S_\infty\geq
b)$, we obtain :
\begin{equation}
b\bar{\mu}(b)-\int_a^b d\bar{\mu}(x)\phi(x)=a.
\end{equation}
Consequently :
\begin{eqnarray*}
bd\bar{\mu}(b)-d\bar{\mu}(b)\phi(b)+db\bar{\mu}(b)=0\\
(b-\phi(b))d\bar{\mu}(b)=-(db)\bar{\mu}(b).
\end{eqnarray*}
Then, the above equation yields :
\begin{equation}
\bar{\mu}(b)=C\exp{\left(-\int_a^b \frac{dx}{x-\phi(x)}\right)},
\end{equation}
which implies $C=1$ by taking $b=a$.
\end{proof}

\begin{ex}\label{ex4.1}
We consider $(B_t)$ issued from $a>0$, and for $\alpha<1$:
\begin{equation}
T_a^{(\alpha)}=\inf\{t, B_t=\alpha S_t\},
\end{equation}
to which we associate $M_t=B_{t\wedge T_a^{(\alpha)}}$.
Then, $\phi(x)=\alpha x$; consequently we have :
\[
\int_a^b\frac{dx}{(1-\alpha)x}=\frac{1}{1-\alpha}\log{\left(\frac{b}{a}\right)}.
\]
Hence,
\begin{eqnarray*}
\bar{\mu}(b)&=&\exp{\left(-\frac{1}{1-\alpha}\log{\left(\frac{b}{a}\right)}\right)}\\[0.1cm]
&=&\left(\frac{a}{b}\right)^{1/(1-\alpha)},\quad b\geq a,
\end{eqnarray*}
and :
\begin{equation}
d\mu(b)=
a^{1/(1-\alpha)}\left(\frac{\alpha}{1-\alpha}\right)
\frac{db}{b^{\frac{2-\alpha}{1-\alpha}}}1_{\{b\geq a\}}.
\end{equation}
\end{ex}

\begin{question}
Can we describe all the laws of $(M_t,t\geq 0)$ which satisfy
(\ref{point4}) for a given $\phi$? See Rogers \cite{rogers} where the
law of $(S_\infty,M_\infty)$ is described in all generality...See also P.Vallois \cite{vallois2}. However, these authors assume that $M$ is uniformly integrable...
\end{question}
\begin{question}
Under which condition(s) is $(M_t,t\geq 0)$ uniformly integrable?
(this question had a negative answer when $M_\infty=0$, but now...?)\\
\textbf{A first answer :}\\
We shall have \[\mathbb{E}[M_\infty]=a,\] which is satisfied if only if :
\[\mathbb{E}\left[\phi(S_\infty)\right]=a,\] i.e :
\begin{equation}\label{first}
\int_a^\infty dx\exp{\left(-\int_a^x \frac{dy}{y-\phi(y)}\right)}\frac{\phi(x)}{x-\phi(x)}=a.
\end{equation}
In fact there is a more direct criterion which may be derived from (\ref{etoile4}) :
\begin{equation}
\lim_{b\to\infty} b\,\mathbb{P}(S_\infty\geq b)=0
\end{equation}
and which amounts to :
\begin{equation}\label{second}
\lim_{b\to\infty} b\exp{\left(-\int_a^b \frac{dx}{x-\phi(x)}\right)}=0.
\end{equation}
Note that, in all generality, it follows from (\ref{etoile4}) that : 
\[
\lim_{b\to\infty} b\mathbb{P}\left(S_\infty\geq b\right)=a-\mathbb{E}\left[M_\infty\right]
\]
\end{question}
\begin{exo}
Prove that (\ref{first}) is equivalent to (\ref{second}).(Probably,
integration by parts).
\end{exo}
\begin{ex}\label{ex4.2}
Going back to Example \ref{ex4.1}, when $\phi(x)=\alpha x$, $\alpha<1$, we get :
\begin{equation}
\bar{\mu}(b)=C\frac{1}{b^{1/(1-\alpha)}}.
\end{equation}
Then, Example \ref{ex4.2} is a case of uniform integrability. 
\end{ex}
\begin{exo}\label{exo4.3}
Denote by $\mathcal{M}^+$ the set of positive local martingales $(M_t,
t\geq 0)$ such that $M_0=1$ and by :
\[\mathcal{M}^{+,c}=\{M\in\mathcal{M}^+,\lim_{b\to\infty}
b\mathbb{P}\left(S_\infty\geq b\right)=1-c\}.\]
\begin{enumerate}
\item[a)] Prove that :
\[\mathcal{M}^+=\bigcup_{0\leq c\leq 1}\mathcal{M}^{+,c};\]
(of course, this is a union of disjoint sets).
\item[b)] Prove that $c=1$ iff $(M_t)$ is ui.;
\item[c)] Prove that $c=0$ iff $M_t\to_{t\to\infty}0$;
\item[d)] Prove that for any $c\in[0,1]$, $M\in\mathcal{M}^{+,c}$ iff $\mathbb{E}\left[M_\infty\right]=c$. 
\item[e)] For any $c\in[0,1]$, give as many examples as possible of
  elements of $\mathcal{M}^{+,c}$.
\end{enumerate}
Comments : A somewhat related discussion about the asymptotic behavior
of $\mathbb{P}\left(S_\infty\geq b\right)$ as well as that of
$\mathbb{P}\left(\langle M\rangle_\infty\geq b\right)$ is done in \cite{azyor6}.
\end{exo}
\begin{exo}
Give some examples of non uniform integrability obtained from the criterion (\ref{second}).
\end{exo}
At this point, it is very natural to recall Azéma-Yor's solution of Skorokhod's embedding as given in \cite{azyor4} :\\
if $\nu(dx)$ is a probability on $\mathbb{R}$, 
with $\int\nu(dx)\,|x|<\infty$, and $\int\nu(dx)x=0$, then the stopping time :
\[T_\nu=\inf\{t\geq 0, S_t\geq \psi_\nu(B_t)\},\]
where $S_t=\sup_{s\leq t}B_s$, $B_0=0$, and
$\psi_\nu(x)=\frac{1}{\nu[x,\infty)}\int_{[x,\infty)}\nu(dy)\,y$
solves Skorokhod's embedding problem, in that : $B_{T_\nu}\sim\nu$,
and $(B_{t\wedge T_\nu},t\geq 0)$ is uniformly integrable. See Obloj \cite{obloj} for a thorough survey of Skorokhod's problem.
\begin{exo}
Modify the Azéma-Yor construction to obtain as many stopping times
$T'_\mu$ of Brownian motion $(B_t,t\leq T_0)$, where $B_0=1$, such
that $B_{T'_\mu}\sim\mu$.
\end{exo}
More generally, one may ask :
\begin{question} 
Given a stopping time $T$ of $(B_t,t\geq 0)$, describe the set $\mathcal{S}_T$ of all the laws of $B_S$, for all stopping times $S\leq T$, such that $(B_{t\wedge S}, t\geq 0)$ is uniformly integrable.
\end{question}
\subsection{Extension of our representation theorem in the case
  $M_\infty\ne 0$}
We now try to see how the formula (see (\ref{eq:1bis})) :
\begin{equation}
\mathbb{P}\left(\mathcal{G}_K\leq t|\mathcal{F}_t\right)=\left(1-\frac{M_t}{K}\right)^{+}
\end{equation}
is modified in the case $M_\infty\ne 0$.
\begin{theorem}
The following formula holds :
\begin{equation}\label{eq6e}
\mathbb{E}\left[1_{\{\mathcal{G}_K\leq t\}} \left(K-M_\infty\right)^{+}|\mathcal{F}_t\right]=\left(K-M_t\right)^{+}.
\end{equation}
\end{theorem}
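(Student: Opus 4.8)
The natural strategy is to prove \eqref{eq6e} at the level of a fixed stopping time $T$ (or even at $t$ fixed) by conditioning on $\mathcal{F}_t$ and using a maximal identity for the post-$t$ process, exactly as in the proof of Theorem \ref{th:1}, but now keeping track of the terminal value $M_\infty$. The key observation is that, conditionally on $\mathcal{F}_t$, the process $(M_{t+u}, u\geq 0)$ is again a positive continuous local martingale started at $M_t$, so that $\mathcal{G}_K\leq t$ is the event $\{\sup_{s\geq t} M_s\leq K\}$ (step a of Theorem \ref{th:1} still applies verbatim). Thus I would first reduce \eqref{eq6e} to the assertion that, for a positive continuous local martingale $(N_u,u\geq 0)$ with $N_0=n$ and terminal value $N_\infty$,
\begin{equation}
\mathbb{E}\left[1_{\{\sup_{u\geq 0} N_u\leq K\}}\,(K-N_\infty)^+\right]=(K-n)^+,
\end{equation}
and then apply this conditionally on $\mathcal{F}_t$ with $n=M_t$.

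To prove this one-variable statement I would localize $N$ at $T_K=\inf\{u, N_u=K\}$: on $\{\sup_u N_u\leq K\}$ we have $T_K=\infty$ and the indicator equals $1_{\{T_K=\infty\}}$, while the stopped local martingale $(N_{u\wedge T_K})$ stays in $[0,K]$ and is therefore a bounded, hence uniformly integrable, martingale. On $\{T_K<\infty\}$ this stopped martingale converges to $K$, so $(K-N_{u\wedge T_K})^+ \to (K-N_\infty)^+\,1_{\{T_K=\infty\}}$ as $u\to\infty$ (when $T_K=\infty$, $N_{u\wedge T_K}=N_u\leq K$ so the positive part is harmless). Since $x\mapsto (K-x)^+$ is convex, $((K-N_{u\wedge T_K})^+, u\geq 0)$ is a bounded submartingale, so it converges in $L^1$, and passing to the limit in $\mathbb{E}[(K-N_{u\wedge T_K})^+]\geq (K-n)^+$ together with optional stopping applied to the martingale $N_{u\wedge T_K}$ (whose terminal value has expectation $n$) pins down the constant: $\mathbb{E}[(K-N_\infty)^+1_{\{T_K=\infty\}}] = \mathbb{E}[K 1_{\{T_K=\infty\}}] - \mathbb{E}[N_\infty 1_{\{T_K=\infty\}}]$, and $\mathbb{E}[N_\infty 1_{\{T_K=\infty\}}]+ K\,\mathbb{P}(T_K<\infty)=n$ by uniform integrability of the stopped martingale, giving exactly $(K-n)^+$ when $n\le K$; when $n\geq K$ one has $T_K=0$ a.s. and both sides are $0$.

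Alternatively — and perhaps more cleanly — I would invoke the balayage/Tanaka approach already used in the excerpt: apply Tanaka's formula to $(K-M_{u})^+$ and Doob's maximal identity (Lemma \ref{lemma:uni}) to the conditional post-$t$ law of $\sup_{s\geq t}M_s = M_t/\mathbf U$, being careful that now $M_\infty\ne 0$ so the martingale $(M_{t+u})$ is not in $\mathcal M_0^+$ and the identity $\sup_{s\geq t}M_s \overset{\text{law}}{=} M_t/\mathbf U$ need not hold; this is precisely where Section 4.1 (the law of $S_\infty$ via $\phi$) shows the max is governed by $\phi(x)=\mathbb E[M_\infty\mid S_\infty=x]$ rather than by the uniform law. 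The main obstacle, therefore, is exactly this: when $M_\infty\ne 0$ the clean maximal identity fails, so one must either (i) run the stopped-martingale/uniform-integrability argument above, which bypasses knowing the law of the maximum altogether, or (ii) replace the uniform variable by the correct $\phi$-dependent law and check the algebra still collapses to $(K-M_t)^+$. I would go with route (i), since it uses only optional stopping and $L^1$-convergence of bounded submartingales and makes the role of $(K-M_\infty)^+$ (as opposed to $M_\infty$ itself) transparent.
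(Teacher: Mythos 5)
Your route (i) is correct and is, in substance, the paper's own second proof: the paper sets $d_T=\inf\{s>T,\ M_s=K\}$, notes that $1_{\{\mathcal{G}_K\le T\}}\left(K-M_\infty\right)^{+}=\left(K-M_{d_T}\right)^{+}$, and evaluates $\mathbb{E}\left[\left(K-M_{d_T}\right)^{+}\right]=\mathbb{E}\left[\left(K-M_T\right)^{+}\right]$ via Tanaka's formula (the local time at $K$ does not increase between $T$ and $d_T$); you reach the same identity by observing instead that, on $\{M_t\le K\}$, the post-$t$ process stopped at its first return to $K$ is $[0,K]$-valued, hence a uniformly integrable martingale, and applying optional stopping directly. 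That is a legitimate and slightly more elementary justification of the key step, and your diagnosis that the obstruction to copying the proof of Theorem \ref{th:1} is the failure of Doob's maximal identity when $M_\infty\ne 0$ is exactly right; your alternative (ii) via balayage is the paper's first proof.

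Two of your statements are literally false, though harmlessly so, and should be repaired. First, $\{\mathcal{G}_K\le t\}=\{\sup_{s\ge t}M_s\le K\}$ does \emph{not} hold verbatim when $M_\infty\ne 0$: the path may stay strictly above $K$ on all of $(t,\infty)$, in which case $\mathcal{G}_K\le t$ while the supremum exceeds $K$. Your reduction survives only because on that extra event $M_\infty\ge K$, so the factor $\left(K-M_\infty\right)^{+}$ annihilates it --- worth saying explicitly, since this is precisely the kind of thing that changes once $M_\infty\ne 0$. Second, for $n>K$ one does not have $T_K=0$; the correct reason both sides vanish there is that $\sup_u N_u\ge N_0>K$ forces your indicator to be $0$, and that on the true event $\{\mathcal{G}_K\le t\}\cap\{M_t>K\}$ the path stays above $K$ forever, so again $\left(K-M_\infty\right)^{+}=0$.
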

\begin{proof}
We may prove formula (\ref{eq6e}) in different ways.

\textbf{First proof :}
It hinges on the balayage formula (see Lemma \ref{lem}) applied to $Y_t=\left(K-M_t\right)^{+}$; we note : $\mathcal{G}_K(s)=\sup\{u\leq s, M_u=K\}$.
The balayage formula (\ref{bal}) now becomes :
\begin{equation}\label{balow}
\phi_{\mathcal{G}_K(t)} \left(K-M_t\right)^{+}=
\phi_0 \left(K-M_0\right)^{+} -\int_0^t \phi_{\mathcal{G}_K(s)}\,1_{\{M_s<K\}}dM_s+
\frac{1}{2}\int_0^t \phi_s d\mathcal{L}_s^{K},
\end{equation}
since $d\mathcal{L}_s^{K}$ charges only the set of times for which $M_s=K$, i.e
for which $\mathcal{G}_K(s)=s$.

This formula applied between $t$ and $\infty$ yields :
\begin{equation}\label{balowi}
\mathbb{E}\left[\phi_{\mathcal{G}_K} \left(K-M_\infty\right)^{+}|\mathcal{F}_t\right]=
\phi_{\mathcal{G}_K(t)} \left(K-M_t\right)^{+} +
\frac{1}{2}\mathbb{E}\left[\int_t^\infty \phi_s d\mathcal{L}_s^{K}|\mathcal{F}_t\right]
\end{equation}

Taking $\phi_s=1_{\{s\leq t\}}$ and observing that
$\mathcal{G}_K(t)\leq t$ and that $\int_t^\infty 1_{\{s\leq t\}}
d\mathcal{L}_s^{K}=0$, we obtain :
\begin{equation}
\mathbb{E}\left[1_{\{\mathcal{G}_K\leq t\}} \left(K-M_\infty\right)^{+}|\mathcal{F}_t\right]=\left(K-M_t\right)^{+}.
\end{equation}

\textbf{Second proof :} 
We consider for $T$ a stopping time :
\begin{equation}
\mathbb{E}\left[1_{\{\mathcal{G}_K\leq T\}}\left(K-M_\infty\right)^{+}\right]=\mathbb{E}\left[1_{\{d_T=\infty\}}\left(K-M_\infty\right)^{+}\right],
\end{equation}
where $d_T\left(\equiv d_T^K\right)=\inf\{t>T, M_t=K\}$.

Then,
\begin{eqnarray*}
\mathbb{E}\left[1_{\{d_T=\infty\}}\left(K-M_\infty\right)^{+}\right]&=&\mathbb{E}\left[1_{\{d_T=\infty\}}\left(K-M_{d_T}\right)^{+}\right]\\
&=&\mathbb{E}\left[(K-M_{d_T})^{+}\right]
\end{eqnarray*}

We now note that, between $T$ and $d_T$, $(\mathcal{L}_t^K)$ does not
increase; hence, from Tanaka's formula, the previous quantity equals :
\[\mathbb{E}\left[\left(K-M_T\right)^{+}\right].\]

Therefore, we have obtained :
\begin{equation}\label{eq61}
\mathbb{E}\left[1_{\{\mathcal{G}_K\leq T\}} \left(K-M_\infty\right)^{+}\right]=\mathbb{E}\left[\left(K-M_T\right)^{+}\right].
\end{equation}

This identity may be reinforced as :
\begin{equation}\label{eq61}
\mathbb{E}\left[1_{\{\mathcal{G}_K\leq T\}} \left(K-M_\infty\right)^{+}|\mathcal{F}_T\right]=\left(K-M_T\right)^{+}.
\end{equation}
\end{proof}

\subsection{On the law of $\mathcal{G}_K$}
It is quite natural in this section to introduce the conditional law : $\nu_K(dm)$ of $M_\infty$ given\footnote{$\mathcal{F}_{\left(\mathcal{G}_K\right)^-}=\sigma\{H_{\mathcal{G}_K};H \mathrm{predictable}\}$.} $\mathcal{F}_{(\mathcal{G}_K)^-}$, i.e :
\begin{equation}\label{4.3.0}
\mathbb{E}\left[f(M_\infty)|\mathcal{F}_{(\mathcal{G}_K)^{-}}\right]=\int\nu_K(dm)\,f(m).
\end{equation}
In fact, it is the predictable process $(\mu_u\equiv\mu_u^{(K)}, u\geq 0)$ defined via :
\begin{equation}\label{4.3.1}
\mathbb{E}\left[\left(K-M_\infty\right)^+|\mathcal{F}_{(\mathcal{G}_K)^-}\right]=\mu_{\mathcal{G}_K}=\int\nu_K(dm)\left(K-m\right)^+,
\end{equation}
which will play an important role in the sequel.
\begin{theorem}
In the general case $M_\infty\ne 0$, the Azéma supermartingale :
\[Z_t^K=\mathbb{P}\left(\mathcal{G}_K>t|\mathcal{F}_t\right)\]
is given by :
\begin{equation}\label{4.3.2}
Z_t^K=\mathbb{E}\left[\frac{\left(K-M_\infty\right)^+}{\mu_{\mathcal{G}_K}}|\mathcal{F}_t\right]-\frac{\left(K-M_t\right)^+}{\mu_{\mathcal{G}_K(t)}}.
\end{equation}
\end{theorem}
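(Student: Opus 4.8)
The plan is to combine the conditional balayage identity (\ref{balowi}) with the unconditional one (\ref{balowbis}), both applied to the semimartingale $Y_t=(K-M_t)^+$ — whose last zero before time $t$ is $\mathcal{G}_K(t)$ — choosing as predictable weight the process $1/\mu_s$. Throughout, $1/\mu$ is understood with the convention $0/0=0$: on $\{M_\infty\geq K\}$ one has $(K-M_\infty)^+=0$ and hence $\mu_{\mathcal{G}_K}=\mathbb{E}[(K-M_\infty)^+|\mathcal{F}_{(\mathcal{G}_K)^-}]=0$, so those contributions drop out; the local martingale term coming from Tanaka's formula inside (\ref{balowi}) will be handled by localisation.

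First I would take $\phi_s=1/\mu_s$ in (\ref{balowi}). Since $(\mu_s)$ is predictable, so is $(1/\mu_s)$, and (\ref{balowi}) reads
\[
\mathbb{E}\left[\frac{(K-M_\infty)^+}{\mu_{\mathcal{G}_K}}\Big|\mathcal{F}_t\right]
=\frac{(K-M_t)^+}{\mu_{\mathcal{G}_K(t)}}
+\frac{1}{2}\,\mathbb{E}\left[\int_t^\infty\frac{d\mathcal{L}_s^K}{\mu_s}\Big|\mathcal{F}_t\right].
\]
Thus (\ref{4.3.2}) will follow once I show that $Z_t^K=\mathbb{P}(\mathcal{G}_K>t|\mathcal{F}_t)$ equals $\tfrac12\mathbb{E}[\int_t^\infty\mu_s^{-1}\,d\mathcal{L}_s^K|\mathcal{F}_t]$.

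To get that, I would replace $\phi_s$ by $\phi_s/\mu_s$ in the unconditional balayage identity (\ref{balowbis}), valid for any bounded predictable $\phi$. On the left-hand side $\phi_{\mathcal{G}_K}/\mu_{\mathcal{G}_K}$ is $\mathcal{F}_{(\mathcal{G}_K)^-}$-measurable, being a predictable process evaluated at the random time $\mathcal{G}_K$, so by the defining property (\ref{4.3.1}) of $\mu$,
\[
\mathbb{E}\left[\frac{\phi_{\mathcal{G}_K}}{\mu_{\mathcal{G}_K}}(K-M_\infty)^+\right]
=\mathbb{E}\left[\frac{\phi_{\mathcal{G}_K}}{\mu_{\mathcal{G}_K}}\,\mathbb{E}\big[(K-M_\infty)^+|\mathcal{F}_{(\mathcal{G}_K)^-}\big]\right]
=\mathbb{E}[\phi_{\mathcal{G}_K}],
\]
hence $\mathbb{E}[\phi_{\mathcal{G}_K}]=\mathbb{E}[\phi_0(K-M_0)^+/\mu_0]+\tfrac12\mathbb{E}[\int_0^\infty\phi_s\mu_s^{-1}\,d\mathcal{L}_s^K]$ for every bounded predictable $\phi$. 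Taking $\phi_s=\Gamma_t 1_{\{s>t\}}$ with $\Gamma_t\geq0$ bounded and $\mathcal{F}_t$-measurable (so $\phi$ is predictable, $\phi_0=0$, $\phi_{\mathcal{G}_K}=\Gamma_t 1_{\{\mathcal{G}_K>t\}}$) yields
\[
\mathbb{E}\big[\Gamma_t 1_{\{\mathcal{G}_K>t\}}\big]=\frac12\,\mathbb{E}\left[\Gamma_t\int_t^\infty\frac{d\mathcal{L}_s^K}{\mu_s}\right],
\]
and since this holds for all such $\Gamma_t$ it gives exactly $Z_t^K=\tfrac12\mathbb{E}[\int_t^\infty\mu_s^{-1}\,d\mathcal{L}_s^K|\mathcal{F}_t]$; combined with the first display this proves (\ref{4.3.2}).

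The hard part will be the measure-theoretic and stochastic-calculus hygiene around $1/\mu$: one must check that $\mu_s>0$ for $d\mathcal{L}_s^K$-almost every $s$ off the event $\{M_\infty\geq K\}$ so that the integrands make sense, that the convention $0/0=0$ renders the remaining boundary terms genuinely negligible rather than merely formal, and — the real issue — that the stochastic-integral term $\int_t^\infty\mu_{\mathcal{G}_K(s)}^{-1}1_{\{M_s<K\}}\,dM_s$ arising from Tanaka's formula in (\ref{balowi}) is a true martingale with vanishing $\mathcal{F}_t$-conditional expectation; this will need a localisation argument (e.g. stopping at $\{\mu\leq 1/n\}$ and passing to the limit) or a mild integrability hypothesis on $M$. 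Everything else is formal manipulation of the balayage formula together with the definition of $\mu$.
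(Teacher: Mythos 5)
Your proposal is correct and follows essentially the same route as the paper: both rest on the balayage formula weighted by $1/\mu_s$, on the substitution $\phi\to\phi/\mu$ justified by the defining property (\ref{4.3.1}) of $\mu$ together with the $\mathcal{F}_{(\mathcal{G}_K)^-}$-measurability of a predictable process evaluated at $\mathcal{G}_K$, and on a localisation to $\Gamma_t\cap\{\mathcal{G}_K>t\}$. Your choice $\phi_s=\Gamma_t 1_{\{s>t\}}$ is simply the complement of the paper's stopping time $T=t$ on $\Gamma_t$, $T=\infty$ on $\Gamma_t^c$, so the two computations coincide; your explicit attention to the convention $0/0=0$ and to the martingale term from Tanaka's formula is a welcome precision that the paper leaves implicit.
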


\begin{proof}
We start from (\ref{balowi}), which we write (for $t=0$) as :
\begin{equation}
\mathbb{E}\left[\phi_{\mathcal{G}_K}\,\mu_{\mathcal{G}_K}\right]=\mathbb{E}\left[\phi_0\left(K-M_0\right)^{+}\right] +
\frac{1}{2}\mathbb{E}\left[\int_0^\infty \phi_s d\mathcal{L}_s^{K}\right].
\end{equation}
Replacing $(\phi_u\mu_u)$, by $(\phi_u)$, this identity writes :
\begin{equation}\label{4.3.3}
\mathbb{E}\left[\phi_{\mathcal{G}_K}\right]=\mathbb{E}\left[\frac{\phi_0}{\mu_0}\left(K-M_0\right)^{+}\right] +
\frac{1}{2}\mathbb{E}\left[\int_0^\infty \phi_s \frac{d\mathcal{L}_s^{K}}{\mu_s}\right].
\end{equation}
Then, applying formula (\ref{4.3.3}) to $\phi_u\equiv 1_{[0,T]}(u)$, with $T$ a generic stopping time, we obtain :
\begin{equation}\label{4.3.4}
\begin{split}
\mathbb{P}\left(\mathcal{G}_K\leq T\right)&=\mathbb{E}\left[\frac{1}{\mu_0}\left(K-M_0\right)^{+}\right] +
\frac{1}{2}\mathbb{E}\left[\int_0^T \frac{d\mathcal{L}_s^{K}}{\mu_s}\right]\\
&=\mathbb{E}\left[\frac{\left(K-M_T\right)^{+}}{\mu_{\mathcal{G}_K(T)}}\right],
\end{split}
\end{equation}
from the balayage formula.
We shall now deduce formula (\ref{4.3.2}) from (\ref{4.3.4}) : to a set $\Gamma_t\in\mathcal{F}_t$, we associate the stopping time :
\begin{equation}T=
\begin{cases}
t,\:\mathrm{on}\:\Gamma_t\\
\infty,\:\mathrm{on}\:\Gamma_t^c.
\end{cases}
\end{equation}
Then, formula (\ref{4.3.4}) yields :
\begin{equation}
\mathbb{E}\left[1_{\Gamma_t}1_{\{\mathcal{G}_K\leq t\}}\right]+\mathbb{E}\left[1_{\Gamma_t^c}\right]=\mathbb{E}\left[1_{\Gamma_t}\frac{\left(K-M_t\right)^{+}}{\mu_{\mathcal{G}_K(t)}}\right]+\mathbb{E}\left[1_{\Gamma_t^c}\frac{\left(K-M_\infty\right)^{+}}{\mu_{\mathcal{G}_K}}\right],
\end{equation}
which, by simply writing : $1_{\Gamma_t^c}\equiv 1-1_{\Gamma_t}$, we may write equivalently as :
\begin{equation}
\mathbb{E}\left[1_{\Gamma_t}1_{\{\mathcal{G}_K> t\}}\right]=\mathbb{E}\left[1_{\Gamma_t}\left(\frac{\left(K-M_\infty\right)^{+}}{\mu_{\mathcal{G}_K}}-\frac{\left(K-M_t\right)^{+}}{\mu_{\mathcal{G}_K(t)}}\right)\right].
\end{equation}
This easily implies formula (\ref{4.3.2}).
\end{proof}

It may be worth giving other expressions than (\ref{4.3.2}) for the supermartingale :
\[Z_t^K\equiv\mathbb{P}\left(\mathcal{G}_K>t|\mathcal{F}_t\right).\]
Note that, if we develop $(\frac{\left(K-M_t\right)^{+}}{\mu_{\mathcal{G}_K(t)}},t\geq 0)$, then, again due to the balayage formula, we obtain :
\begin{equation}\label{4.3.5}
Z_t^K=\frac{1}{2}\left(\mathbb{E}\left[\int_0^\infty \frac{d\mathcal{L}_s^{K}}{\mu_s}|\mathcal{F}_t\right]-\int_0^t\frac{d\mathcal{L}_s^{K}}{\mu_s}\right).
\end{equation}
In a similar vein, in order to apply formula (\ref{4.3.2}), one needs to know how to compute the process $(\mu_u,u\geq 0)$. Now writing :
\[\left(K-M_\infty\right)^{+}=\left(K-M_0\right)^{+}-\int_0^\infty\,1_{\{M_s<K\}}\,dM_s+\frac{1}{2}\mathcal{L}_\infty^{K},\]
we obtain :
\begin{equation}\label{4.3.6}
\mathbb{E}\left[\left(K-M_\infty\right)^{+}|\mathcal{F}_{(\mathcal{G}_K)^-}\right]=
\left(K-M_0\right)^{+}-\mathbb{E}\left[\int_0^\infty\,1_{\{M_s<K\}}\,dM_s|\mathcal{F}_{(\mathcal{G}_K)^-}\right]+\frac{1}{2}\mathcal{L}_\infty^{K},
\end{equation}
since $\mathcal{L}_\infty^{K}\equiv\mathcal{L}_{\mathcal{G}_K}^{K}$.
Thus, if we denote by $(\gamma_u,u\geq 0)$ a previsible process such that :
\begin{equation}\label{4.3.7}
\mathbb{E}\left[\int_{\mathcal{G}_K}^\infty\,1_{\{M_s<K\}}\,dM_s|\mathcal{F}_{(\mathcal{G}_K)^-}\right]=\gamma_{\mathcal{G}_K},
\end{equation}
we then deduce from (\ref{4.3.6}) that :
\begin{equation}
\mu_{\mathcal{G}_K}=\left(K-M_0\right)^{+}-\int_0^{\mathcal{G}_K}\,1_{\{M_s<K\}}\,dM_s-\gamma_{\mathcal{G}_K}+\frac{1}{2}\mathcal{L}_\infty^{K},
\end{equation}
that is :
\begin{equation}\label{4.3.8}
\mu_u=\left(K-M_0\right)^{+}-\int_0^u\,1_{\{M_s<K\}}\,dM_s-\gamma_{u}+\frac{1}{2}\mathcal{L}_u^{K}.
\end{equation}
Thus, we have shown that the computation of $(\mu_u)$ is equivalent to that of $(\gamma_u)$, as defined implicitly in (\ref{4.3.7}).

\newpage

\section{Note 5 : Let K vary...}

In this note, we develop formulae to compute the dual predictable
projections of certain raw (i.e : non adapted) increasing processes.

Precisely, if $(R_t, t\geq 0)$ is a raw increasing process, there
exists a unique predictable increasing process $(A_t, t\geq 0)$ such
that : $\forall (\phi_t) \geq 0,$ predictable,
\begin{equation}
\mathbb{E}\left[\int_0^\infty\,\phi_s\,dR_s\right]=\mathbb{E}\left[\int_0^\infty\,\phi_s\,dA_s\right].
\end{equation}
We shall always assume that : $R_0=0$ and $A_0=0$.
In the Strasbourg terminology, $(A_t)$ is called the \textbf{p}redictable \textbf{d}ual
\textbf{p}rojection ($pdp$) of $(R_t)$.
\subsection{Some predictable dual projections under the hypothesis
  $M_\infty=0$}
\begin{theorem}\label{5.1}
\begin{enumerate}
\item[a)] For any $K>0$, $1_{\{0<\mathcal{G}_K\leq t\}}$ admits as
  $pdp$ $\frac{1}{2K}\mathcal{L}_t^K$.
\item[b)] Let $S'_t=S_\infty-S_{(t,\infty)}$, where $
  S_{(t,\infty)}=\sup_{u\geq t}M_u$,
and $S_\infty=S_{(0,\infty)}=\sup_{u\geq 0}M_u$. Then : $(S'_t)$
admits as $pdp$ $\frac{1}{2}\int_0^t\frac{d\langle M\rangle_s}{M_s}$ (with the
convention that $\frac{1}{M_s}=0$, for $s\geq T_0(M)$).
\end{enumerate}
\end{theorem}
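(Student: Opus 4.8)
Write $R_t=1_{\{0<\mathcal{G}_K\le t\}}$. Since the $pdp$ is unique, it suffices to exhibit a predictable increasing process $A$ with $A_0=0$ and
$\mathbb{E}[\int_0^\infty\phi_s\,dR_s]=\mathbb{E}[\int_0^\infty\phi_s\,dA_s]$ for every bounded (hence, by a monotone class argument, every nonnegative) predictable $(\phi_s)$; here all expectations are read in $[0,\infty]$, so I shall not dwell on integrability. As $dR_s=\epsilon_{\mathcal{G}_K}(ds)\,1_{\{\mathcal{G}_K>0\}}$, the left-hand side equals $\mathbb{E}[\phi_{\mathcal{G}_K}1_{\{\mathcal{G}_K>0\}}]$. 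I would then apply the balayage formula (Lemma \ref{lem}) to $Y_t=(K-M_t)^+$, exactly as in the proof of Theorem \ref{3.2}: this gives (\ref{balowbis}),
\[\mathbb{E}\big[\phi_{\mathcal{G}_K}(K-M_\infty)^+\big]=\mathbb{E}\big[\phi_0(K-M_0)^+\big]+\tfrac12\,\mathbb{E}\Big[\int_0^\infty\phi_s\,d\mathcal{L}_s^K\Big],\]
the stochastic integral term vanishing in expectation after the usual localisation. Under the standing hypothesis $M_\infty=0$ the left-hand side is $K\,\mathbb{E}[\phi_{\mathcal{G}_K}]$; dividing by $K$ and using Theorem \ref{th:1} at the stopping time $t=0$, i.e.\ $(1-M_0/K)^+=\mathbb{P}(\mathcal{G}_K=0\mid\mathcal{F}_0)$ so that $\mathbb{E}[\phi_0(1-M_0/K)^+]=\mathbb{E}[\phi_0 1_{\{\mathcal{G}_K=0\}}]$, and observing that $\phi_{\mathcal{G}_K}=\phi_0$ on $\{\mathcal{G}_K=0\}$, one obtains
\[\mathbb{E}\big[\phi_{\mathcal{G}_K}1_{\{\mathcal{G}_K>0\}}\big]=\frac1{2K}\,\mathbb{E}\Big[\int_0^\infty\phi_s\,d\mathcal{L}_s^K\Big].\]
Since $(\tfrac1{2K}\mathcal{L}_t^K)$ is continuous, adapted and increasing, hence predictable, this says precisely that its $pdp$ is $\tfrac1{2K}\mathcal{L}_t^K$.

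\textbf{Part (b).} The plan, in keeping with the title of the note, is to integrate (a) in $K$. The key preliminary is the path identity
\[\{S_{(t,\infty)}\ge K\}=\{\mathcal{G}_K\ge t\}\quad(t>0),\qquad \{S_\infty\ge K\}=\{\mathcal{G}_K>0\},\]
which is an elementary consequence of the continuity of $M$ together with $M_\infty=0$: the supremum of $M$ over $[t,\infty)$ is attained (because $M_u\to0$), and an overshoot of the level $K$ on $[t,\infty)$ produces, by the intermediate value theorem, an actual passage at $K$ at some time $\ge t$; the reverse inclusions are immediate. Writing $S_\infty=\int_0^\infty 1_{\{K\le S_\infty\}}\,dK$ and $S_{(t,\infty)}=\int_0^\infty 1_{\{K\le S_{(t,\infty)}\}}\,dK$ and subtracting, this yields
\[S'_t=\int_0^\infty 1_{\{S_{(t,\infty)}<K\le S_\infty\}}\,dK=\int_0^\infty 1_{\{0<\mathcal{G}_K\le t\}}\,dK=\int_0^\infty R^{(K)}_t\,dK,\]
with $R^{(K)}_t=1_{\{0<\mathcal{G}_K\le t\}}$ the process of part (a). For each fixed $t$ this holds a.s.; that it holds simultaneously in $t$ follows since both sides are increasing in $t$ and, for fixed $t$, differ only on $\{K:\mathcal{G}_K=t\}$, which is Lebesgue-null for a.e.\ $t$ (Fubini, since $\{t:\mathcal{G}_K=t\}$ is a singleton for each $K$).

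It then remains to push the $pdp$ through $\int_0^\infty(\cdot)\,dK$. For nonnegative predictable $(\phi_s)$, Tonelli and part (a) give
\[\mathbb{E}\Big[\int_0^\infty\phi_s\,dS'_s\Big]=\int_0^\infty\mathbb{E}\Big[\int_0^\infty\phi_s\,dR^{(K)}_s\Big]\,dK=\int_0^\infty\frac1{2K}\,\mathbb{E}\Big[\int_0^\infty\phi_s\,d\mathcal{L}_s^K\Big]\,dK,\]
and the occupation times formula (\ref{eq21}) in its version with a predictable integrand (see \cite{revuz}), applied to $f(x)=\tfrac1{2x}$, rewrites the right-hand side as $\tfrac12\,\mathbb{E}[\int_0^\infty\phi_s\,d\langle M\rangle_s/M_s]$; the convention $\tfrac1{M_s}=0$ for $s\ge T_0(M)$ is harmless since $d\langle M\rangle$ charges neither $\{M_s=0\}$ nor $[T_0(M),\infty)$. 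As $A_t:=\tfrac12\int_0^t d\langle M\rangle_s/M_s$ is continuous, adapted and increasing, hence predictable, it is the $pdp$ of $(S'_t)$.

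The routine parts are the two occupation-formula manipulations and the localisation in the balayage identity (already carried out in the proof of Theorem \ref{3.2}). The one point that needs genuine care is the path identity in (b): proving $\{S_{(t,\infty)}\ge K\}=\{\mathcal{G}_K\ge t\}$ (this is exactly where continuity and $M_\infty=0$ are used) and then upgrading the representation $S'_t=\int_0^\infty R^{(K)}_t\,dK$ from "for each fixed $t$" to "simultaneously in $t$"; once that is in hand, (b) is just Fubini applied to (a).
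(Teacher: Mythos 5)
Your proof is correct and follows essentially the same route as the paper's: part (a) is the balayage identity (\ref{balowbis}) specialised to $M_\infty=0$, and part (b) integrates (a) in $K$, using the occupation density formula on one side and the inverse relation $(\mathcal{G}_K\le t)=(S_{(t,\infty)}\le K)$ on the other. Your layer-cake representation $S'_t=\int_0^\infty 1_{\{0<\mathcal{G}_K\le t\}}\,dK$ is just a more explicit (and welcome) rendering of the paper's change of variables, which takes $f(x)=1/x$ in $\mathbb{E}\left[\int dK\,f(K)\,K\,\phi_{\mathcal{G}_K}\right]=\mathbb{E}\left[\int dS'_t\,f(S_{(t,\infty)})\,S_{(t,\infty)}\,\phi_t\right]$.
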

\begin{proof}
\begin{enumerate}
\item[1.] From formula (\ref{balowi}), we get, for $\phi\geq 0$, predictable, with $\phi_0=0$ :
\begin{equation}
\mathbb{E}\left[\phi_{\mathcal{G}_K}K\right]=\mathbb{E}\left[\frac{1}{2}\int_0^\infty\,d\mathcal{L}_s^K\phi_s\right].
\end{equation}
We have obtained a)
\item[2.] We now integrate both sides of this identity with respect to
  $dK\,f(K)$, $f\geq 0$, Borel. Then :
\begin{equation}\label{croix5.1}
\mathbb{E}\left[\int_0^\infty\,dK\,f(K)\,\phi_{\mathcal{G}_K}K\right]=\mathbb{E}\left[\frac{1}{2}\int_0^\infty\,dK\,f(K)\int_0^\infty
  d\mathcal{L}_s^K\,\phi_s\right].
\end{equation}
From the density of occupation formula (\ref{eq:etoileo}), the $RHS$ is equal to :
\begin{equation}\label{croix5.1bis}
\frac{1}{2}\mathbb{E}\left[\int_0^\infty\,d\langle M\rangle_s\,f(M_s)\:\phi_s\right].
\end{equation}
We note that :
\[\left(\mathcal{G}_K\leq t\right)=\left(S_{(t,\infty)}\leq
  K\right),\]
i.e : the inverse of : $K\to\mathcal{G}_K$ is : $t\to S_{(t,\infty)}$.
As a consequence, we may express the $LHS$ of (\ref{croix5.1}) as :
\begin{equation}
\mathbb{E}\left[\int_0^\infty\,dS'_t\:f\left(S_{(t,\infty)}\right)S_{(t,\infty)}\:\phi_t \right],
\end{equation}
which we compare with (\ref{croix5.1bis}). Taking $f(x)=1/x$, we
obtain :
\begin{equation}\label{étoile5.1bis}
\mathbb{E}\left[\int_0^\infty\,dS'_t\:\phi_t\right]=\frac{1}{2}\mathbb{E}\left[\int_0^\infty\frac{d\langle
  M\rangle_t}{M_t}\:\phi_t\right],
\end{equation}
which translates b).
\end{enumerate}
\end{proof}
As a check, we would like to show ( more directly, or in a different manner than above) that for a ``good'' martingale, with $M_0=1$, there is the identity :
\begin{equation}
\mathbb{E}\left[S'_t\right]=\frac{1}{2}\mathbb{E}\left[\int_0^t\,\frac{d\langle M\rangle_s}{M_s}\right].
\end{equation}
First, we note that the $RHS$ is equal, from Itô's formula, \\applied to
: $\phi(x)=x\log(x)-x$, to :
\begin{equation}
1+\mathbb{E}\left[\phi(M_t)\right]=1+\mathbb{E}\left[M_t \log(M_t)\right]-1
\end{equation}
Consequently, we wish to show :
\begin{equation}
\mathbb{E}\left[S'_t\right]=\mathbb{E}\left[M_t \log(M_t)\right].
\end{equation}
The $LHS$ is equal to :
\begin{eqnarray*}
\mathbb{E}\left[S_\infty-S_{(t,\infty)}\right]&=&\mathbb{E}\left[S_t\vee\left(\frac{M_t}{\mathbf{U}}\right)-\frac{M_t}{\mathbf{U}}\right]\\
&=&\int_0^1\,\frac{du}{u}\mathbb{E}\left[\left(uS_t-M_t\right)^+\right]\\
&=&\mathbb{E}\left[\int_0^{S_t}\,\frac{dv}{v}\left(v-M_t\right)^+\right].
\end{eqnarray*}
Now, we note that, for $0<a<b$ :
\begin{equation}
\int_0^b\,\frac{dv}{v}(v-a)^+=\int_a^b\,\frac{dv}{v}(v-a)=(b-a)-a\log{\left(\frac{b}{a}\right)}.
\end{equation}
Consequently, we need to show :
\begin{equation}
\mathbb{E}\left[M_t \log(M_t)\right]=\mathbb{E}\left[(S_t-M_t)-M_t\log(S_t)+M_t\log(M_t)\right],
\end{equation}
that is :
\begin{equation}
0=\mathbb{E}\left[(S_t-M_t\log(S_t))-M_t\right],
\end{equation}
which follows from the fact that $\{S_t-M_t\log(S_t),t\geq 0\}$ is a martingale.\\
Proof : \[M_t\log(S_t)=\int_0^t\,M_s\frac{dS_s}{S_s}+\int_0^t\,(\log(S_s))dM_s,\]
hence :
\[S_t-M_t\log(S_t)=1-\int_0^t\log(S_s)dM_s.\]

\subsection{A comparison with the property : $S_\infty\sim M_0/ \mathbf{U}$.}
We consider (\ref{étoile5.1bis}), with $\phi_t=f(M_t)$; we note that
the $LHS$ of (\ref{étoile5.1bis}) is :
\begin{equation}
\begin{split}
\mathbb{E}\left[\int_0^\infty\,d_t(S_{(t,\infty)})f(S_{(t,\infty)})\right]&=
2\mathbb{E}\left[\int_0^{S_\infty}\,dx\,f(x)\right]\\
&=2\int_0^\infty\,dx\,f(x)\left(\frac{a}{x}\wedge 1\right).
\end{split}
\end{equation}
Going back to (\ref{étoile5.1bis}), we now see that (using again our
hypotheses $(H_1)$ and $(H_2)$ in Section 2.1) :
\begin{equation}
\mathbb{E}\left[\int_0^\infty\,dt\,\sigma_t^2\,\frac{f(M_t)}{M_t}\right]=\int_0^\infty\,dt\int_0^\infty\,dK\,m_t(K)\,\frac{1}{K}\,\theta_t(K)\,f(K).
\end{equation}
Hence, we have obtained :
\begin{equation}\label{point5.2}
\left(\frac{a}{K}\wedge
  1\right)=\int_0^\infty\,dt\,\frac{m_t(K)}{2K}\,\theta_t(K).
\end{equation}
Now this identity agrees with the expression of the law of
$\mathcal{G}_K$, \\given by (\ref{eq:theo}) : the $RHS$ of (\ref{point5.2}) is
equal to :
\begin{equation}
\mathbb{P}\left(\mathcal{G}_K>0\right)=\left(\frac{a}{K}\right)\wedge 1.
\end{equation}

\subsection{Some predictable dual projections in the general case
  $M_\infty\ne 0$}
Starting again from (\ref{balow}), we obtain :
\begin{equation}\label{point5.3}
\mathbb{E}\left[\phi_{\mathcal{G}_K}\left(K-M_\infty\right)^+\right]=\mathbb{E}\left[\frac{1}{2}\int_0^\infty\,d\mathcal{L}_s^K\phi_s\right].
\end{equation}
We now integrate both sides of this identity with respect to
$dK\,f(K)$, $f\geq 0$, Borel. Then :
\begin{equation}\label{croix5.3}
\mathbb{E}\left[\int_0^\infty\,dK\,f(K)\phi_{\mathcal{G}_k}\left(K-M_\infty\right)^+\right]=\mathbb{E}\left[\frac{1}{2}\int_0^\infty\,dK\,f(K)\int_0^\infty
    d\mathcal{L}_s^K\phi_s\right].
\end{equation}
The $RHS$ of (\ref{croix5.3}) is, as seen before, still equal to :
\[\frac{1}{2}\mathbb{E}\left[\int_0^\infty\,d\langle
  M\rangle_sf(M_s)\phi_s\right].\]
We know that, for $K>M_\infty$,
\[\left(\mathcal{G}_K=u\right)\quad\mathrm{iff}\quad\left(K=S_{(u,\infty)}\right),\]
and the $LHS$ of (\ref{croix5.3}) equals :
\begin{equation}
\mathbb{E}\left[\int_0^\infty\,dS'_tf(S_{(t,\infty)})\left(S_{(t,\infty)}-M_\infty\right)^+\phi_t\right].
\end{equation}
We may now state the following :
\begin{theorem}
\begin{enumerate}
\item[a)] For any $K>0$, $1_{\{0<\mathcal{G}_K\leq
    t\}}\left(K-M_\infty\right)^+$ admits $\frac{1}{2}\mathcal{L}_t^K$ as $pdp$.
\item[b)] Let $I_t=(S_\infty-M_\infty)^2-(S_{(t,\infty)}-M_\infty)^2$.
Then, $(I_t,t\geq 0)$ admits $(\langle M\rangle_t)$ as $pdp$.
\end{enumerate}
\end{theorem}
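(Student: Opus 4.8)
The plan is to read a) directly off the balayage identity already in hand, and then derive b) from a) by the ``let $K$ vary'' mechanism of the computation immediately preceding the statement. Throughout, since both raw increasing processes in question are null at $0$ and do not charge $\{0\}$, when testing a $pdp$ relation against a nonnegative predictable $\phi$ we may and do assume $\phi_0=0$.

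For a): the raw increasing process $R^{(K)}_t:=1_{\{0<\mathcal{G}_K\leq t\}}\left(K-M_\infty\right)^+$ has exactly one jump, of height $\left(K-M_\infty\right)^+$, at the time $\mathcal{G}_K$ on $\{\mathcal{G}_K>0\}$, so that $\int_0^\infty\phi_s\,dR^{(K)}_s=\phi_{\mathcal{G}_K}\left(K-M_\infty\right)^+$ for predictable $\phi\geq 0$ with $\phi_0=0$. With this test function, formula (\ref{point5.3}) reads $\mathbb{E}\!\left[\int_0^\infty\phi_s\,dR^{(K)}_s\right]=\mathbb{E}\!\left[\tfrac12\int_0^\infty\phi_s\,d\mathcal{L}^K_s\right]$. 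Since $\left(\tfrac12\mathcal{L}^K_t\right)$ is a continuous, adapted, increasing process null at $0$ — hence predictable — it is by definition the $pdp$ of $R^{(K)}$. The only point requiring care is the passage from (\ref{balow}) at $t=\infty$ to (\ref{point5.3}), i.e.\ that the stochastic integral in the balayage formula is a genuine martingale with mean-zero terminal value; this is routine under the standing integrability hypotheses on $M$.

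For b): first, $I_t$ is a raw increasing process null at $0$. Put $V_t:=S_{(t,\infty)}-M_\infty$; because $M$ is continuous, $t\mapsto S_{(t,\infty)}$ is continuous and non-increasing, so $V$ is continuous, non-increasing, with $V_0=S_\infty-M_\infty\geq 0$ and $V_\infty=0$. Then $S'_t=V_0-V_t$ is continuous non-decreasing with $dS'_t=-dV_t$, and, $V$ being continuous of finite variation,
\begin{equation*}
dI_t=-d\!\left(V_t^2\right)=-2V_t\,dV_t=2\left(S_{(t,\infty)}-M_\infty\right)dS'_t .
\end{equation*}
Hence, for predictable $\phi\geq 0$ with $\phi_0=0$,
\begin{equation}\label{eq:propIphi}
\mathbb{E}\!\left[\int_0^\infty\phi_t\,dI_t\right]=2\,\mathbb{E}\!\left[\int_0^\infty\phi_t\left(S_{(t,\infty)}-M_\infty\right)^+dS'_t\right].
\end{equation}
Now take $f\equiv 1$ in the two displayed expressions immediately preceding the statement: these were obtained by integrating the part-a) identity $\mathbb{E}\!\left[\phi_{\mathcal{G}_K}\left(K-M_\infty\right)^+\right]=\tfrac12\mathbb{E}\!\left[\int_0^\infty\phi_s\,d\mathcal{L}^K_s\right]$ against $dK\,f(K)$, using the density of occupation formula (\ref{eq:etoileo}) to collapse the right-hand side and the relation $\left(\mathcal{G}_K=u\right)\Leftrightarrow\left(K=S_{(u,\infty)}\right)$ (for $K>M_\infty$) to rewrite the left-hand side. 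With $f\equiv 1$ they become $\mathbb{E}\!\left[\int_0^\infty dS'_t\left(S_{(t,\infty)}-M_\infty\right)^+\phi_t\right]=\tfrac12\mathbb{E}\!\left[\int_0^\infty d\langle M\rangle_s\,\phi_s\right]$. Combining with (\ref{eq:propIphi}) gives $\mathbb{E}\!\left[\int_0^\infty\phi_t\,dI_t\right]=\mathbb{E}\!\left[\int_0^\infty\phi_s\,d\langle M\rangle_s\right]$, and since $\left(\langle M\rangle_t\right)$ is continuous, adapted, increasing and null at $0$, it is the $pdp$ of $I$.

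The main obstacle is the change of variables hidden in ``$\left(\mathcal{G}_K=u\right)\Leftrightarrow\left(K=S_{(u,\infty)}\right)$'': one should verify that $t\mapsto S_{(t,\infty)}$ genuinely inverts $K\mapsto\mathcal{G}_K$ up to $dS'$-null sets, and that the Lebesgue--Stieltjes identity $\int h(K)\,dK=\int h\!\left(S_{(t,\infty)}\right)dS'_t$ holds for the continuous non-increasing function $S_{(\cdot,\infty)}$ (flat stretches of $S'$ contribute nothing on either side; the only exceptional level $K=S_\infty$ is a single point). A secondary, purely technical, point is the integrability that legitimises taking $f\equiv 1$, namely $\mathbb{E}\!\left[(S_\infty-M_\infty)^2\right]<\infty$, equivalently $\mathbb{E}\!\left[\langle M\rangle_\infty\right]<\infty$, part of the standing assumptions. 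One can sidestep the time-change entirely: since $1_{\{S_{(t,\infty)}\leq K<S_\infty\}}=1_{\{0<\mathcal{G}_K\leq t\}}$ for $K>M_\infty$ off a null set, the ``layer-cake'' identity $I_t=\int_0^\infty 2\left(K-M_\infty\right)^+1_{\{0<\mathcal{G}_K\leq t\}}\,dK$ holds; applying a) inside the $dK$-integral together with Fubini and $\int_0^\infty\mathcal{L}^K_t\,dK=\langle M\rangle_t$ (the occupation formula with $f\equiv 1$) shows at once that the $pdp$ of $I$ is $\int_0^\infty 2\cdot\tfrac12\mathcal{L}^K_t\,dK=\langle M\rangle_t$.
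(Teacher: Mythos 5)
Your proof is correct and follows essentially the same route as the paper, which states this theorem as a direct consequence of the computation preceding it: part a) is exactly the balayage identity (\ref{point5.3}), and part b) comes from integrating that identity in $dK\,f(K)$ with $f\equiv 1$, using the occupation density formula and the inversion $\left(\mathcal{G}_K\leq t\right)=\left(S_{(t,\infty)}\leq K\right)$. You merely make explicit the details the paper leaves implicit (the differential $dI_t=2\left(S_{(t,\infty)}-M_\infty\right)dS'_t$ and the choice $f\equiv 1$), and your closing layer-cake remark is a harmless equivalent reformulation.
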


\subsection{A global approach}
In this section, we provide a functional extension of :
\begin{equation}\label{5.4.etoile}
\mathbb{E}\left[\left(\mathcal{E}_t-K\right)^+\right]=\mathbb{P}\left(\mathcal{G}_K^{(1/2)}\leq
t\right),
\end{equation}
where : $\mathcal{G}_K^{(1/2)}=\sup\{t,\exp{\left(B_t+\frac{t}{2}\right)}=K\}$.

In fact, we prove a general version of (\ref{5.4.etoile}), relative to a continuous, positive martingale $(M_t,t\geq 0)$, which converges to $0$, a.s., as $t\to\infty$, and plays the role of $(\mathcal{E}_t,t\geq 0)$ in the Brownian case.
We assume that $\mathbb{E}_{\mathbb{P}}\left[M_t\right]\equiv 1$.

To state our result, we need to introduce a new probability $\mathbb{P}^{(M)}$ such that :
\[
\mathbb{P}^{(M)}\vert_{\mathcal{F}_t}=M_t\cdot\mathbb{P}\vert_{\mathcal{F}_t}
\]
\begin{theorem}
The following holds :\\
For every absolutely continuous $\Phi:\mathbb{R}\to\mathbb{R}^+$, with $\Phi(0)=0$, i.e :
\[\Phi(x)=\int_0^xdy\,\phi(y),\:\mathrm{for}\:\phi\in L_{+,loc}^1(\mathbb{R}^+),\]
there is the relation :
\begin{equation}\label{5.4.3}
\mathbb{E}_{\mathbb{P}}\left[\Phi(M_t)\right]=\mathbb{E}_{\mathbb{P}^{(M)}}\left[\phi\left(\inf_{s\geq t}M_s\right)\right].
\end{equation}
As a consequence, for any $K>0$, one has :
\begin{equation}\label{5.4.4}
\mathbb{E}_{\mathbb{P}}\left[\left(M_t-K\right)^+\right]=\mathbb{P}^{(M)}\left(\mathcal{G}_K\leq t\right),
\end{equation}
where $\mathcal{G}_K=\sup\{u,M_u=K\}$.
\end{theorem}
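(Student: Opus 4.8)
\emph{Strategy.} The plan is to carry out the whole argument under the measure $\mathbb{P}^{(M)}$, for which the density $M_t$ turns the martingale $(M_t)$ into a transient process, and to establish a \emph{dual} version of Doob's maximal identity (Lemma \ref{lemma:uni}) governing the law of $\inf_{s\geq t}M_s$. First I would prove that, conditionally on $\mathcal{F}_t$ and under $\mathbb{P}^{(M)}$,
\begin{equation}\label{plan:dual}
\inf_{s\geq t}M_s\ \overset{\underset{\mathrm{law}}{}}{=}\ M_t\,\mathbf{U},
\end{equation}
with $\mathbf{U}$ uniform on $[0,1]$ and independent of $\mathcal{F}_t$. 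Since $d\mathbb{P}^{(M)}\vert_{\mathcal{F}_s}=M_s\cdot d\mathbb{P}\vert_{\mathcal{F}_s}$, a process $Y$ is a $\mathbb{P}^{(M)}$-local martingale iff $(Y_sM_s)$ is a $\mathbb{P}$-local martingale; taking $Y_s=1/M_s$, so that $Y_sM_s\equiv1$, exhibits $(1/M_s,s\geq0)$ as a nonnegative $\mathbb{P}^{(M)}$-local martingale with $1/M_0=1$ (note that $\mathbb{P}^{(M)}$ charges no path hitting $0$, since $\mathbb{E}_{\mathbb{P}}[M_{T_0}1_{\{T_0\leq t\}}]=0$, where $T_0=\inf\{t, M_t=0\}$). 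A short optional-stopping computation, using $\mathbb{P}(T_a<\infty)=1/a$ for $a>1$ (Lemma \ref{lemma:uni}), gives $\mathbb{P}^{(M)}(T_a<\infty)=1$ for all $a>1$, hence $M_s\to\infty$, i.e. $1/M_s\to0$, $\mathbb{P}^{(M)}$-a.s.; thus $(1/M_s)\in\mathcal{M}_0^{+}$ under $\mathbb{P}^{(M)}$. Applying Lemma \ref{lemma:uni} to the shifted local martingale $(1/M_{t+u},u\geq0)$, conditionally on $\mathcal{F}_t$, yields $\sup_{s\geq t}(1/M_s)\overset{\underset{\mathrm{law}}{}}{=}(1/M_t)/\mathbf{U}$ given $\mathcal{F}_t$, which is (\ref{plan:dual}) after inversion.

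Then I would compute the right-hand side of (\ref{5.4.3}) directly. Conditioning on $\mathcal{F}_t$ and using (\ref{plan:dual}),
\begin{align*}
\mathbb{E}_{\mathbb{P}^{(M)}}\!\left[\phi\!\left(\inf_{s\geq t}M_s\right)\right]
&=\mathbb{E}_{\mathbb{P}^{(M)}}\!\left[\int_0^1\phi(M_t\,u)\,du\right]
=\mathbb{E}_{\mathbb{P}^{(M)}}\!\left[\frac{1}{M_t}\int_0^{M_t}\phi(v)\,dv\right]\\
&=\mathbb{E}_{\mathbb{P}^{(M)}}\!\left[\frac{\Phi(M_t)}{M_t}\right],
\end{align*}
the middle equality being the change of variables $v=M_t\,u$ and the last using $\Phi(0)=0$; since $\Phi(M_t)/M_t$ is $\mathcal{F}_t$-measurable and $d\mathbb{P}^{(M)}\vert_{\mathcal{F}_t}=M_t\cdot d\mathbb{P}\vert_{\mathcal{F}_t}$, this last expression equals $\mathbb{E}_{\mathbb{P}}[\Phi(M_t)]$, which proves (\ref{5.4.3}). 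Every step is legitimate for an arbitrary nonnegative Borel $\phi$ by Tonelli, so no approximation argument is needed; absolute continuity of $\Phi$ enters only through $\int_0^{M_t}\phi=\Phi(M_t)$.

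Finally, (\ref{5.4.4}) would follow by specialising (\ref{5.4.3}) to $\Phi(x)=(x-K)^{+}$ (so $\Phi(0)=0$ since $K>0$) and $\phi(x)=1_{\{x>K\}}$, which gives $\mathbb{E}_{\mathbb{P}}[(M_t-K)^{+}]=\mathbb{P}^{(M)}(\inf_{s\geq t}M_s>K)$; it then remains to identify the event on the right with $\{\mathcal{G}_K\leq t\}$ up to $\mathbb{P}^{(M)}$-null sets. Because $M$ is continuous and $M_s\to\infty$ $\mathbb{P}^{(M)}$-a.s., one checks the inclusions
\[
\{\inf_{s\geq t}M_s>K\}\ \subseteq\ \{\mathcal{G}_K\leq t\}\ \subseteq\ \{\inf_{s\geq t}M_s\geq K\},
\]
and the difference of these events is contained in $\{\inf_{s\geq t}M_s=K\}=\{M_t\mathbf{U}=K\}$, which is $\mathbb{P}^{(M)}$-negligible since $\mathbf{U}$ is uniform and independent of $\mathcal{F}_t$. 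Hence $\mathbb{P}^{(M)}(\mathcal{G}_K\leq t)=\mathbb{P}^{(M)}(\inf_{s\geq t}M_s>K)$, which is (\ref{5.4.4}).

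The main obstacle is foundational rather than computational: as $\mathbb{E}_{\mathbb{P}}[M_t]\equiv1$ while $M_\infty=0$ $\mathbb{P}$-a.s., $(M_t)$ is not uniformly integrable, and the consistent family $(M_t\cdot\mathbb{P}\vert_{\mathcal{F}_t})_{t\geq0}$ does \emph{not} extend to a measure absolutely continuous with respect to $\mathbb{P}$ on $\mathcal{F}_\infty$. One must therefore first make sense of $\mathbb{P}^{(M)}$ as a (singular) measure on $\mathcal{F}_\infty$ — for instance by placing oneself on the canonical space $\mathcal{C}(\mathbb{R}^{+},\mathbb{R}^{+})$, where a F\"ollmer-type extension of the consistent family is available — so that $\phi(\inf_{s\geq t}M_s)$ and the right-hand sides of (\ref{5.4.3}) and (\ref{5.4.4}) are genuinely meaningful. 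Once this is granted, everything above is a routine consequence of Lemma \ref{lemma:uni} together with the change-of-measure identity $d\mathbb{P}^{(M)}\vert_{\mathcal{F}_t}=M_t\cdot d\mathbb{P}\vert_{\mathcal{F}_t}$.
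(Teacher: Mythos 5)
Your proof is correct and is essentially the paper's own argument run in reverse: both rest on Doob's maximal identity (Lemma \ref{lemma:uni}) applied to $(1/M_{t+u},u\geq 0)$ under $\mathbb{P}^{(M)}$, the change of measure on $\mathcal{F}_t$, and the representation $\int_0^{M_t}\phi(y)\,dy=M_t\,\mathbb{E}\left[\phi(\mathbf{U}M_t)\right]$ with $\mathbf{U}$ uniform and independent. You also supply details the paper leaves implicit — the optional-stopping verification that $1/M\in\mathcal{M}_0^{+}$ under $\mathbb{P}^{(M)}$, the negligibility of the boundary event $\{\inf_{s\geq t}M_s=K\}$, and the observation that $\mathbb{P}^{(M)}$ must be constructed as a (F\"ollmer-type, singular) measure on $\mathcal{F}_\infty$ — all of which are correct and worthwhile.
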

\begin{proof}
We write :
\begin{eqnarray*}
\mathbb{E}_{\mathbb{P}}\left[\Phi(M_t)\right]&=\mathbb{E}_{\mathbb{P}}\left[\int_0^{M_t}dy\,\phi(y)\right]\\
&=\mathbb{E}_{\mathbb{P}}\left[M_t\,\phi(\mathbf{U}M_t)\right],
\end{eqnarray*}
where $\mathbf{U}$ is uniform and independent from $M$. Thus, with the help of $\mathbb{P}^{(M)}$, we obtain, with $N_u=\frac{1}{M_{t+u}}$, $u\geq 0$ :
\begin{equation}\label{5.4.4bis}
\begin{split}
\mathbb{E}_{\mathbb{P}}\left[\Phi(M_t)\right]&=\mathbb{E}_{\mathbb{P}^{(M)}}\left[\phi\left(\frac{1}{(N_t/\mathbf{U})}\right)\right]\\
&=\mathbb{E}_{\mathbb{P}^{(M)}}\left[\phi\left(\frac{1}{\sup_{s\geq t}N_s}\right)\right]\\
&=\mathbb{E}_{\mathbb{P}^{(M)}}\left[\phi\left(\inf_{s\geq t}M_s\right)\right],
\end{split}
\end{equation}
which proves formula (\ref{5.4.3}).\\ Formula (\ref{5.4.4}) follows by taking $\Phi(x)=\left(x-K\right)^+$.
Then, 
\[\phi(\inf_{s\geq t}M_s)=1_{\{\inf_{s\geq t}M_s>K\}}=1_{\{\mathcal{G}_K\leq t\}}.\]
We also need to justify the equality :
\[\mathbb{E}_{\mathbb{P}^{(M)}}\left[\phi\left(\frac{1}{(N_t/\mathbf{U})}\right)\right]=\mathbb{E}_{\mathbb{P}^{(M)}}\left[\phi\left(\frac{1}{\sup_{s\geq t}N_s}\right)\right]\]
in (\ref{5.4.4bis}) by asserting that, under $\mathbb{P}^{(M)}$, $(N_u)$ belongs to $\mathcal{M}_0^{(+)}$, and then use Doob's maximal identity (see Lemma \ref{lemma:uni}).
\end{proof}

\newpage


\newpage
\begin{center}
\huge{\textbf{\underline{Further reading}}}
\end{center}
\begin{enumerate}
\item[a)] A primer on the general theory of stochastic processes is :
\begin{itemize}
\bibitem{1} A. \textsc{Nikeghbali}. (2006). An essay on the general theory of stochastic processes. \textit{Probab. Surv.} \textbf{3}, 345-412.
\end{itemize}
\item[b)] Several computations of Azéma supermartingales and related topics are also found in :
\begin{itemize}
\bibitem{2} A. \textsc{Nikeghbali}. (2006). Enlargements of filtrations and path decompositions at non stopping times. \textit{Probab. Theory Related Fields} \textbf{136} ,  no. 4, 524-540.
\bibitem{3} A. \textsc{Nikeghbali}. (2006). A class of remarkable submartingales. \textit{Stochastic Process. Appl.} \textbf{116} ,  no. 6, 917-938.
\bibitem{4} A. \textsc{Nikeghbali}. (2006). Multiplicative decompositions and frequency of vanishing of nonnegative submartingales. \textit{J. Theoret. Probab.} \textbf{19} ,  no. 4, 931-949.
\bibitem{5} A. \textsc{Nikeghbali}. (2008). A generalization of Doob's maximal identity. \textit{Submitted}.
\end{itemize}
\end{enumerate}

\begin{center}
\huge{\textbf{\underline{Acknowledgments}}}
\end{center}

What started it all is M. Qian's question.

M. Yor is also very grateful to D. Madan and B. Roynette for several
attempts to solve various questions, rewriting, and summarizing...

Lecturing in Osaka and Ritsumeikan (October 2007), Melbourne and Sydney (December 2007), then finally at the Bachelier Séminaire (February 2008) has been a
great help.

We gave further lectures in Oxford and at Imperial College, London, both in May 2008.
\end{document}